\theoremstyle{plain}
\newtheorem{thm}{Theorem}[section]
\newtheorem{proposition}[thm]{Proposition}
\newtheorem{lemma}[thm]{Lemma}
\newtheorem{definition}[thm]{Definition}
\begin{document}

\title{MSc thesis- A quantum pseudo-integrable Hamiltonian impact system}
\author{Omer Yaniv}
\affil{Department of Computer Science and Applied Mathematics,Weizmann Institute of Science, Rehovot 7610001, Israel }

\maketitle

\section*{Abstract}
 A quantization of a toy model of a pseudointegrable Hamiltonian impact system is introduced, including EBK quantization conditions, a verification of Weyl's law,  the study of their wavefunctions and a study of their energy levels properties. It is demonstrated that the energy levels statistics are similar to those of pseudointegrable billiards. Yet, here, the density of wavefunctions which concentrate on projections of classical level sets to the configuration space does not disappear at large energies, suggesting that there is no equidistribution in the configuration space in the large energy limit; this is shown analytically for some limit symmetric cases and is demonstrated numerically for some nonsymmetric cases.
\section{Introduction}

Hamiltonian systems are a mathematical branch of the theory of dynamical systems, which investigates the time-evolution of a system's state,  attempting to provide information on the evolution of all possible initial states of the system. When such a system evolves continuously, it is called a continuous dynamical system, and in the smooth case it is described by a set of differential equations. \newline

\subsection{Classical Hamiltonian dynamical systems}

Hamiltonian systems arise in various fields of Physics: for example, they describe the dynamics of a body moving under conservative forces. 

A Hamiltonian system on a symplectic manifold \cite{arnol2013mathematical} \cite{meiss2007differential} $(M, \omega) $, where $\omega$ is a closed non-degenerate 2 form, e.g $\omega=\sum_{i=1}^n dq_i\wedge dp_i$ for set of canonical coordinates $(p,q)$, is defined by a smooth Hamiltonian function $H \in C^{\infty}(M,\mathbb{R})$, and $X_{H}$ which is the associated vector field to H:\newline
\begin{equation} dH(Y)=\omega(X_{H},Y)\end{equation}\newline
The dynamics of any observable $f$ obeys the equation of motion:
\begin{equation}\frac {d}{dt}f=\{f,H\}
\label{poissonbracket}
\end{equation} where  $\{f,H\}$ are called Poisson brackets and are defined as: $ \omega(X_{f},X_{H})$. In canonical coordinates $(q,p)$, we can write $X_{H}=\sum_{i=1}^n(\frac{\partial H}{\partial p_{i}}\frac{\partial}{\partial q_{i}}-\frac{\partial H}{\partial q_{i}}\frac{\partial}{\partial p_{i}})$. \newline
The Hamiltonian flow $F^{t}: M \rightarrow M $ is generated by $X_{H}$ as \begin{equation}\frac {d}{dt}F^{t}_{t=0}=X_{H}\end{equation}
$F^{t}: M \rightarrow M $ is symplectomorphism: 

\begin{equation} F^{t*}\omega =\omega 
\end{equation}

Thus, 
\begin{equation}\frac{ F^{t*}\omega^{n}}{n!} =\frac{\omega^{n}}{n!}
\label{preserving}
\end{equation}
\newline
where $\omega^{n}$ is the phase space volume differential form \newline 
e.g, given a region $R \in M$

\begin{equation}
\int_{R} \frac{\omega^{n}}{n!} = \int_{R} dq_1\wedge...\wedge dq_n \wedge dp_1\wedge...\wedge dp_n =Vol(R)
\end{equation}
Thus, eq. \ref{preserving}  implies that flow $F^{t}$ is volume preserving. \newline

We concentrate here on Hamiltonian systems of the mechanical form, where the Hamiltonian function is the sum of a diagonal kinetic energy term and a potential energy defined on $\mathbb{R}^{n}$, and the natural canonical variables are the momenta, $p_i$ and the configuration coordinates $q_i, i=1,...n$:
\[H(q,p)=\frac{1}{2}\sum_{i=1}^n p_i^2+V(q), \qquad  (q,p)\in \mathbb{R}^{n} \times \mathbb{R}^{n}\]
These are called  $n$ degrees of freedom mechanical Hamiltonian systems. The goal of the analysis is to determine the evolution of all initial conditions in the phase space $(q,p)$.
 In these coordinates the equations of motion are: 
\begin{align}\frac{d}{dt}p_{i}&=-\frac {\partial H}{\partial q_{i}}  \\ \frac{d}{dt} q_{i}&=\frac {\partial H}{\partial p_{i}}\nonumber\end{align}

The simplest type of $n$ degrees of freedom Hamiltonian systems are integrable systems. Such systems exhibit at least $n$ independent smooth constants of motions which are pairwise in involution  (their pair-wise Poisson bracket vanishes). A full description of the dynamics in such systems is provided by the Arnold- Liouville theorem \cite{arnol2013mathematical}, which tells us that for almost all values of $(q,p)$ there exists a transformation  $(q_{i},p_{i}) \rightarrow (\theta_{i}, I_{i})$ such that $H(q,p)=H(I)$ and the motion occurs along connected components of level sets of $I$. Each such component is an $n$ dimensional torus: 

\begin{align} {\dot {I}}_{i}=-{\frac {\partial H}{\partial \theta_{i}}}=0\\{\dot {\theta}_{i}}=-{\frac {\partial H}{\partial I_{i}} }=\omega_{i}\nonumber\end{align} 

Moreover, in open neighborhoods at which such transformation exists, it is smooth. 

On the other hand, Hamiltonian systems can also be chaotic, which is an extreme case of a disordered motion. The definition of chaotic systems is community dependent\cite{devaney2018introduction} 
\cite{sinai1970dynamical}\cite{sinai1989dynamical} \cite{ruelle1971nature}. \newline

In ergodic theory \cite{sarig2009lecture}, chaos is defined by the global properties of a measure preserving map on a probability space $(X,\mathcal{B},\mu,T)$ , which in our case is the symplectomorphism associated with the Hamiltonian flow. \newline 
In order to define chaos in such systems we first need to define \textbf{ergodicity}: T is said to be \textbf{ergodic} if the only T-invariant sets are of measure 0 or of measure 1. \newline
An ergodic map T is said to have the strong mixing property if for all $A,B \in \mathcal{B}$, $\lim_{k \rightarrow \infty} \mu(A \cap T^{-k}(B))=\mu(A)\mu(B)$. \newline

An ergodic map which is strongly mixing is usually said to be chaotic  within the notion of ergodic theory (yet, this definition depends on the measure $\mu$, and, notably, choosing a "natural invariant  measure" is a delicate issue in ergodic theory. For Hamiltonian systems the natural measure is the Liouville measure: Lebesgue measure restricted to the energy surface \cite{sarig2009lecture}). 

A remarkable property of ergodic systems, which is highly important in physics, relates to the properties of an observable f, defined on the system $( X,\mathcal{B},\mu,T)$ : \newline

if $T$ is ergodic and  $f\circ T \equiv f $, then f is constant almost everywhere. \newline

From this we conclude the general property that  for any measurable function $g:X \rightarrow \mathbb{R}$ the time average equals the space average:
\begin{align}
 \int_{X} g d\mu= \frac{1}{T}\lim_{T \rightarrow \infty}\int_{0}^{\infty}gdt
\end{align}

The mathematical study of ergodicity in Hamiltonian systems focuses mainly on the dynamics of billiards: systems of free particles which reflect elastically from the billiard table $D$. Formally we take $V(q)\equiv0$ in the interior of the billiard table and on its boundary  the motion is determined by elastic reflections:  $(q,p_{\perp}) \rightarrow (q,-p_{\perp}), (q,p_{\parallel}) \rightarrow (q,p_{\parallel})$ \cite{chernov2006chaotic}.\newline

In this work, we will look at Hamiltonian impact systems (HIS), which is an extension of the class of billiards: the dynamics is determined by a smooth Hamiltonian in the table interior and by elastic reflections: $(q,p_{\perp}) \rightarrow (q,-p_{\perp}), (q,p_{\parallel}) \rightarrow (q,p_{\parallel})$ on the table's boundry.

\subsection{Pseudo integrable billiards}
Pseudo integrable billiards arise in the study of plane polygonal rational billiards (polygonal tables with all corners being rational fractions of $\pi$).

The path of a particle in a billiard is independent of the energy. For polygonal billiards, it can be considered as lying in a 3 dimensional space whose coordinates are the position $q=(q_1,q_2)$ and $\theta$, the path direction. This path lies in a sequence
of replicas of the enclosure situated at
different values of $\theta$. In the case of plane polygonal rational billiards this sequence is finite(the number of possible directions in each trajectory). Since reflection at an edge is equivalent for continuing the path into a reflection of the enclosure and the sequence of directions is finite, the trajectory lies on a 2-dimensional compact surface.
Such surfaces are  two-dimensional surfaces of genus $g \geq 1$ \cite{richens1981pseudointegrable,gutkin1996geometry}.
Pseudointegrable dynamics, correspond to systems with intermediate complexity: they are not ergodic nor quasi-periodic, there can be level sets that include several ergodic components as well as bands of periodic orbits.

\subsection{Quantum dynamical systems and correspondence principle}
In the early 20th century, it was clear that physical properties of some systems cannot be predicted by classical physics. A new theory, quantum mechanics, was found to be useful in their prediction. According to quantum mechanics the physical properties of an n-degrees of freedom system are not defined by smooth function on $\mathbb{R}^{2n}$, but rather is defined by a Hermitian operator in $Op(\mathcal{S}(\mathbb{R}^{n}))$ where: $ S(\mathbb{R}^{n})=\{ \phi \in C^{\infty} | \sup_{\mathbb{R}^{n}} |x^{\alpha}\partial^{\beta} \phi|<\infty $\hfill for   all multiindices \hfill$ \alpha, \beta \}$.

Given an operator $\hat F=Q(F(q_{i},p_{i}))$ with $F\in C^{\infty}(\mathbb{R}^{2n})$, its time evolution, the operator $\hat F$  at time $t>0$ is: $\hat{F}_{t}=e^{i\frac{t\hat{H}}{\hbar}}\hat{F}e^{-i\frac{t\hat{H}}{\hbar}}$, where $\hat{H}=Q(H(q_{i},p_{i}))$ is the operator associated with the Hamiltonian. Then,  the time evolution equation, in analogy to the classical equation, becomes: \begin{align} \dot{\hat{{F}_{t}}}=-i\frac{[\hat{H},\hat{F}_{t}]}{\hbar} 
\label{hizenberg}
\end{align}
Another equivalent form to express the time evolution of the system is via the time-dependent Schrodinger equation, that considers evolving in time the wave function $\Psi(q,t)$ rather than the operator:\begin{align}
i \hbar \frac{\partial}{\partial t}\Psi(q,t) = \hat H \Psi(q,t) \end{align}
The time-independent Schrodinger equation is an eigenvalue problem, whose solutions are called the spectrum of the quantum system: \begin{align}
\hat H \Psi(q)=E\Psi(q)\end{align}

It is believed, that the classical dynamics can be derived as a limit of the quantum dynamics in the limit of high energies, $\lim \hbar\rightarrow 0$. This assumption is called the correspondence principle. \newline

In order to relate quantum and classical mechanics we are interested in the class of transformations $Q:C^{\infty}(\mathbb{R}^{2n}) \rightarrow Op(\mathcal{S}(\mathbb{R}^{n}))$, which are called quantization of classical phase space \cite{ali2005quantization}. \newline
We require Q to satisfy to following conditions: \newline
\begin{itemize}
\item  Q is linear \newline
\item  Q(1) = I, where 1 is the constant function 1, and I the
identity operator \newline
\item  for any function $\Phi : R \rightarrow R $ for which $Q(\Phi\circ f)$ and $\Phi(Q(f))$ are well-defined,
$Q(\Phi\circ f) = \Phi(Q(f) )$ \newline
\item  The operators $Q(p_i)$ and $Q(q_i)$ corresponding to the coordinate functions $F(p,q)=p_i$ , $G(p,q)= q_i$
for (i = 1,...,n) are given by
$Q(G)\Psi = q_i\Psi$ and $Q(F)\Psi = -i\hbar\frac{\partial{\Psi}}{\partial{q_i}}$ 
\end{itemize}

To obtain a correspondence between classical(as described in eq. \ref{poissonbracket}) and quantum dynamics(as described in eq. \ref{hizenberg}) we would want to find Q such that: \newline
$Q(\{f,g\})= \frac{[Q(f),Q(g)]}{i\hbar}$ \newline
However, a quantization Q that satisfies this condition for all f, g $\in C^{\infty}(\mathbb{R}^{2n})$  altogether with conditions 1 to 4 , mentioned above, doesn't exist\cite{ali2005quantization}. \newline

The common quantization scheme which is used is Weyl's quatization\cite{evans2007lectures}: \newline
For a function $a(q,p) \in C^{\infty}(\mathbb{R}^{2n})$ in classical phase space, $a^{w}$ the operator accepted by Weyl quantization is defined by:
\begin{align}
a^{w}\Psi(x)= \frac{1}{\hbar^{n}}\int_{\mathbb{R^{n}}}\int_{\mathbb{R^{n}}} e^{\frac{i<x-y,p>}{\hbar}}a(\frac{x+y}{2},p) \Psi(y) dydp \end{align}

Weyl's quantization satisfies the following condition:

\begin{align}
Q(\{f,g\})= \frac{[Q(f),Q(g)]}{i\hbar}+O(\hbar^{2})
\end{align}

Thus, in the limit $\hbar \rightarrow 0$ (so called semiclassical limit) we get a correspondence between classical and quantum dynamics, described by \textbf{Egorov's Theorem} \cite{evans2007lectures}: \begin{align}
\lVert \hat{F}_t-Q(F\circ\Phi_{t}(q,p)) \rVert=O(\hbar)\text{ where } Q(F)=\hat{F}_{0} 
\end{align}
Another important theorem regarding classic-quantum correspondence is \textbf{Weyl's theorem}\cite{evans2007lectures}:
For each $a<b$, we can define the operator $\mathbbm{1}$: 
    
\begin{align}
\mathbbm{1}(\Psi_{E_{j}})= 
\begin{cases} \Psi_{E_{j}} \phantom{-} \text{  if} : a<{E_{j}}<b \\
\phantom{-}0 \phantom{-} \phantom{-} \text{else}
 \end{cases} 
\end{align}
Weyl's theorem states that: 
\begin{align}
 N[E_{j}: a \leq E_{j} \leq b]=\sum_{j}<\mathbbm{1}\Psi_{E_{j}},\Psi_{E_{j}}>=
\frac{1}{\hbar^n}Vol(a\leq H\leq b)+o(1)
\text{ as } \hbar\rightarrow 0 
\end{align}
where $E_{j}$ and $\Psi_{E_{j}}$  are the eigenvalues and eigenvectors (correspondingly) of $\hat{H}=Q(H(q_{i},p_{i}))$ and $H(q_{i},p_{i})$ is the classical Hamiltonian.\newline

Semmiclassical approximations can also be used within the framework of Schrodinger equation. An important approximation for one degree of freedom systems is  \textbf{WKB aproximation} \cite{keller1958corrected} which is an approximation for time independent Schrodinger equation: \newline
In the limit $\hbar \rightarrow 0$, $\Psi(q)$, an eigenfunction of $\hat{H}=Q(H(q_{i},p_{i}))$ can be approximated by:
\begin{align}
\displaystyle \Psi (x)\approx C_{0}{\frac {e^{\theta +i\int \hbar ^{-1}{\sqrt {2m\left(E-V(x)\right)}}\,dx}}{\hbar ^{-1/2}{\sqrt[{4}]{2m\left(E-V(x)\right)}}}}
\end{align}

From this approximation we can derive  \textbf{EBK quantization conditions}, that are used to find the spectrum of integrable systems:

\begin{align}
I(E)=\hbar\left(n+\frac{\mu}{4}+\frac{b}{2}\right)
\end{align}

where I is the classical action of a periodic orbit, $\mu$ is the number of classical turning points along a period and $b$ is the number of impacts from the table's boundries along a period \cite{brack2018semiclassical}(chapter 2).

\subsection{Quantum chaos and ergodicity}

Quantum chaos studies how classical dynamics (integrable and non-integrable) are reflected in the properties (e.g. eigenvalues and eigenfunctions) of the correspondent quantum system.
It is accepted that in integrable systems, the distribution of the level spacing is provided by the Poisson distribution $e^{-s}$ \cite{berry1977level}, while that in chaotic systems (hereafter, meaning mixing system on energy surfaces, studied by simulating chaotic billiards) they distribute as eigenvalues of random matrix ensembles (GOE)  \cite{bohigas1984characterization}. When a system has a mixed phase space, which is the common behavior of smooth Hamiltonian systems, it is found that a Berry-Robink distribution, a convex hall of the Poisson and the GOE distributions, describes the level spacing \cite{berry1984semiclassical,prosen1994numerical}. This distribution reflects the existence of eigenfunctions supported on the  islands of stability  and of eigenfunctions supported on the chaotic components of the classical phase-space \cite{backer2005flooding}. 

In quantum pseudo integrbale billiards the level spacing appears to have intermediate statistics: the nearest-neighbor distribution displays repulsion at small distances and an exponential decay at large distances  \cite{bogomolny1999models}. 

The quantum ergodicity of a system describes the wavefunction properties of classically ergodic systems, namely that almost all of them are equidistributed. 
An important result in quantum ergodicity is the equidistribution of eigenfunctions: \newline
Let $\{\Psi_n\}_{n=1}^\infty$ be an orthonormal basis of eigenfunctions of the Laplacian operator on a compact domain D. Provided the billiard flow is ergodic in phase space,  there is a density-one sequence $n_{j}\in \mathbb{N}$ such that for any  $A \subset D$: \newline
$\lim_{j\to\infty} 
    \oint_A |\Psi_{nj}|^{2}(s) ds
 =\frac{area(A)}{area(D)}$ \cite{zworski2012semiclassical}. \newline
 
This result was generalized for polygonial billiards with rational angles as it is known that the motion of a particles in such billiards is dense in configuration space for almost all directions\cite{marklof2012almost}.

\section{Main results}
We investigate eigenvalues statistics and eigenfunctions properties of a class of systems that belongs to the recently discovered family  of classical pseudointegrable Hamiltonian systems with impacts. Such systems combine motion under a smooth potential field with continuous symmetries and reflections from a corresponding family of billiards that keeps the continuous symmetries only locally and not globally. For example,  trajectories of a separable Hamiltonian  \begin{equation}
 H=H_1 + H_2, \  H_i(q_i,p_i)=\dfrac{p_i^2}{2m}+V_{i}(q_i), \ i=1,2 \label{eq:modelsham}\end{equation}
 in a right-angled polygonal billiard with at least one concave corner are pseudointegrable  \cite{becker2020impact,frkaczek2021non}. 
 
 Here, we study the quantum step oscillators: we take $V_i$ to be confining potentials which are even smooth functions with a single minimum at the origin and are monotone elsewhere, and take the right angled polygon to be   \(\mathbb{R}^{2}\setminus S_{q^{wall}}\), where \begin{equation}\label{eq:stepdef}
S_{q^{wall}}=\{(q_{1},q_{2})| \:q_1 < q_1^{wall}\le 0 \text{ and }  q_2 < q_2^{wall}\le 0\}. 
\end{equation}   The trajectories are confined by the potential and reflect from the step $S_{q^{wall}}$  \cite{becker2020impact}, see Figure \ref{fig:fig}a. Since the step boundaries are parallel to the axes, the vertical and horizontal momenta are conserved at reflections, so the motion occurs along the level sets $H_i(q_i,p_i)=E_{i}, \ i=1,2$.   Passing to the action angel coordinates of the smooth separable system, provided $E_i>V_i(q_i^{wall}),\ i=1,2$, the motion on each level set is conjugated to the directed motion on the flat cross-shaped surface,  see  Figure \ref{fig:fig}b. The direction of motion on this surface is given by $\omega_2(E_2)/\omega_1(E_1)$ and the cross shaped concave corners are at  $\{\pm\theta_1^{wall}(E_1), \pm\theta_2^{wall}(E_2)\}$, where $\omega_i(E_i)$ denotes the frequency of the smooth periodic motion under $H_i$ and $\theta_i^{wall}(E_i)$ denotes the angle of an impacting trajectory (with the convention that $\theta_i=0$ at the maximum of $q_i$). So, the direction of motion and the surface dimensions  depend continuously on $(E_1,E_2)$.
For the case of harmonic oscillators, i.e. when $V_i(q_i)=\frac{1}{2}\omega_i q_i^2$, the frequencies are fixed at $\omega_i $ and the values of $\theta_i^{wall}(E_i)$ can be explicitly computed.
Equivalently, by folding the surface, the motion on such level sets is conjugated to the directed billiard motion on an L-shaped billiard,  see Figure (\ref{fig:fig})c.  Thus,  this system is pseudointegrable \cite{becker2020impact}.
  In general, the dynamics on such surfaces has non-trivial ergodic properties. It was proven that if $q_i^{wall}<0$ for $i=1,2$, the motion is typically uniquely ergodic, and, for the case of resonant harmonic oscillators, there are level sets with co-existing periodic ribbons and dense orbits on some parts of the cross-shaped surface \cite{frkaczek2021non}.  
\begin{figure}[H]
 \centering
\begin{subfigure}{0.3\textwidth}
 
  \includegraphics[width=1.2\linewidth]{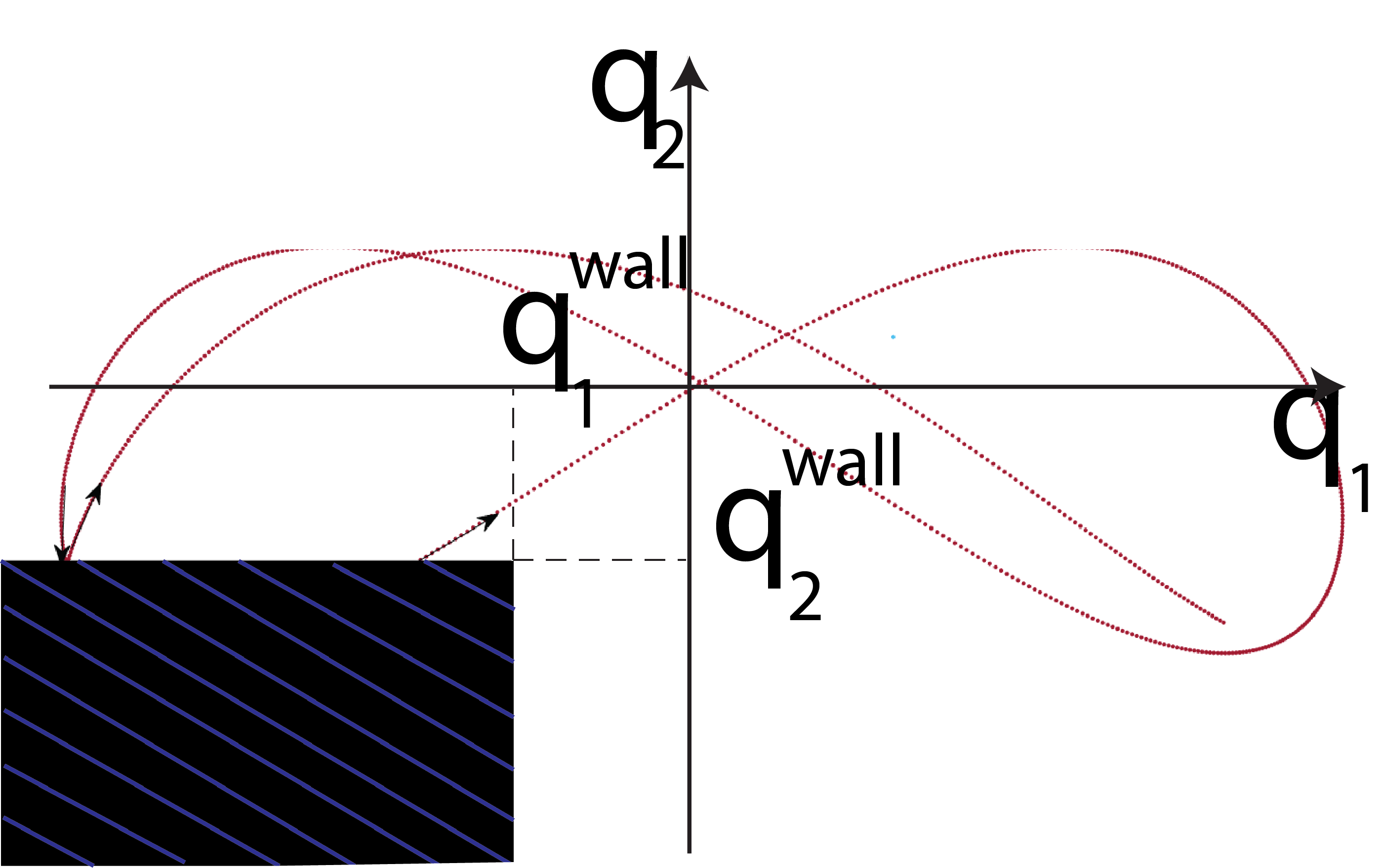}  
  \caption{}
  \label{fig:sub-first}
\end{subfigure}

\begin{subfigure}{0.2\textwidth}
 
  \includegraphics[width=1.5\linewidth]{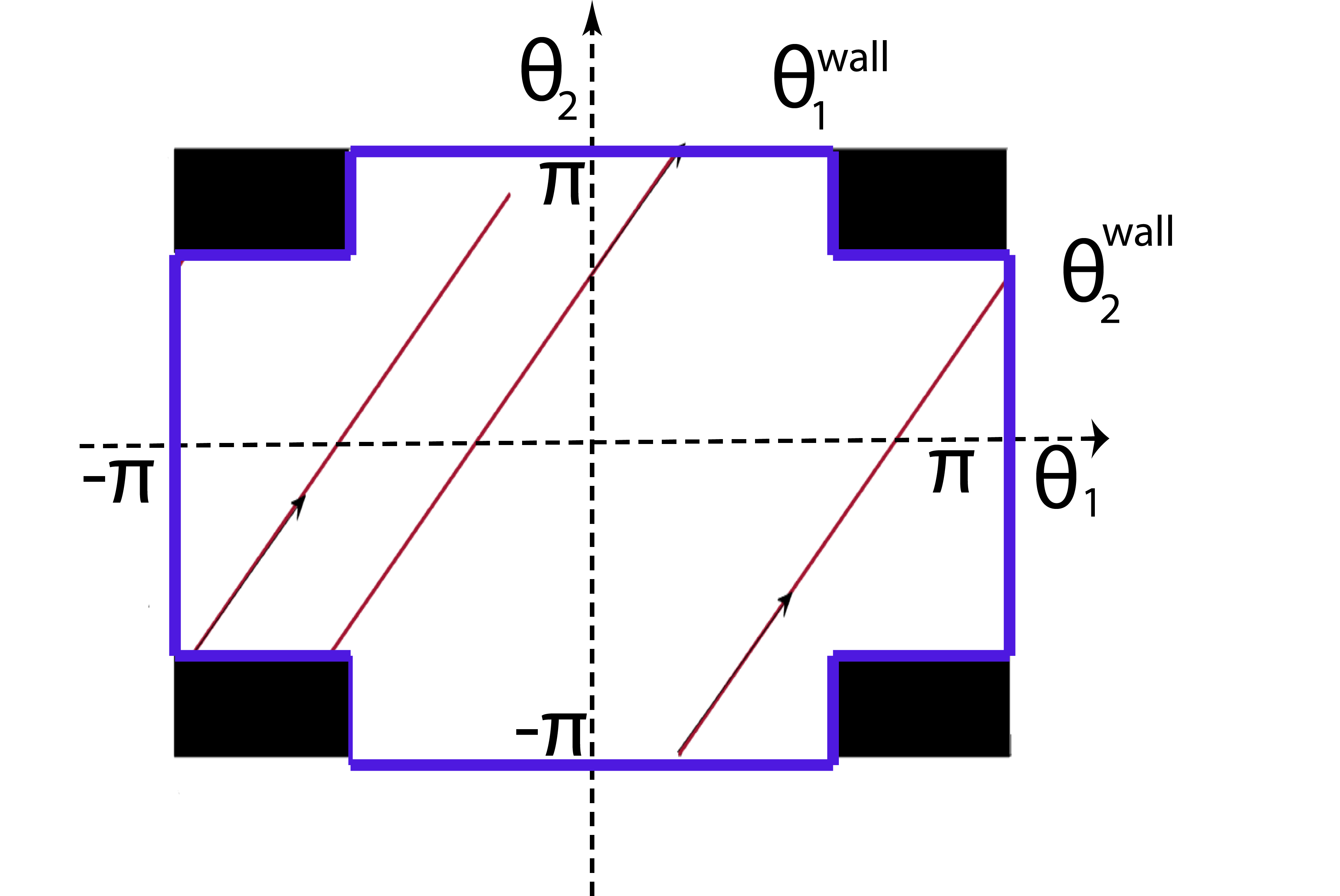}  
  \caption{ }
  \label{fig:sub-second}
\end{subfigure}
\qquad
\begin{subfigure}{0.2\textwidth}
 
  \includegraphics[width= 1.2\linewidth]{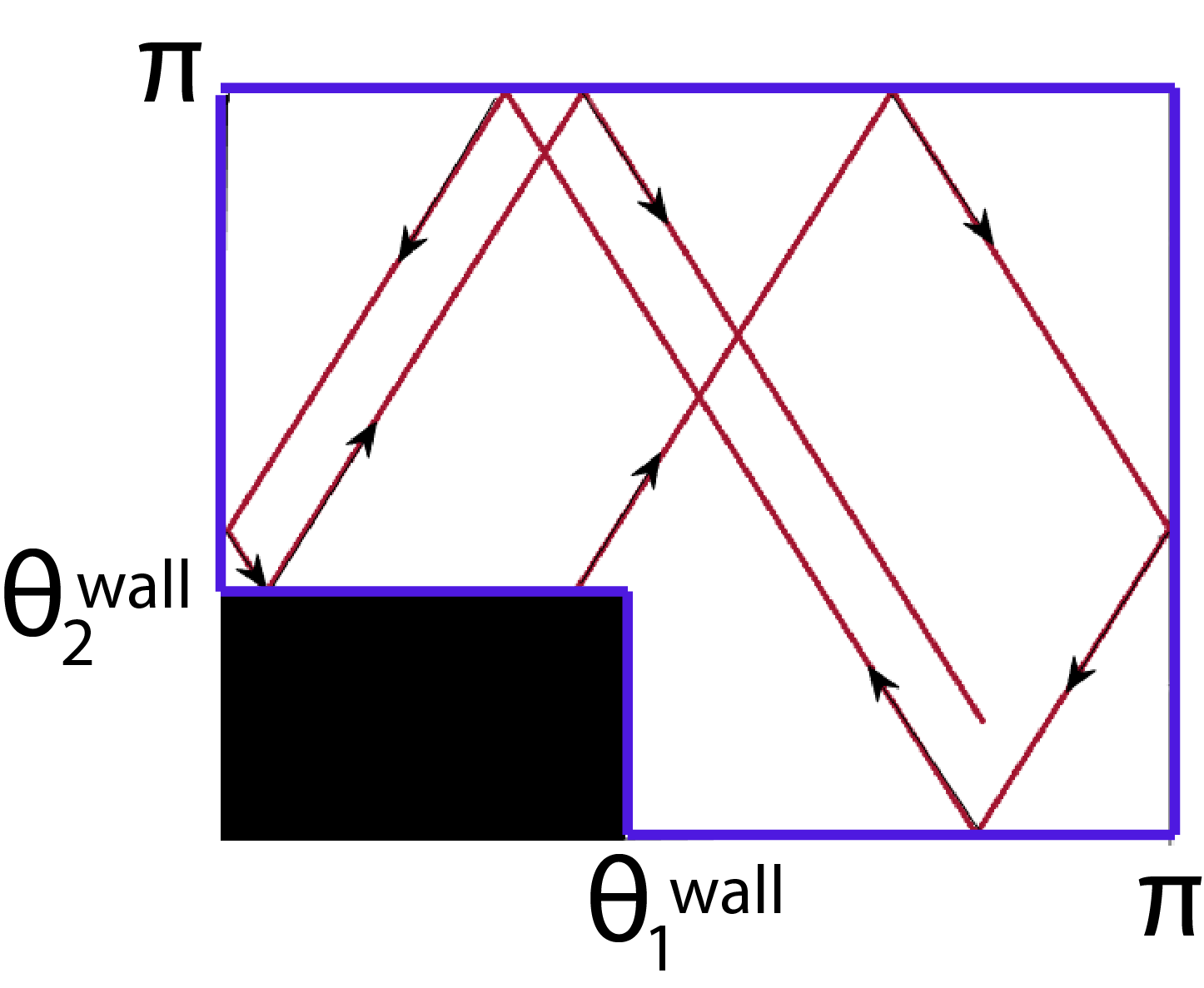}   
  \caption{}
  \label{fig:sub-second}
\end{subfigure}
\caption{A trajectory of a separable Hamiltonian  reflecting from a step.  (a)  Projection to the configuration space. (b)  The corresponding directed motion on the cross-shaped surface in the angles space. (c)  Folding the surface to the lower left quadrant leads to the corresponding billiard motion on an L-shaped billiard. Here, Eq. (\ref{eq:modelsham}) are integrated with elastic reflections from the step of Eq. (\ref{eq:stepdef}), with $V_i(q_i)=\frac{1}{2}\omega_i q_i^2, \omega_1=1,\omega_2=\sqrt{2}, q_1^{wall}=q_2^{wall}=-1, E_1=5.625,E_2=5.50 $.}
\label{fig:fig}
\end{figure}
In this work we described the quantum-classical correspondence in pseudo-integrable HIS. More specifically:

\begin{itemize}
    
    \item We describe classical and quantum dynamics of the pseudo-integrable step system for the case of a step in the origin and resonant($\omega_1=1, \omega_2=\frac{m}{n}$) harmonic potentials (all regular trajectories are periodic). In particular, we prove the existence of two families of periodic orbits co-exisisting in the same level set when n is even, and the exsitence of one family of periodic orbits when n is odd.  for m=1 we find the number of impacts and turning points along a period in each family and predict the allowed energies using EBK quantization conditions.
    
\item We describe the quantum mechanical properties(eigenfunctions structure and eigenvalues) for the more general pseudo-integrable systems (i.e system which are not completely periodic):

\begin{enumerate}
\item We find level-spacing distribution of the pseudo-integrable step system to be similar to the level-spacing distribution of the pseudo-integrable billiards(i.e semi-Poisson distribution)

\item For a step in the origin and anharmonic potential(for that case we give analytical results) we find that at least a fraction of $\frac{1}{3}$ of the eigenfunctions concentrate on the classical level sets.
\item For a step not in the origin and harmonic potential we present numerical evidence that it is also likely to have positive fraction of the eigenfunctions which concentrate on the classical level sets.
\end{enumerate}

\end{itemize}
For achieving that we used both analytic and numerical methods. The numerical method used is finite differences method via MATLAB for solving the time independent Schrodinger equation. We used the grid $[x,y]=[-15,15]X[-15,15]$ where $\delta_x=0.05$ and $\delta_y=0.05$, we set the potential to be $10^{28}$ at the step. 
\\
\\
These main results have been submitted for publication and are currently under review \cite{yaniv2022quantum}

\section{results}
\subsection{Quantization of Periodic orbits}
As we are interested in quantization, and, in particular, in studying the role of superscars(concentration of the wave functions along the classical families of periodic orbits) in the system, we look first for families of periodic orbits. Given a family of periodic orbits on a given level set $(E_1,E_2=E-E_1)$, with $\mu=(\mu_1,\mu_2)$ turning points ($\mu_1$ in the horizontal direction and $\mu_2$ in the vertical one), and $b=(b_1,b_2)$ impacts ($b_1$ with the right side of the step and $b_2$ with the upper part of the step), and an action $I(E;\mu,b)$, we can quantize it by using the EBK quantization conditions \cite{brack2018semiclassical,keller1958corrected}:
\begin{align}
\label{eqn:ebk}
I(E;\mu,b)=\hbar(n+\frac{\mu_1+\mu_2}{4}+\frac{b_1+b_2}{2}).
\end{align}
Moreover, denoting by $I_i(E_i)$ the action of the smooth $H_i$ system and by  $I_i^{wall}(E_i)=\int_{q_i\geq q_i^{wall}} p_i(q_i;E_i)dq_i=I_i \frac{2\theta_i^{wall}}{2\pi}$ the action of the impact $H_i$ system,  we obtain:
\begin{align}
\label{eqn:action}
I(E_1,E_2;\mu,b)=\sum_{i=1}^2 b_i I_i^{wall} + (\frac{\mu_i-b_i}{2})I_i,
\end{align}
namely, given  $\mu$, $b,I_i(E_i)$ and $ \theta_i^{wall}(E_i)$, we expect that the EBK quantization rule will predict the energy levels.  Yet, in general, it is non-trivial to find  $\mu$ and $b$ (see e.g. section 7 in \cite{frkaczek2021non}) nor to invert  $I(E_1,E_2;\mu,b)$ on the given family of periodic orbits.

 We consider first some simple limit cases in which periodic motion can be easily identified. When the step is at the origin ($S_0=S_{q_1^{wall}=q_2^{wall}=0}$), the corner angles are fixed at $\theta_i^{wall}(E_i)|_{q_1^{wall}=q_2^{wall}=0}=\frac{\pi}{2}$, so the dimensions of the cross-shaped surface are independent of the energy. When the potentials are harmonic, the direction of motion, $\frac{\omega_{2}}{\omega_{1}}$ is independent of the energy as well and $I_i=\frac{E_i}{\omega_i}$. Thus, by choosing resonant harmonic potentials and a step at the origin, we conclude that for all partial energies the motion is periodic and of the same type and that $I_i^{wall}=\frac{I_i}{2}$. In particular, setting: $\omega_{1}=1, \omega_{2}=\frac{n}{m}$ (with $gcd(n,m)=1$), it can be shown that there are exactly  2 options for dynamics:
 
\begin{thm}
    
For harmonic oscillator with a step in the origin with $\omega_{1}=1$ and $\omega_{2}=\frac{m}{n}$ with odd m and n,  there is one family of periodic orbits and the quantization condition is: \newline $E_{(k)}=\frac{2k}{3n}+\frac{1+\frac{m}{n}}{2}+\frac{n+m}{3n}=\frac{2k}{3n}+\frac{5(m+n)}{6n}$ \
\end{thm}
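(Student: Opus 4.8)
The plan is to identify the unique periodic family on each level set explicitly, read off its data $(\mu_1,\mu_2,b_1,b_2)$, and then obtain the energies from the EBK rule (\ref{eqn:ebk})--(\ref{eqn:action}); the last step is pure arithmetic. First I would use the reduction already set up above: fix a level set with $E_1,E_2>0$; since the step is at the origin we have $\theta_i^{wall}=\tfrac\pi2$, so the motion is conjugate to the directed unit-speed flow in the fixed direction $(\omega_1,\omega_2)=(1,\tfrac mn)$ on the cross-shaped surface $X$ --- the flat torus $(\mathbb R/2\pi\mathbb Z)^2$ with the open square $(\tfrac\pi2,\tfrac{3\pi}{2})^2$ removed and reflecting walls on its four sides --- where a turning point of the $q_i$-motion is a crossing of $\{\theta_i\in\{0,\pi\}\}$ and an impact with the $i$-th face of the step is a reflection of $\theta_i$ at $\theta_i=\pm\tfrac\pi2$ while $\theta_j$ is in the removed range. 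I would also record the elementary timing fact: for a harmonic $V_i$ the four points $0,\tfrac\pi2,\pi,\tfrac{3\pi}{2}$ are equally spaced in time along the $\theta_i$-circle and turning points alternate with possible reflection points, so two consecutive turning points of $q_i$ are \emph{always} separated by time exactly $\pi/\omega_i$, whether or not a reflection occurs at the intervening midpoint. Hence any periodic orbit has period $T=\mu_1\pi/\omega_1=\mu_2\pi/\omega_2$, so $\mu_1:\mu_2=\omega_1:\omega_2=n:m$, and by $\gcd(m,n)=1$ we get $\mu_1=cn$, $\mu_2=cm$, $T=cn\pi$ for some integer $c\ge1$.

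The crux, and where oddness of $m,n$ is used, is the analysis of the periodic family. Unfolding the billiard on $X$ across its axis-parallel walls produces a compact flat surface $\widehat X$ assembled from four isometric copies of $X$ (one for each direction in the orbit $\{(\pm1,\pm\tfrac mn)\}$ of the velocity under the reflection group), on which the orbits become straight lines of slope $m/n$. Since the slope is rational, $\widehat X$ decomposes into finitely many cylinders of parallel closed geodesics, each projecting to one family of periodic billiard orbits with a common itinerary. I would argue, by a parity/colouring argument on the four copies (equivalently on the twelve squares of side $\tfrac\pi2$ tiling each copy) --- labelling each square by the residues mod $2$ of the accumulated vertical and horizontal reflection counts and following a line of slope $m/n$ --- that when $m$ and $n$ are both odd the regular orbit of every point visits all four copies, so $\widehat X$ is a single cylinder and there is exactly one family (in contrast with the complementary parity case, in which the tiling splits and one obtains two families). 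From this single cylinder I would then count, over one period of the unfolded line, the intersections with the preimages of $\{\theta_i\in\{0,\pi\}\}$ and with the four boundary segments of the removed square, obtaining $c=3$, i.e.\ $\mu_1=3n$, $\mu_2=3m$, together with $b_1+b_2=m+n$ --- equivalently, along one period exactly one third of the $\mu_1+\mu_2=3(m+n)$ turning-to-turning intervals carries an impact.

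The remainder is arithmetic. With $\theta_i^{wall}=\tfrac\pi2$ we have $I_i^{wall}=\tfrac12 I_i$, so (\ref{eqn:action}) collapses to $I(E_1,E_2;\mu,b)=\tfrac{\mu_1}{2}I_1+\tfrac{\mu_2}{2}I_2$; for harmonic potentials $I_i=E_i/\omega_i$, hence with $\mu_1=3n$, $\mu_2=3m$, $\omega_2=\tfrac mn$,
\[
I=\tfrac{3n}{2}E_1+\tfrac{3m}{2}\cdot\tfrac nm\,E_2=\tfrac{3n}{2}(E_1+E_2)=\tfrac{3n}{2}E ,
\]
which depends only on $E=E_1+E_2$ --- as it must, for a single quantum number to fix the energy. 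Substituting into (\ref{eqn:ebk}) with $\hbar=1$ (and writing $k$ for the integer there, to avoid a clash with the denominator $n$), $\mu_1+\mu_2=3(m+n)$ and $b_1+b_2=m+n$ give
\[
\tfrac{3n}{2}E=k+\tfrac{3(m+n)}{4}+\tfrac{m+n}{2}=k+\tfrac{5(m+n)}{4},
\]
so $E_{(k)}=\tfrac{2k}{3n}+\tfrac{5(m+n)}{6n}=\tfrac{2k}{3n}+\tfrac{1+m/n}{2}+\tfrac{m+n}{3n}$, the three summands being exactly the $k$-, Maslov-, and impact-contributions. The only substantial point is the middle step: turning the rational-direction billiard on the cross surface into the sharp statements ``one cylinder'' and ``$(\mu_1,\mu_2,b_1+b_2)=(3n,3m,m+n)$'' requires careful corner-avoiding bookkeeping of the unfolded trajectory together with the parity dichotomy; everything before it is the standard reduction and everything after it is algebra.
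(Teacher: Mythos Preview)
Your outline is sound and the arithmetic is correct, but the route is genuinely different from the paper's. The paper never argues directly on a cylinder decomposition; instead it builds an ``even continued fraction'' expansion (the c-series $[c_0;c_1,\dots,c_k]$ of $m/n$) and proves two reduction lemmas: adding an even integer to $\omega_2$, and inverting $\omega_2$, both leave the Poincar\'e map $P_1$ on $\Sigma_1^{wall}$ unchanged up to relabelling. Iterating these reductions along the c-series brings the problem down to the base case $\omega_2=c_k\in\mathbb Z$, where an explicit segment-by-segment analysis of $P_1$ (dividing the section into $2n$ arcs) yields one family with $(\mu_1,\mu_2,b_1,b_2)=(3n,3,n,1)$ for odd $c_k$. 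A parity lemma on the c-series shows $c_k$ is odd iff $m,n$ are both odd, and the induction carries $(\mu_1,\mu_2)=(3n_i,3|m_i|)$ back up; a separate short counting argument gives $(b_1,b_2)=(n,m)$. Your timing observation $\mu_1:\mu_2=n:m$ is cleaner than anything in the paper and would shortcut the $\mu$-induction entirely; conversely, the paper's Poincar\'e-map/continued-fraction machinery is heavier but delivers the even-$c_k$ (two-family) case and the recursive impact counts in the same stroke.

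One point of caution on your setup: the cross-shaped surface the paper uses is \emph{already} a translation surface for the directed flow --- the ``reflection'' $\theta_i\to -\theta_i$ at an impact is a jump that preserves $\dot\theta_i=\omega_i$, and amounts to identifying the two opposite sides $\theta_i=\pi/2$ and $\theta_i=3\pi/2$ of the removed square by the translation $(\pi,0)$ (resp.\ $(0,\pi)$). So there are no direction-changing walls on $X$, and your four-copy unfolding to $\widehat X$ is at best a redundant $4$-fold cover rather than the natural stage for the cylinder count. Working directly on the genus-two cross in $\mathcal H(2)$, your parity/colouring dichotomy and the extraction of $c=3$, $b_1+b_2=m+n$ from the single cylinder are exactly the steps that need to be written out; you flag this yourself, and it is the only place real work remains.
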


\begin{thm} 
For harmonic oscillator with a step in the origin with $\omega_{1}=1$ and $\omega_{2}=\frac{m}{n}$ where n is even and m is odd, there are two families of periodic orbits where the first family satisfies:
$\mu^{I}_{1}=2n$, $\mu^{I}_{2}=2m$ and the number of impacts $b^{I}_1$ and $b^{I}_2$ can be calculated inductively (see appendix).
The second family satisfies: $\mu^{II}_{1}=n$, $\mu^{II}_{2}=m$ and the number of impacts $b^{II}_1$ and $b^{II}_2$ can be calculated inductively(see appendix).
For the case of m=1 we can directly calculate that the number of impacts is $b^{I}_1=n$, $b^{I}_2=0$ and $b^{II}_1=0$, $b^{II}_2=1$
\end{thm}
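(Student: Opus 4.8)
The plan is to reduce the assertion to a question about closed geodesics on one fixed flat surface and then to extract $\mu$ and $b$ by an unfolding argument. By the discussion preceding the theorem, once the step sits at the origin and the potentials are resonant harmonic oscillators, the corner angles are pinned at $\theta_i^{wall}\equiv\pi/2$, so the cross-shaped surface $\mathcal C=\mathbb T^{2}\setminus(\pi/2,3\pi/2)^{2}$, the direction of the folded billiard flow $\omega_2/\omega_1=m/n$, and the identity $I_i^{wall}=\tfrac12 I_i$ are all independent of the partial energies. Hence it suffices to classify, once and for all, the periodic orbits of the linear flow in the fixed direction $(1,m/n)$ on $\mathcal C$ — equivalently, the billiard flow in that direction in the L-shaped table of Figure~\ref{fig:fig}(c) — and then to read off $\mu_i$ as the number of caustic crossings $\theta_i\in\{0,\pi\}$ (i.e.\ $q_i=\pm A_i$) along one period, and $b_i$ as the number of reflections off the $i$-th face of the step, which happen exactly at the instants $q_i=0$ with $q_j<0$. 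It is useful to record at the outset that, over one period, $\theta_i$ sweeps a total angle $2\pi w_i$ with $w_1:w_2=\omega_1:\omega_2=n:m$, and that a unit of this winding contributes exactly two caustic crossings whether it is traversed smoothly or via an impact oscillation, so that $\mu_i=2w_i$; the theorem thus amounts to showing that there are exactly two admissible windings, $w=(n,m)$ and $w=(n/2,m/2)$, and that the impact counts can be tracked.

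The core step is the unfolding. Reflecting the L-shaped table by the Klein four-group generated by the coordinate reflections $q_i\mapsto -q_i$ produces a translation surface $X$: its five convex ($\pi/2$) corners unfold to regular points, while the single concave ($3\pi/2$) corner becomes one cone point of angle $6\pi$, so by Gauss--Bonnet $X$ has genus two and lies in the stratum $\mathcal H(2)$. Since the slope $m/n$ is rational, the straight-line flow on $X$ in this direction is completely periodic and its phase space is a finite union of parallel cylinders bounded by saddle connections through the cone point, each cylinder corresponding to a ribbon of periodic billiard orbits of a common period. The crux is to show that when $n$ is even and $m$ odd this decomposition consists of exactly two cylinders — one may either invoke the fact that a periodic direction on a surface in $\mathcal H(2)$ carries at most two cylinders and exclude the one-cylinder alternative by a parity obstruction, or argue directly by following the separatrix issued from the concave corner of $\mathcal C$ and checking that it first returns to the corner after winding once versus twice around $\mathbb T^{2}_{(\theta_1,\theta_2)}$. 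This is precisely where the parity hypothesis is genuinely used: when $m$ and $n$ are both odd the same computation yields a single cylinder, giving the previous theorem. The two cylinders then carry billiard orbits whose periods differ by a factor of two — the short one having $w^{II}=(n/2,m/2)$ and the long one $w^{I}=(n,m)$ — so that $\mu^{II}=(n,m)$ and $\mu^{I}=(2n,2m)$, as claimed.

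Finally I would count the impacts. Along the core of a cylinder, each passage of $\theta_i$ through the arc $(\pi/2,3\pi/2)$ (the region $q_i<0$) is either transparent — in which case $\theta_i$ reaches the far caustic $\theta_i=\pi$, i.e.\ $q_i=-A_i$ — or blocked, producing an impact; which of the two occurs is decided by the sign of $q_j$ at the entry instant, hence by the position of $\theta_j$ subject to the cross constraint. Reading the cutting sequence of the geodesic across the grid of reflection lines, the number of blocked passages in each coordinate obeys a recursion in $(m,n)$ of Euclidean/continued-fraction type, carried out in the appendix. For $m=1$ the direction $(1,1/n)$ is nearly horizontal, so in the long cylinder the slow angle $\theta_2$ traverses its forbidden arc once with no impact ($b_2^{I}=0$) while the fast angle $\theta_1$ is trapped in $[-\pi/2,\pi/2]$ and bounces exactly $n$ times during that traversal ($b_1^{I}=n$); in the short cylinder $\theta_1$ makes $n/2$ unobstructed turns ($b_1^{II}=0$) while $\theta_2$ itself strikes the wall exactly once per period ($b_2^{II}=1$). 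The main obstacle is the cylinder count in the second step — proving rigorously that exactly two cylinders appear and identifying the two orbit types — since that is where the parity hypothesis does the real work; the turning-point bookkeeping is then routine, and only the setup of the general-$m$ impact recursion needs some care.
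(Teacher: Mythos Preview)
Your route is sound but genuinely different from the paper's. You pass to the translation surface: the symmetric L-table unfolds (equivalently, the cross with its edge identifications already \emph{is}) a square-tiled surface in $\mathcal H(2)$, whence any periodic direction carries at most two cylinders, and the parity class of $(m,n)$ decides one versus two; the cylinder circumferences then give $\mu_i=\omega_iT/\pi$ directly, and the ratio $T^{I}/T^{II}=2$ yields $\mu^{I}=(2n,2m)$, $\mu^{II}=(n,m)$. The paper never invokes strata or cylinder decompositions. Instead it works entirely with the first-return map $P_1:\Sigma_1^{wall}\to\Sigma_1^{wall}$: for the base case $\omega_2=1/n$ it partitions the section into $2n$ arcs $s_i$ and writes down the transition rule on these arcs by hand (Proposition~5.2), reading off $\mu$ and $b$ from the itinerary; for general $m/n$ it introduces an \emph{even} continued-fraction expansion (the ``c-series'') and shows that $P_1$ is unchanged under $\omega_2\mapsto\omega_2+2L$ and swaps with $P_2$ under $\omega_2\mapsto1/\omega_2$ (Lemmas~5.10--5.13), thereby reducing to the base case $\omega_2=1/c_k$ with the parity of $c_k$ controlled by that of $(m,n)$ (Lemma~5.7).

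What each buys: your argument is conceptual and explains \emph{why} two is the maximal number of families, but it imports the $\mathcal H(2)$ cylinder bound and, as you acknowledge, still owes the separatrix/parity computation that rules out a single cylinder when $n$ is even. The paper's argument is longer but completely elementary and self-contained, and its c-series reduction is precisely the ``Euclidean/continued-fraction type'' recursion you gesture at for the impact counts --- indeed Proposition~5.19 sets up exactly the inductive formula $b_{1,i}=b_{2,i+1}$, $b_{2,i}=b_{1,i+1}+c_i(\mu_{1,i}-b_{1,i})/2$, so the step you flag as needing care is the main deliverable of the paper's method. One small remark on your exposition: the sentence ``a unit of this winding contributes exactly two caustic crossings whether it is traversed smoothly or via an impact oscillation'' is right in its conclusion $\mu_i=\omega_iT/\pi$ but misleading as stated, since an impact in direction $i$ \emph{skips} the caustic $\theta_i=\pi$; the identity holds because the impact also halves the traversal time, so the caustic-crossing rate per unit time is $\omega_i/\pi$ regardless --- you may want to phrase it that way.
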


Note that on a given resonant level set $m \omega_1(E_1) = n \omega_2(E_2) $, by rescaling time and energy: \(\hat t = \omega_1 t,\hat H =\frac{ H}{\omega_1}\), we can set $\hat \omega_{1}=1$ and $\hat \omega_{2}=\frac{m}{n}$. Clearly $b_{i}$ and $\mu_{i}$ are unchanged by the rescaling of time, so we study their dependence on \(\frac{m}{n}\) for the rescaled time case. Equivalently, the direction of the flow on the crossed surface depends only on the ratio of the frequencies. \newline
Finally by replacing the roles of $\theta_1$ and $\theta_2$ Theorem 3.2 can be applied to the case of odd n and even m. \newline

In figure \ref{figtable}, we validate the above results(Theorems 3.1 and 3.2). Notice that for even $m$ there are infinite number of energy levels at which $E^{I}_{k_1}=E^{II}_{k_2}$ (marked with green lines), and in particular, for $m=2, n=1$ and $m=2,n=5$, $E^{I}_{k_1}=E^{II}_{2k_1}$ (as shown in Fig.  \ref{figtable}).  Since the system here is symmetric, all these energy levels are degenerate, and, as shown in \ref{figtable}b and \ref{figtable}c, the common energy levels for the two families have higher degeneracy.
 
 \begin{figure}[htbp]
 \centering
 \begin{subfigure}[t]{.2\textwidth}
  \includegraphics[width=1\linewidth]{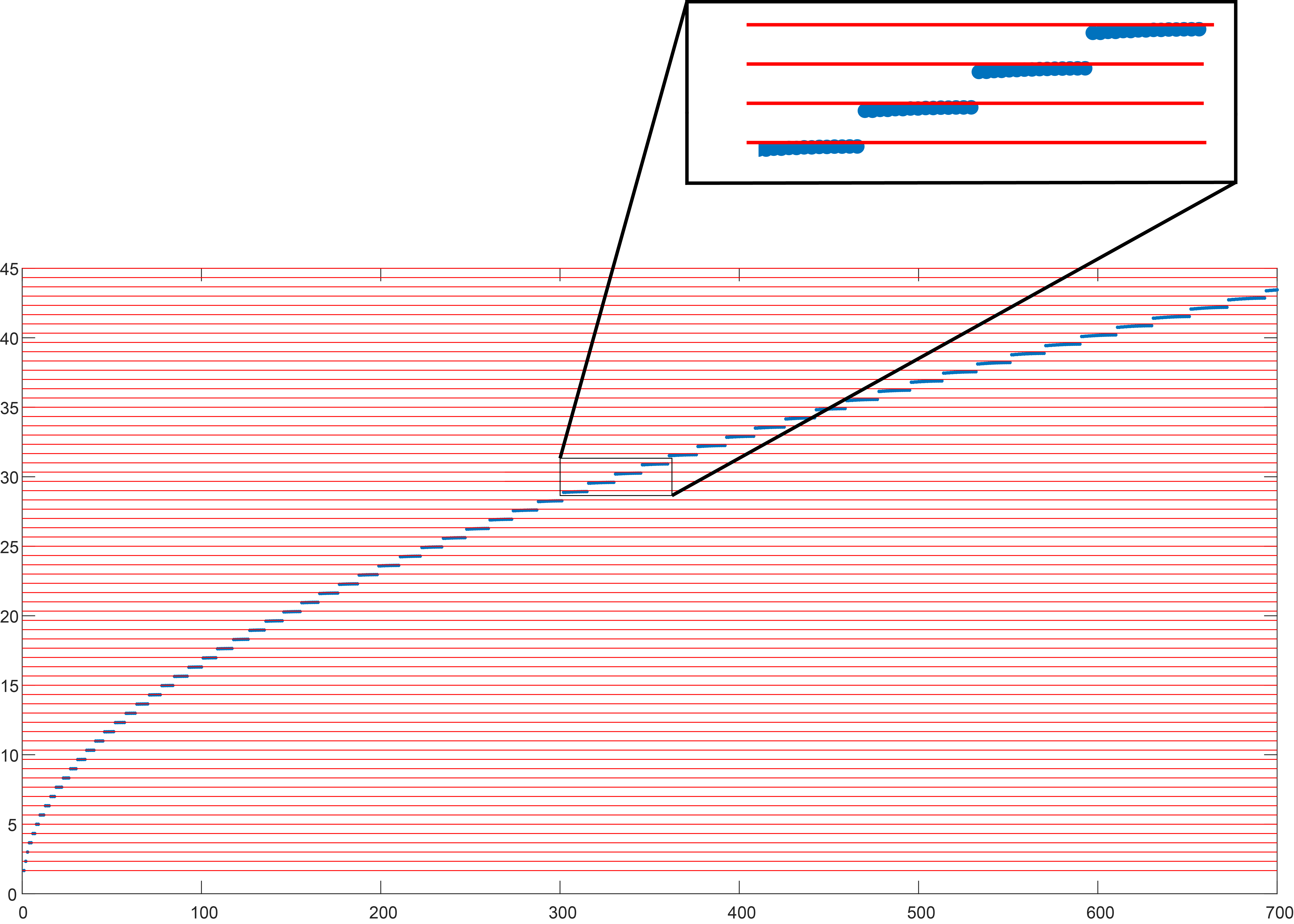}  
  \caption{}
  \label{fig:sub-second}
\end{subfigure}
\begin{subfigure}[t]{.2\textwidth}
  \includegraphics[width=1\linewidth]{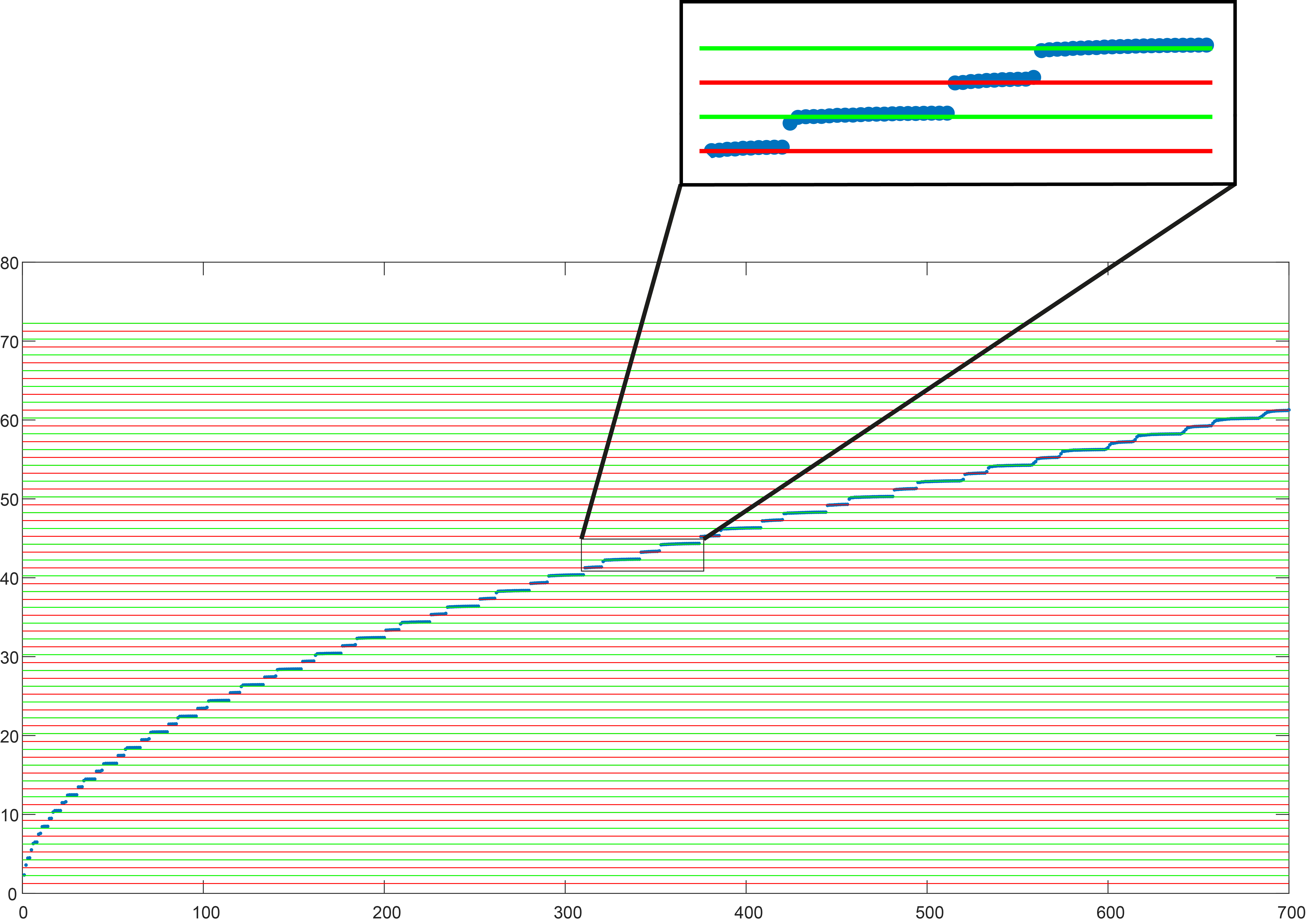}  
  \caption{}
  \label{fig:sub-second}
\end{subfigure}
\begin{subfigure}[t]{.2\textwidth}
  \includegraphics[width=1\linewidth]{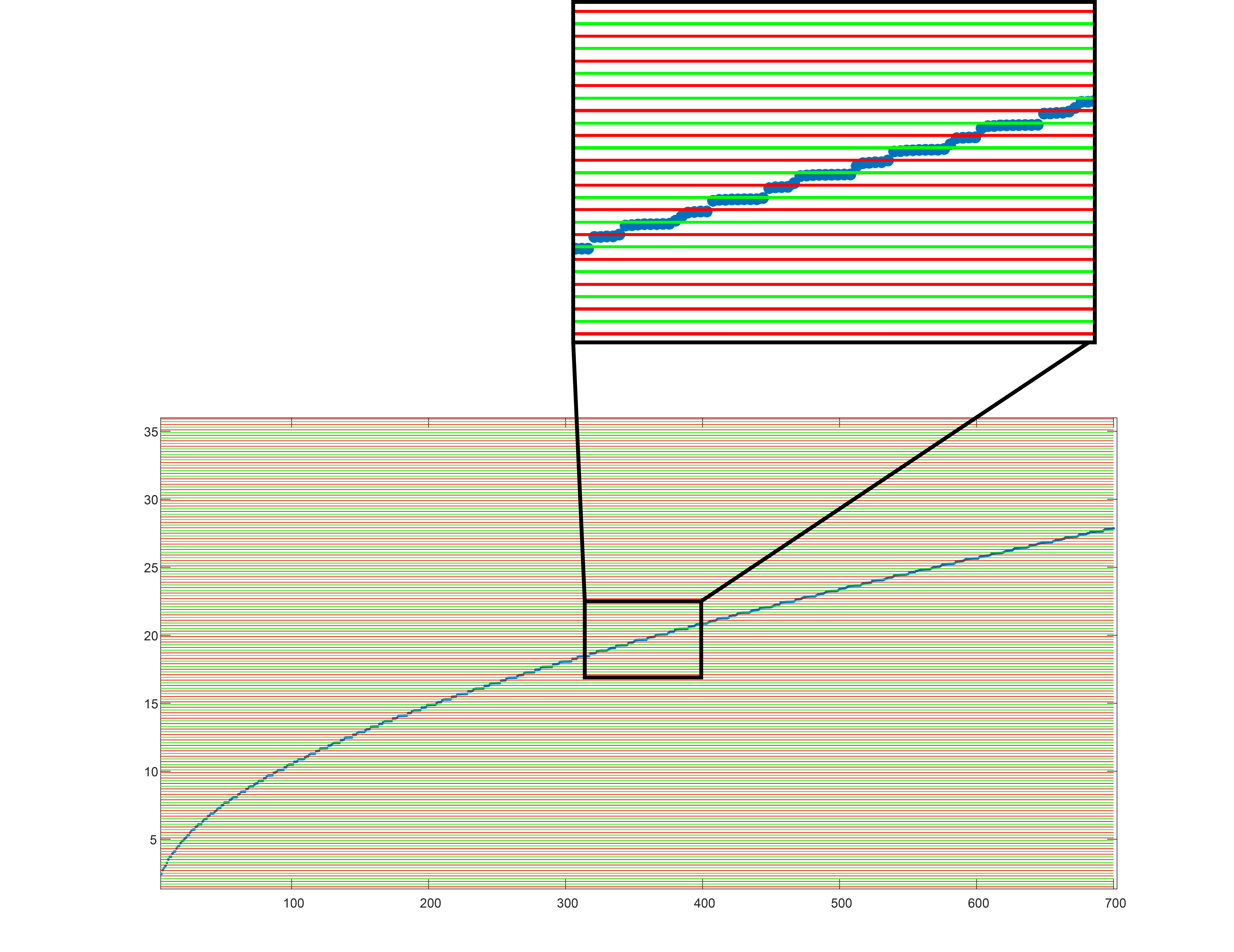}  
  \caption{}
  \label{fig:sub-second}
\end{subfigure}
    \caption{Energy levels for resonant harmonic oscillator with a step at the origin: numerical and expected (EBK) values. (a) Odd $m$ ($\omega_1=1,\ \omega_2=1$): the expected values for the single family of the periodic orbits denoted by horizontal red lines agree with the numerical values (blue dots). (b) Even $m$ $\omega_1=1,\ \omega_2=2$: The expected values (family I:red and green horizontal lines, Family II: only green horizontal lines) agree with the numerical values, and the common values have larger degeneracy.
    (c) Even $m$ and odd $n$ $\omega_1=1,\ \omega_2=\frac{2}{5}$: The expected values (family I:red and green horizontal lines, Family II: only green horizontal lines) agree with the numerical values, and the common values have larger degeneracy.}
      \label{figtable}
      \end{figure}

Next we use Weyl's law to validate our computations of correspondence between the classical families of periodic orbits and the energy levels. Recall that for the two dimensional case, Weyl's law is:
\begin{align}
 N[E_{j}: E_{j} \leq b]=
\frac{1}{\hbar^2} \text{Vol}( H\leq b)+o(1)
\text{ as } \hbar\rightarrow 0 
\end{align}
and notice that the phase space volume for the step-oscillator is:
\begin{equation}
\begin{aligned}
\text{Vol}(\mathcal{E})=\int_{0}^{I_{2}(\mathcal{E})} dI_{2}\int_0^{I_{1}(\mathcal{E}-E(I_{2})))}-4\theta_1^{wall}(I_1)\theta_2^{wall}(I_2)\\+4\pi(\theta_1^{wall}(I_1)+\theta_2^{wall}(I_2))dI_{1}.
\end{aligned}
\end{equation}
For the case of a step at the origin and harmonic oscillators, we obtain \begin{equation}
\begin{aligned}
\text{Vol}(\mathcal{E})|_{S_0,\text{Harmonic oscillators}}=\frac{3\pi^2}{2\omega_{1}\omega_{2}}\mathcal{E}^2.
\end{aligned}
\label{resvol}
\end{equation}  Fig. \ref{fig2} shows this expected correspondence.
 For the even $m$ case the contribution of the larger degeneracy associated with the energy levels which are common to the 2 different families is evident.
\begin{figure}[htbp].
  \centering
  \includegraphics[width=.9\linewidth]{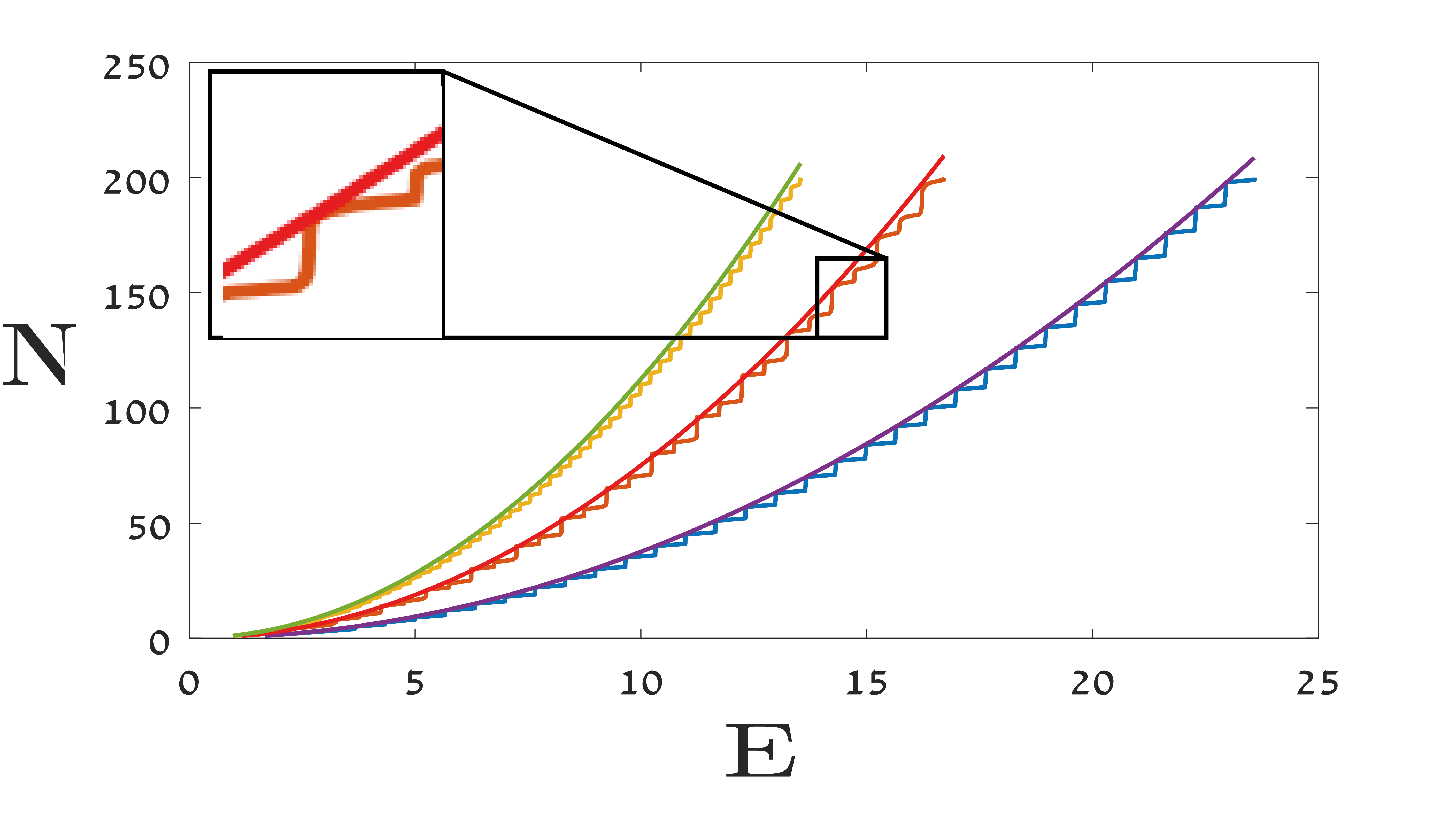}  
  \caption{Weyl's law. Smooth curves correspond to the predicted phase space volume (Eq.\ref{resvol}) for the three resonant cases ($\omega_1 =1,\omega_2=1,2,3$ yellow, red, blue lines respectively). These prediction fit the corresponding numerical results. The inset shows the non-uniform jump in $N$ for the even $m$ case.}
\label{fig2}
\end{figure}

\subsection{Wavefunctions structure and eigenvalues statistics for quantum step oscillators}

We examine first non-resonant oscillators (and not necessarily harmonic) while keeping the step at the origin.  Classically, the motion is ergodic within the level set for almost all partial energies.
Hence, we expect wavefunctions to concentrate on the projection of such level sets to the configuration space. We show that at least for a sequence of density $\frac{1}{3}$ of the wavefunctions this property holds and doesn't vanish at high energies. \newline 
In the correspondent smooth system the potential,$V=V_1(q_1)+V_2(q_2)$  is separable. Thus, its wavefunctions, $\Psi^{sm}_n$, can be written as a product of the wavefunctions of $H_i$: $\{\Psi^{sm}_n\}_{n=1}^\infty=\Psi_{1,k_1}(q_1)\Psi_{2,k_2}(q_2)$ where  $\{\Psi_{i,k_i}\}_{k_i=1}^\infty$ are the wavefunctions of the smooth one dimensional Hamiltonian $H_i$ and $E^{sm}_{n(k_1,k_2)}=E_{k_1}+E_{k_2}$. \newline Since $V_{i}$ are even:
\begin{align}
\Psi_{i,k_i}(q_{i})= \begin{cases} \Psi_{i,k_i}(-q_{i}) \phantom{-} \text{if $k_i$ is even} \\ -\Psi_{i,k_i}(-q_{i}) \phantom{-} \text{if $k_i$ is odd}
\end{cases}
\end{align}
When both $k_1$ and $k_2$ are odd, the series of wavefunctions $\{\Psi^{sm}_{n_j(k_1,k_2)}\}_{n_j=1}^\infty$ vanishes on both axes, hence, the non-smooth Hamiltonian for the case of step at the origin has a subsequence of wavefunctions of the form: 
\begin{align}
    \Psi^{S_0}_{ n_j(k_1,k_2)}(q_1,q_2)=
    \begin{cases}
    \begin{split}
    \Psi^{sm}_{\tilde n_j(k_1,k_2)}=&\Psi_{1,k_1}(q_1)\Psi_{2,k_2}(q_2),\\ &(q_1,q_2)\in \mathbb{R}^{2}/S_0      \end{split} \\
    \\ \label{product}
    \phantom{-} \phantom{-} 0 \phantom{-} & (q_1,q_2)\in S_0 
    \end{cases}
\end{align} 
These solutions are smooth in the domain ($\mathbb{R}^{2}/S_0$) and satisfy Dirichlet boundary conditions on $S_0$. Moreover, $\Psi^{S_0}_{ n_j(k_1,k_2)}$ concentrates on the projection of classical level sets; as the one-dimensional wavefunctions are well approximated by the WKB approximation \cite{brack2018semiclassical}, they decay exponentially outside of the classical allowed region of motion:
\begin{align}
\label{wkb}
\displaystyle \Psi_{i,k_i} (q_i)\approx C_{0}{\frac {e^{\theta +i \hbar ^{-1}\int {\sqrt {2\left(E_{i,k_i}-V_i(q_i)\right)}}\,dq_i}}{\hbar ^{-1/2}{\sqrt[{4}]{2\left(E_{i,k_i}-V_i(q_i))\right)}}}}.
\end{align}
Next we show that the fraction of such odd wavefunctions for the case of a step at the origin is $1/3$.
From equation \ref{eqn:ebk} for the smooth case (i.e. $b=0$) we deduce that   wavefunction that are odd in both directions (odd $k_1,k_2$) constitute one quarter of all wavefunctions:
\begin{align}
 \lim_{E\to\infty} \frac{\#\{\Psi^{sm}_{\tilde n_j(k_1,k_2)}: E_{\tilde n_j}=E^{1}_{k_1}+ E^{2}_{k_2} \leq E \}}{\#\{\Psi^{sm}_n:E_{n} \leq E\}}=\frac{1}{4}.
\end{align}
Since the step is at the origin:
\begin{align}
\text{Vol}(E)^{S_0}=\frac{3}{4}\text{Vol}(E)^{sm}
\end{align}
and thus, by Weyl's law
\begin{align}
\lim_{E\to\infty} \frac{\#\{\Psi^{S_0}_{n_j}: E_{n_j} \leq E \}}{\#\{\Psi^{S_0}_n:E_{n} \leq E\}} =
\lim_{E\to\infty} \frac{\#\{\Psi^{sm}_{\tilde n_j}: E_{\tilde n_j} \leq E \}}{\frac{3}{4}\#\{\Psi^{sm}_n:E_{n} \leq E\}} =
\frac{1}{3}.
\label{fraction}
\end{align}
We conclude that for a step at the origin there is no quantum ergodicity in configuration space, and, in fact, there is a positive measure set of eigenfunctions that concentrate on the classical level sets. 

To examine the behavior for non-symmetric pseudointegrable cases, we study numerically the shifted corner in the harmonic case: we find the level spacing of the eigenvalues and study the projections to configuration space of the eigenfunctions. Both studies propose that the shift does not break the concentration of a large subset of eigenfunctions on classical level sets.

 It is convenient for the study of the non-symmetric system to keep the step at the origin and shift the original harmonic potential to have a minimum at $(\epsilon_1,\frac{\epsilon_2}{\omega_2^{2}})=\epsilon \cdot (\cos \alpha,sin\alpha)$. Then the potential is of the form: $V=U_0+U_1$ where $U_0=\frac{q_1^{2}}{2}+\frac{\omega_2^{2} q_2^{2}}{2}$ and $U_1=-\epsilon_{1}q_1-\epsilon_{2}q_2$. Here, $\epsilon=0$ corresponds to the system with a step at the origin, and we study the behavior  for a non-resonant case at finite values of $\epsilon$, beyond the small perturbation regime. 
 Figure \ref{fig:cdf}  compares the cumulative mean level spacing distribution of this shifted potential of the first 1500 energy levels to the cumulative Poisson distribution (characterizing integrable systems, $N_p(s)=1-e^{-s}$, reflecting their locality in the classical phase space) and to the cumulative random matrix ensembles distribution, GOE (characterizing chaotic systems,  $N_W(s)=1-e^{-\frac{\pi s^2}{4}}$, reflecting their non-local nature in the classical phase space). We obtain intermediate statistics as in pseudo integrable billiards, close to semi-Poisson distribution ($N_{sp}(s)=1-e^{-2s}(2s+1)$) \cite{bogomolny1999models} (such a behaviour was also observed in a certain range of parameters in step-like time dependent one d.o.f. Hamiltonian \cite{garcia2006semi}). 
 
 Figure \ref{fig:cdf} shows that the dependence of the level spacing on $\epsilon$ appears to be mild and similar to the case $\epsilon=0$. Recall that in the case of a step at the origin, we showed that there is a positive density sequence of eigenfunctions concentrated on classical level sets. Namely, the level spacing distribution at $\epsilon=0$ reflects this locality in phase space, together with the non-locality associated with pseudointegrability.  Fig. \ref{fig:cdf} suggests that this behaviour persists  when the step is shifted from the origin.
 In fact, Fig. \ref{fig:pdf} shows that the distribution with the largest repulsion is achieved at $\epsilon=0$.   
 \begin{figure}[htbp]
 \centering
 \begin{subfigure}[t]{.4\textwidth}
  \includegraphics[width=1\linewidth]{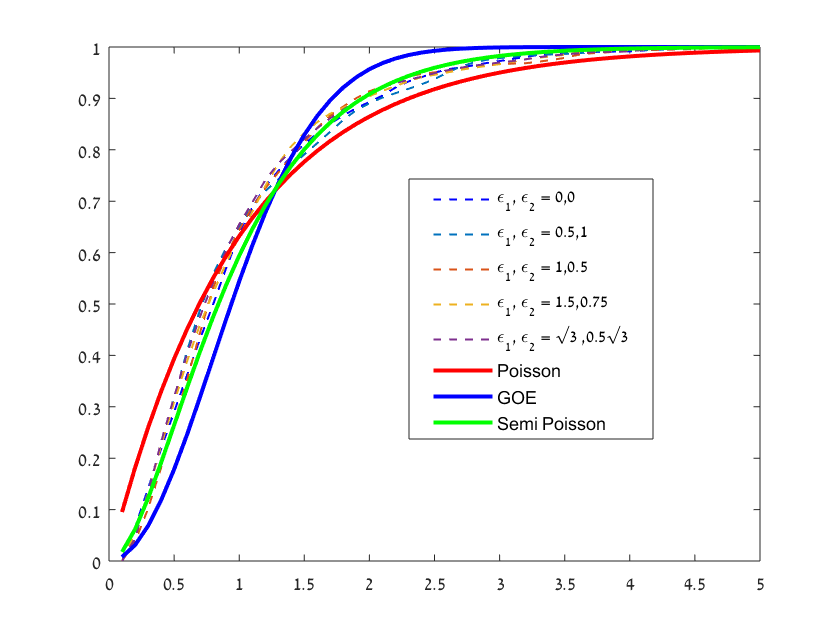}  
  \caption{} 
  \label{fig:cdf}
\end{subfigure}
\begin{subfigure}[t]{.4\textwidth}
  \includegraphics[width=1\linewidth]{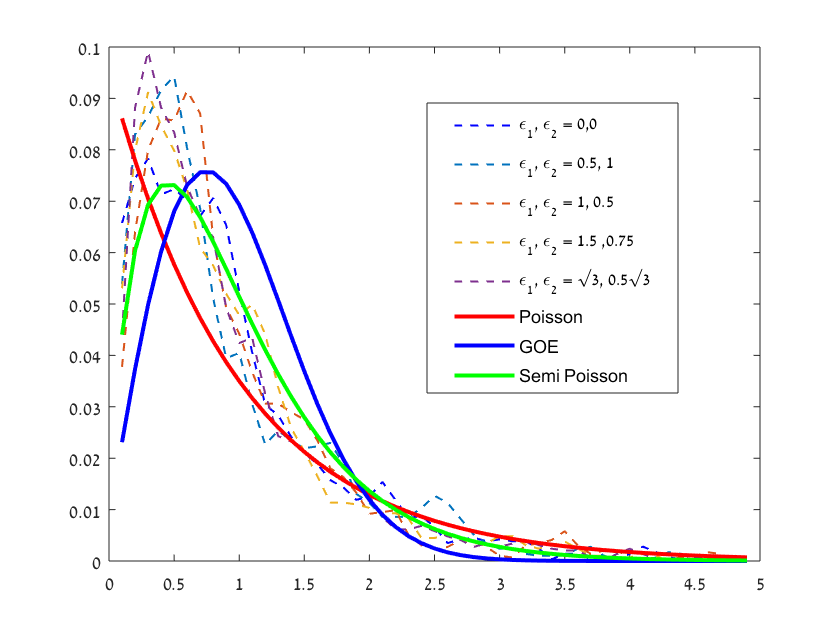} 
    \caption{}
      \label{fig:pdf}
      \end{subfigure}
      \caption{PDF and CDF of the level  spacing for a non-resonant Hamiltonian for several positions of the step. The semi-Poisson distribution (solid thick green line) provides the best fit for all positions of the step (dashed lines), including a step at the origin (blue dashed line). (a) Cumulative distribution functions of Poisson, semi-poisson GOE and numerically calculated CDFs  (b) Probability density functions of Poisson, semi-poisson GOE and numerically calculated PDFs.  The level spacing are found by a finite differences scheme for the time independent Schrodinger Eq. for the Hamiltonian \ref{eq:modelsham} with $V=U_0+U_1$ where $U_0=\frac{q_1^{2}}{2}+\frac{\omega_2^{2} q_2^{2}}{2}$ and $U_1=-\epsilon_{1}q_1-\epsilon_{2}q_2$. The step is located at the origin and is numerically represented as $V=10^{28}$. Here,  $\omega_1=1,\omega_2=\sqrt{2}$ and  $(\epsilon_1, \epsilon_2)=(0,0),(0.5,0.25),(1,0.5),(1.5,0.75),(\sqrt{3},\frac{\sqrt{3}}{2})$.}
      \label{disp}
      \end{figure}
      
To substantiate the claim that, as suggested by the level spacing plots, at large energies, the general step system still has a positive fraction of wavefunctions that concentrate on classical level sets, we calculate the wavefunctions for such systems. Since the wavefunctions depend continuously on $\epsilon$, for any given maximal  energy, for small enough $\epsilon$, such a fraction of concentrated wavefunctions exists. Hence, we first find the natural scaling of $\epsilon$ with $E$ and establish that our wavefunction calculations are far from the trivial limit of ${\epsilon \to 0}$, namely, that the perturbed wavefunctions do not correlate well with unperturbed wavefunctions.  \newline Expanding the wavefunctions in $\epsilon$,  the first order correction to $|n(\epsilon)\rangle=|n^{(0)}\rangle+ \epsilon| n^{(1)}\rangle+ O(\epsilon^2)$, is: \newline
${\displaystyle \epsilon |n^{(1)}\rangle =\sum _{k\neq n}{\frac {\langle k^{(0)}|U_1|n^{(0)}\rangle }{E_{n}^{(0)}-E_{k}^{(0)}}}|k^{(0)}\rangle } \newline
$where$ \phantom{-} U_1=-\epsilon_1 q_1-\epsilon_2 q_2$. So for large energies, the number and power of terms that contribute significantly to the sum are expected to stabilize provided we use the scaling:
 $\epsilon_1\propto\frac{E_{n+1}-E_{n}}{q_1}$ and $\epsilon_2\propto\frac{E_{n+1}-E_{n}}{q_2}$. 
Since,  for harmonic oscillators, $q_i \propto\sqrt{E}$ and  $N(E)\propto \text{Vol}(E)\propto E^{2}$, so $E_{n+1}-E_{n}\propto\frac{1}{E}$, 
we conclude that the stabilization is achieved  provided 
 $\epsilon\propto\frac{1}{E^{1.5}}$. As higher orders of the perturbation series give the same result, we actually expect that  
  ${\displaystyle |n(\epsilon)}\rangle-{\displaystyle |n^{(0)}\rangle} = O(\epsilon E_{n}^{1.5})$. 
To capture the distance between eigenfunctions of the non-perturbed Hamiltonian to the perturbed one around an energy level $E_N$, we calculate $P$, the mean squared maximal projection on unperturbed wavefunctions, and $T$, the mean number of above-threshold contributing unperturbed wavefunctions: 
\begin{equation}\label{mix}
\begin{split}
P (\epsilon,N;\Delta N,J) &=\frac{1}{\Delta N}\sum_{n=N}^{N+\Delta N} \max_{j^{0}\le J}{{|\langle  \displaystyle j^{0} \displaystyle |n(\epsilon)}\rangle}|^{2}\\
T(\epsilon,N;\Delta N,J,\delta)&=\frac{\sum_{n=N}^{N+\Delta N}  \# ({{|\langle  \displaystyle j^{0} \displaystyle |n(\epsilon)}\rangle|}^{2}>\delta)}{\sum_{n=N}^{N+\Delta N} \sum_{j^{0}=0}^J {\langle  \displaystyle j^{0}\displaystyle |n(\epsilon)}\rangle^{2}}.
\end{split}    
\end{equation}
 \newline
 Figure \ref{mixing} shows that 
 $P (\epsilon E_N^{3/2},N;\Delta N,J)$ and $T (\epsilon E_N^{3/2},N;\Delta N,J,\delta)$ are, to a good approximation,  independent of $N$,
 supporting the validity of our scaling. 
Moreover, while for small  $\epsilon (\frac{E_N}{E_{301}})^{3/2}$ we see that, as expected, there is a strong correlation between the perturbed and unperturbed wavefunctions,   for 
$\epsilon E_N^{3/2} \ge  E_{301}^{3/2} $ the maximal projection, $P$, is small while the level of mixing, $T$, is large, indicating that for such values of $\epsilon E^{3/2}$ we are indeed far from the small $\epsilon$ limit.   Additional computations show that a further increase in $\epsilon E_N^{3/2}$ leads to further decrease in $P$. 
  \begin{figure}[htbp]
 \centering
 \begin{subfigure}[t]{.4\textwidth}
  \includegraphics[width=1\linewidth]{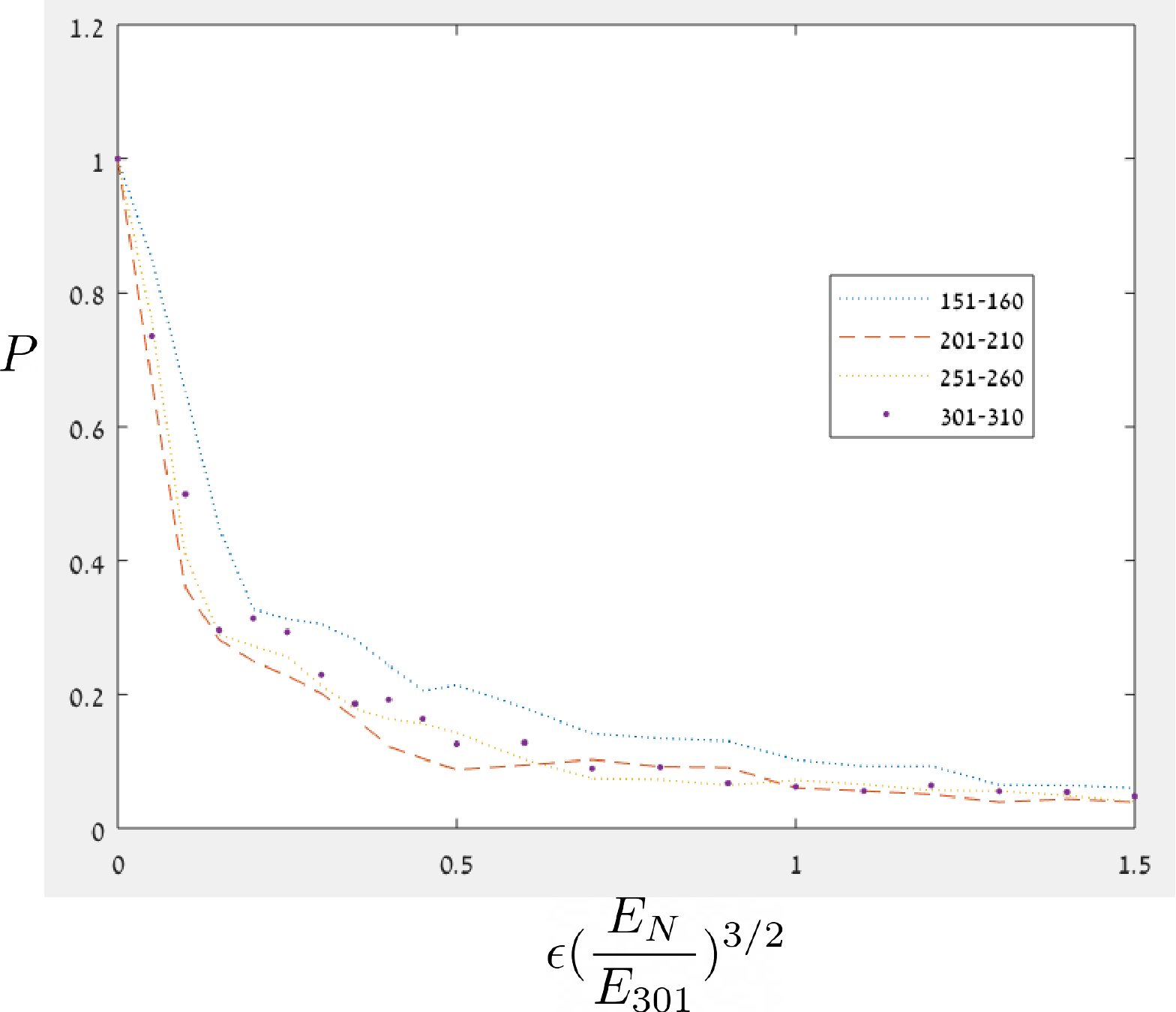}  
  \caption{}
  \label{maincomponent}
\end{subfigure}
\begin{subfigure}[t]{.4\textwidth}
  \includegraphics[width=1\linewidth]{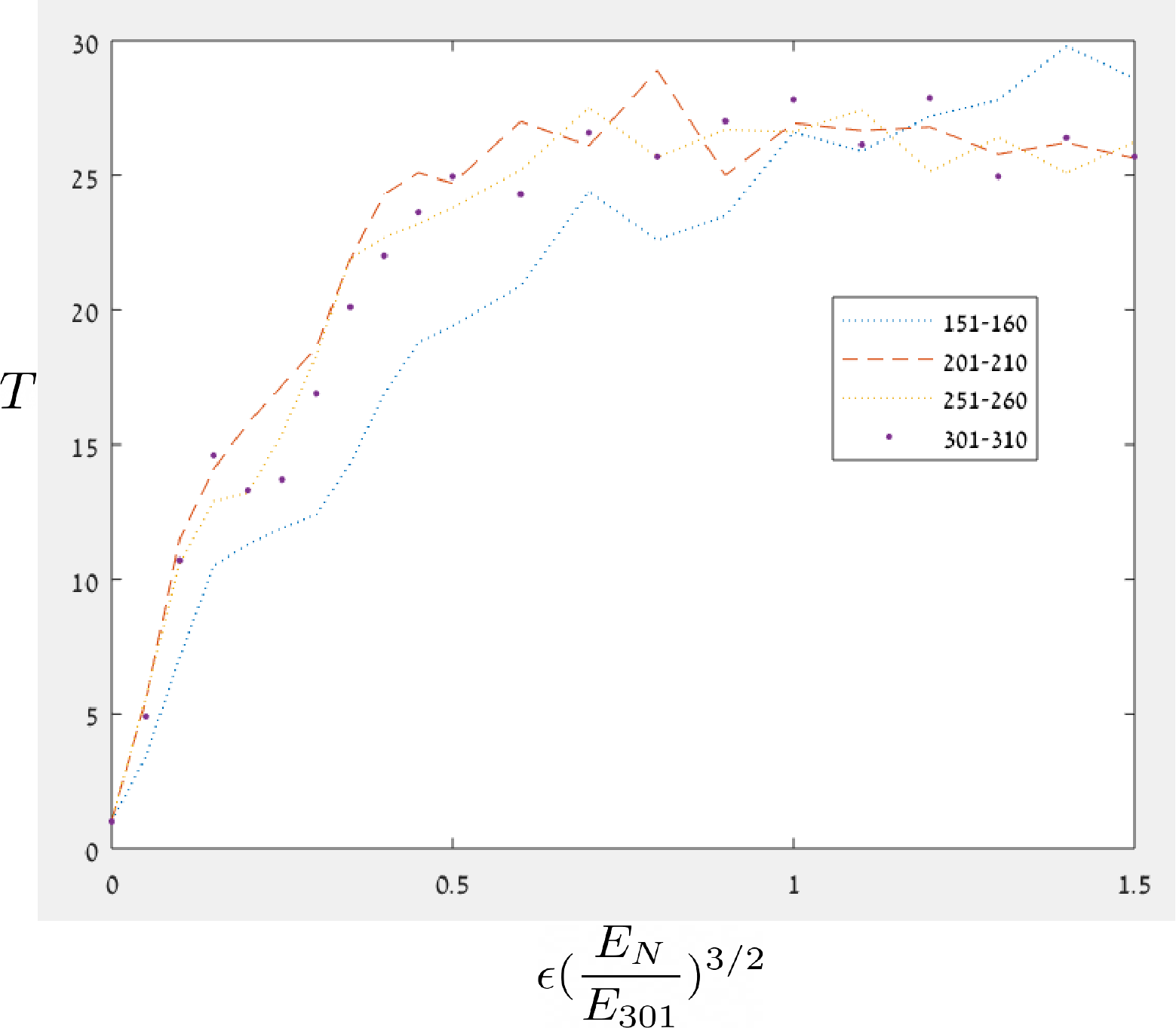}
  \caption{}
  \label{mixingrate}
\end{subfigure}
    \caption{Scaling of the perturbed wavefunctions with $\epsilon$ and energy. (a) The  mean maximal projection on unperturbed wavefunctions along the energy-scaled $\epsilon$, $\epsilon (\frac{E_N}{E_{301}})^{3/2}$:  $P (\epsilon (\frac{E_N}{E_{301}})^{3/2},N;10,400)$ (b) The mean number of above-threshold contributing unperturbed wavefunctions along the energy-scaled $\epsilon$, $\epsilon (\frac{E_N}{E_{301}})^{3/2}$:  $T (\epsilon (\frac{E_N}{E_{301}})^{3/2},N;10,400,0.01)$. These functions are plotted  
    for $N=151,201,251,301$ and for several $\epsilon=(\epsilon_1,\epsilon_2=\frac{\epsilon_1}{2})$ values.    }
      \label{mixing}
      \end{figure}
      
Finally, we show that even when   $\epsilon (\frac{E_{n}}{E_{301}})^{3/2}\gg 1$, i.e. when the wavefunctions are not well approximated by the unperturbed wavefunctions,
a substantial fraction of the  wavefunctions concentrate on classical level sets.
Figure \ref{wavefunctions} shows  the 1481-1500 wavefunctions in Logarithmic scale normalized by the maximal absolute value of the wavefunctions for the unperturbed (step at the origin) and perturbed ($\epsilon=(1.5,0.75)$) wavefuncations (so $\epsilon (\frac{E_{1500}}{E_{301}})^{3/2}=5.25$). For both the perturbed and unperturbed systems,  wavefunctions that are concentrated along the  classical level sets, i.e., are essentially restricted to the configuration space region $(q_1,q_2) \in [q_1^{min}(E_1,\epsilon_1),q_1^{max}(E_1,\epsilon_1)]\times [q_2^{min}(E_2,\epsilon_2,\omega_2),q_2^{max}(E_2,\epsilon_2,\omega_2)]
\setminus S_{q^{wall}}$ where $q_i^{max,min}$ correspond to the classical level set boundaries,  are clearly seen (e.g. see  wavefunction 1 in the unperturbed system and wavefunction 19 in the pertubed system). We call such wavefunctions concentrated wavefunctions.
 \begin{figure}
 \centering
 \begin{subfigure}[t]{.4\textwidth}
  \includegraphics[width=1\linewidth]{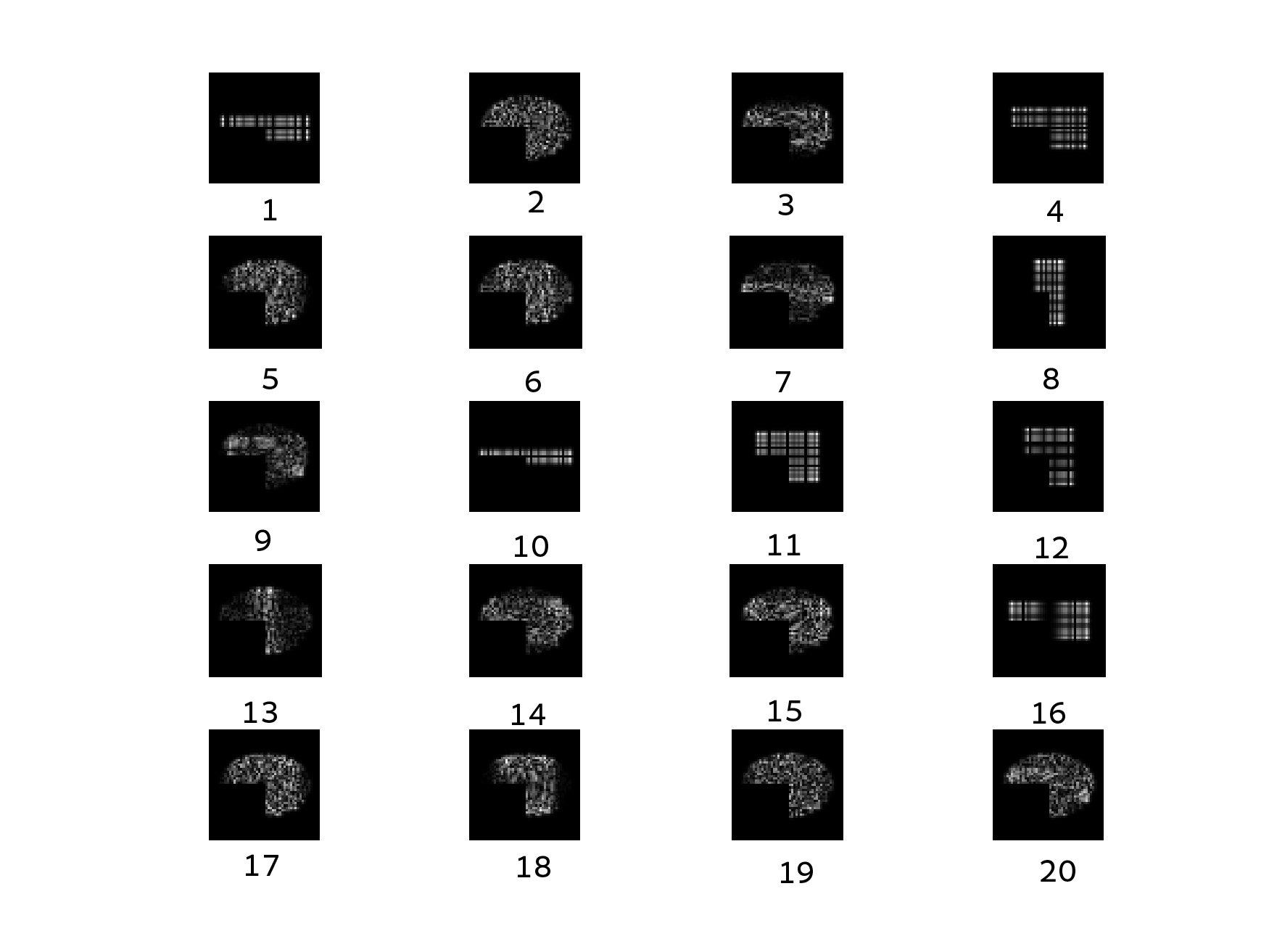}  
  \caption{ }
  \label{mediumpert}
\end{subfigure}
\qquad
\begin{subfigure}[t]{.4\textwidth}
  \includegraphics[width=1\linewidth]{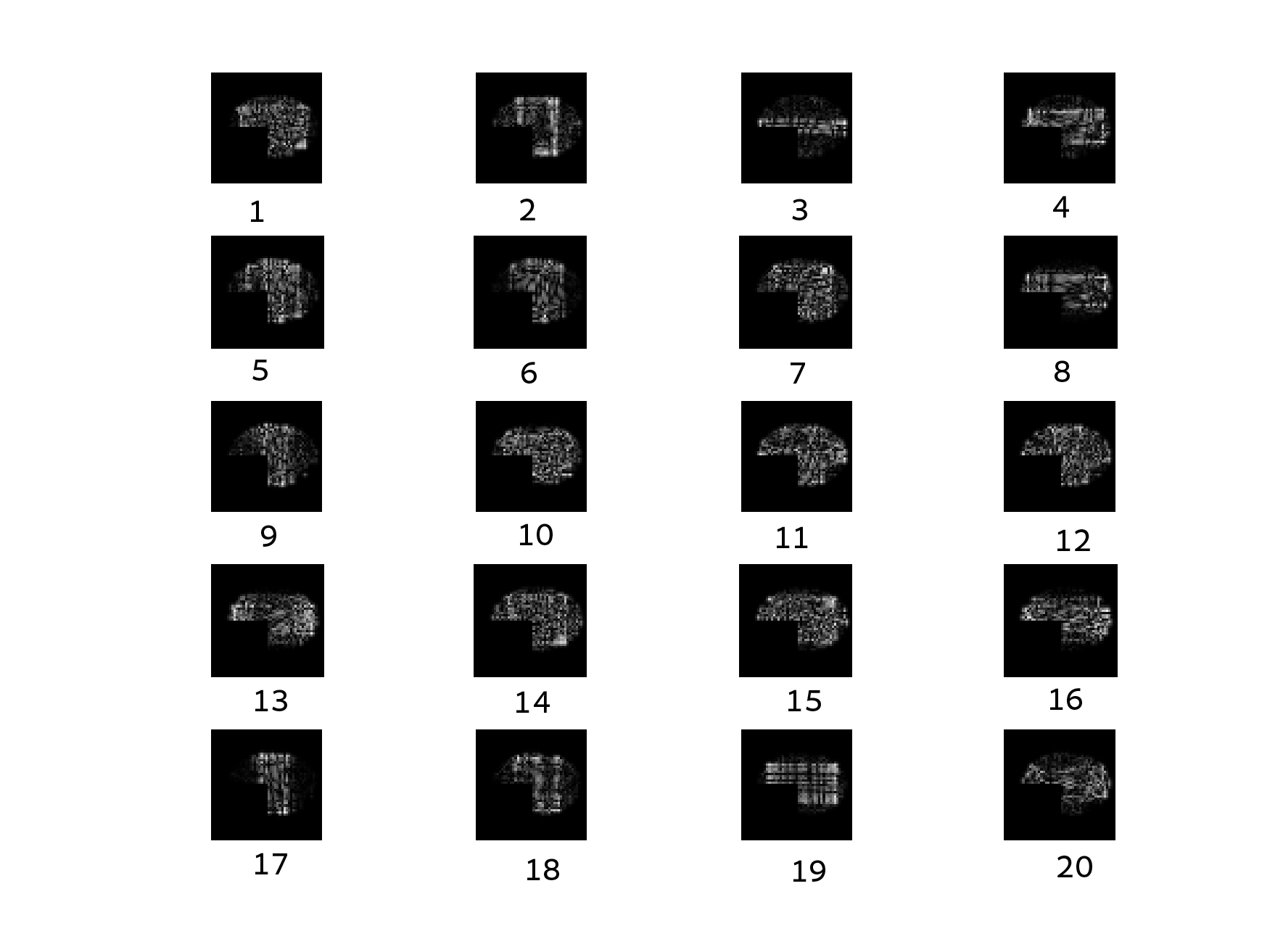}  
  \caption{ }
  \label{bigpert}
\end{subfigure}
    \caption{High energy wavefunctions for a step at the origin and for a shifted step.  (a) The unperturbed Hamiltonian.   (b) The perturb Hamiltonian  with $(\epsilon_1,\epsilon_2)=(1.5,0.75)$. The wavefunctions  for $n=1481-1500$ are plotted. To better visualize the main mass concentration we plot   
    $Log(|\Psi_n(q_1,q_2;\epsilon)|+\max_{q_1,q_2}|\Psi_n((q_1,q_2;\epsilon)|)$. }
      \label{wavefunctions}
\end{figure}

To quantify this observation, we need to distinguish between concentrated wavefunctions from wavefunctions which are not concentrated. To this aim we define vertical and horizontal means of the wavefunctions: 
\begin{equation}
\begin{split}
M^H_n(q_2)&=\int_{-\infty}^{\infty}|\Psi_n(q_1,q_2)|^{2}dq_1 \\
M^V_n(q_1)&=\int_{-\infty}^{\infty}|\Psi_n(q_1,q_2)|^{2}dq_2.     
\end{split}    
\label{means}
\end{equation}
and suggest that
\begin{equation}
\label{normalizedenergy}
\Tilde{E}=\frac{V_1(arg \max_{q_1}M^V_n(q_1))+V_2(arg \max_{q_2}M^H_n(q_2)}{E}
\end{equation}
provides a good indicator for the wavefunctions concentration: it is close to one for concentrated wavefunctions and has a much lower value for the rest of the wavefunctions.

Figures \ref{etilde}(a,b)  present  $\Tilde{E}$ values in the case of corner at the origin for low (a) and high (b) ranges of energies. Red points represent $\Tilde{E}$ values for the product wavefunctions of Eq. (\ref{product}) and constitute around $1/3$ of the  20  $\Tilde{E}$ values.
We see that some of the blue points align with the red ones, while others, around 1/5 for the lower energies and 1/2 for the higher energies have a much lower value. The insets present $(M^H_n,M^V_n)$ in the positive quadrant for the three different types of wavefunctions: for a product  wavefunction (red point, wavefunction 1 in \ref{etilde}(a) ), for a concentrated wavefunction with a similar $\Tilde{E}$ value (blue point, wavefunction 13 in \ref{etilde}(a) ) and for a non-concentrated wavefunction with a low $\Tilde{E}$ value (blue point, wavefunction 9 in \ref{etilde}(a) ). 
In the first two cases we recognize an oscillatory structure within the classically allowed region, and we observe that the maximal power appears close to the edge. In contrast,  the insets corresponding to the low $\Tilde{E}$ value show a non oscillatory structure with peaks at arbitrary positions within the Hill region. \newline
Figures \ref{etilde}(c,d) present a similar computation for the case of the shifted potential,    $\epsilon=(1.5,0.75)$, for which there are no product wavefunctions, yet concentrated and not concentrated wavefunction do appear, and the indicator  $\Tilde{E}$ seems to distinguish between these two types of wavefunctions. 

The reasoning for this suggestion is as follows;
For step at the origin, for the product wavefunctions (eq. \ref{product}), $M^H_n(q_2)=|\Psi_{n,2}(q_2)|^2$ for $q_2>0$ and $M^H_n(q_2)=|\Psi_{n,2}(q_2)|^2/2$ for   $q_2<0$, so by the WKB approximation (eq.\ref{wkb}), and similarly for $M^V_n(q_1)$,  we indeed expect $\Tilde{E}=1-f(E)$ for some function $f(E)$ which tends to zero as $E$ goes to infinity (e.g., Figures \ref{etilde}(a,b) suggest that $f(E_{500})\approx0.15, E_{500}=39.9 $ and $f(E_{1500})\approx0.1, E_{1500}=70.5$). 
For non-product yet concentrated wavefunctions on some classical configuration space region defined by the partial energies $(E_1,E_2)$,
the $arg max$ of  $M^{V,H}$ cannot be larger than the corresponding $q_i^{max}$. Moreover, as classically, one of the momenta components vanishes at the edges of the classical region, the projection of the Liuoville measure to the configuration space there is expected to be larger, hence, by the correspondence principle, we expect maximal densities near the edges. Hence,  $\Tilde{E}$ provides the approximate ratio between the sum of the potential energies at the classical region corners (belonging to the boundary of the classical Hill region) to the total energy,  so we expect it to have a similar $\Tilde{E}$ values to the corresponding product wavefunctions.
 In contrast, for a wavefunction which does not concentrate on a single classical level set we do not expect the maxima in the horizontal and vertical directions to lie necessarily on the boundary of the Hill region (see insets corresponding to the lower $\Tilde{E}$ values), thus the sum of the potential energies at such an interior point leads to a lower value of $\Tilde{E}$.

In conclusion, Figures \ref{wavefunctions}  and \ref{etilde} suggest that the fraction of concentrated wavefunctions does not vanish at high energies even when the step is shifted. \newline

\begin{figure}[H]
\centering
\includegraphics[width=0.9\linewidth]{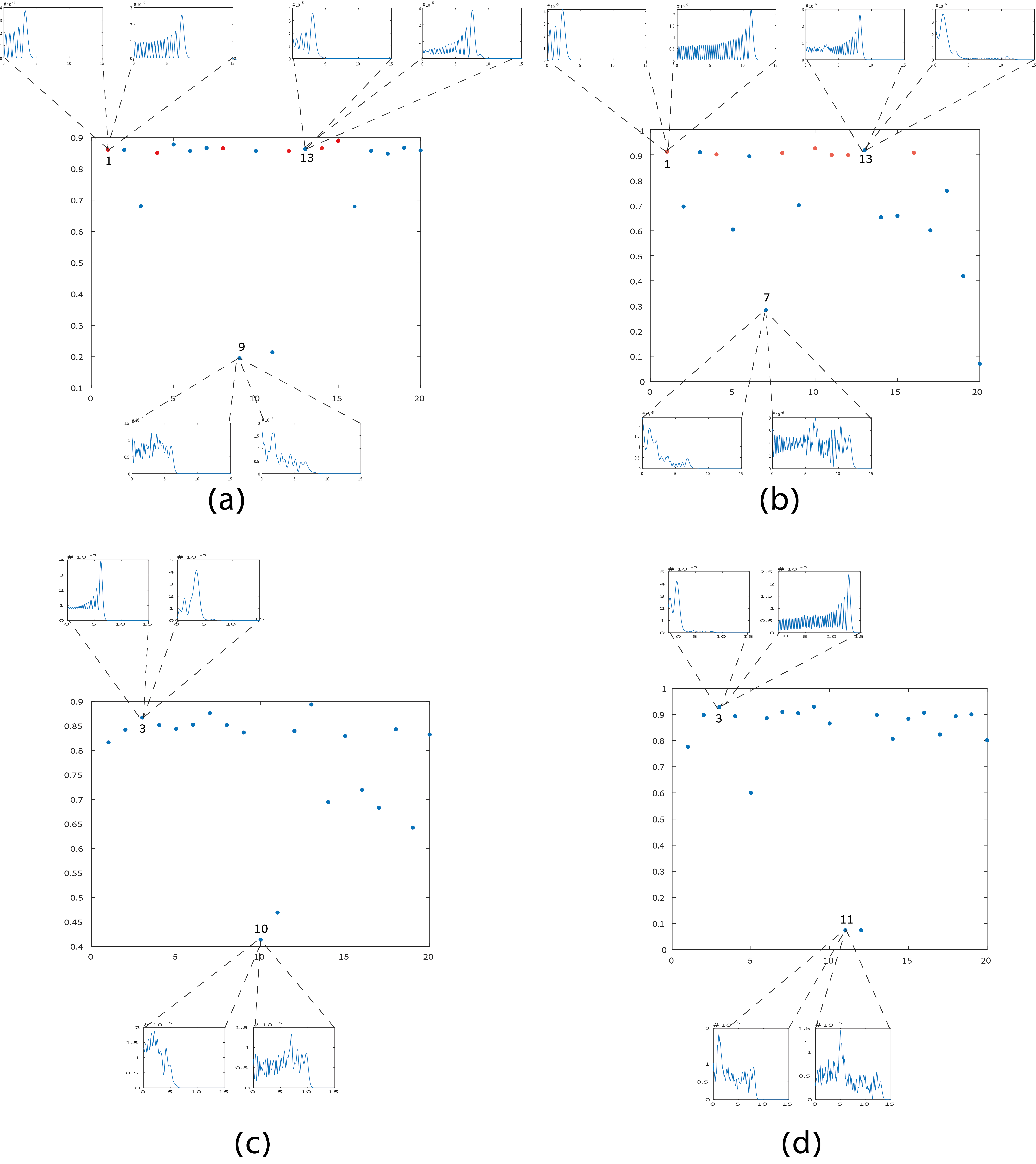}
    \caption{An indicator for the concentration of wavefunctions on classical level sets. The indicator $\Tilde{E}$ of Eq. \ref{normalizedenergy} is plotted for the case of a step at the origin, (a) 481-500 and  (b) 1481-1500 wavefunctions (the wavefunctions of Fig. \ref{wavefunctions}a). The indicator $\Tilde{E}$ is plotted for the case of a shifted step ( $\epsilon=(1.5,0.75)$)  
   (c) 481-500 and (d) 1481-1500 wavefunctions(the wavefunctions of Fig. \ref{wavefunctions}b). \newline
   The insets present $M^H,M^V$ for specific points}
      \label{etilde}
      \end{figure}
      
 \section{Summary and discussion}
 
     We studied the correspondence of a quantum step-oscillator - a two dimensional quantum oscillator in the presence of a step (a step-like region $S$ in the configuration space  at which the potential energy is infinite) to its classical analog,  a pseudointegrable Hamiltonian impact system.   For the case of harmonic resonant oscillators with a corner at the origin, for which families of periodic orbits can be explicitly constructed, we demonstrated that the EBK quantization condition provides a good predictor to the energy levels (Figure \ref{figtable}), and that Weyl's law provides a good approximation to the growth in the number of wavefunctions (Figure \ref{fig2}). Moreover, we observed that in even-resonance cases two different families of periodic orbits belonging to the same component of the level set co-exist, with distinct corresponding wavefunctions, each contributing a positive portion to the phase space volume (Figure \ref{fig2}).   This demonstrates that the non-ergodicity of level sets has a quantum analog. We showed that the intermediate level spacing of the quantum step-oscillator for non-resonant and not necessarily harmonic potential hardly depends on the position of the step (taken in the negative quadrant) and is approximately  semi-Poisson, indicating repulsion of energy levels, similar to the level spacing obtained for pseudointegrable billiards (Figure \ref{disp}).
     When the step is at the origin, we showed that there is a positive fraction of wavefunctions that remain concentrated along the classical level sets at arbitrarily high energies, as occurs for integrable systems, namely they do not tend to equidistribute  in the configuration space as is the case for pseudointegrable billiards (Eq. (\ref{product})-(\ref{fraction}) and Figures \ref{wavefunctions}a and \ref{etilde}a,b). Finally, when the corner is shifted from the origin, we conjecture, based on numerical evidence for non-resonant harmonic oscillators,  that there is a positive density series of wavefunctions which are not equidistributed and concentrated along the classical level sets (Figures \ref{wavefunctions}b and \ref{etilde}c,d).

\newpage
\section{appendix}

\begin{definition}
{A family of periodic orbits} of the step system is the family of periodic orbits having identical number of turning points and impact points $b_{i}$ and $\mu_{i}$  
\end{definition}

\begin{proposition} For a step at the origin, on level sets with $\omega_1=1, \omega_2=\frac{1}{n}$ the step system has: \newline

1) For an odd $n$: exactly one family of p.o. with $b_{1}=n,\mu_{1}=3n,b_{2}=1, \mu_{2}=3$ and action  $I=\frac{3nI_{1}+3I_{2}}{2}$. Its quantization condition is: $E_{(k)}=\frac{k}{1.5n}+\frac{5(1+n)}{6n}$  \newline
2) For an even  $n$: exactly two families of p.o. one with $\mu^I_{1}=2n,$ $b^I_{1}=n$,    $\mu^I_{2}=2$,$b^I_{2}=0$ and action  $I^I=\frac{2nI_{1}+2I_{2}}{2}$ and another one with $ \mu^{II}_{1}=n, \mu^{II}_{2}=1,b^{II}_{1}=0,b^{II}_{2}=1 $ and action  $I^{II}=\frac{nI_{1}+I_{2}}{2}$. Theirs quantization condition:  $E^{I}_{(k)}=\frac{k}{n}+\frac{4n+2}{4n}$,  $E^{II}_{(k)}=\frac{2k}{n}+\frac{n+3}{2n}$
\end{proposition}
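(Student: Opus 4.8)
The plan is to pass to the $L$‑shaped billiard model, classify its periodic orbits for the slope $\omega_2/\omega_1=1/n$, read off the four reflection counts, and then substitute them into the action formula (\ref{eqn:action}) and the EBK rule (\ref{eqn:ebk}). \emph{Reduction:} since $q_1^{wall}=q_2^{wall}=0$ the corner angles are $\theta_i^{wall}=\pi/2$, so in the action--angle variables $\theta_i=\omega_i t$ the motion on \emph{every} level set $(E_1,E_2)$ is the directed straight‑line flow of fixed slope $1/n$ on the cross‑shaped surface, equivalently the directed billiard in $\mathcal L=[0,\pi]^2\setminus(\pi/2,\pi]^2$ (three squares of side $\pi/2$). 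Under this conjugacy reflections off $\{\theta_1=0\},\{\theta_1=\pi\}$ are the turning points $q_1=\pm A_1$ (counted by $\mu_1$), reflections off $\{\theta_2=0\},\{\theta_2=\pi\}$ are the turning points $q_2=\pm A_2$ ($\mu_2$), a reflection off the inner vertical edge $\{\theta_1=\pi/2,\ \theta_2\in(\pi/2,\pi]\}$ is an impact with the vertical face of the step ($b_1$), a reflection off the inner horizontal edge is an impact with the horizontal face ($b_2$), and the reflex corner is the only singular point. So the statement is equivalent to enumerating the periodic billiard orbits of slope $1/n$ in $\mathcal L$ and counting, for each, these four reflection types.

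\emph{Classification (the crux):} I would unfold $\mathcal L$ to its translation surface $X$ — four reflected copies of $\mathcal L$ glued along parallel edges, a genus‑$2$ surface, concretely $(\mathbb R/2\pi\mathbb Z)^2$ with the open square $Q=(\pi/2,3\pi/2)^2$ removed and $\partial Q$ re‑glued left$\leftrightarrow$right and bottom$\leftrightarrow$top, so that a directed geodesic simply ``jumps over'' $Q$ whenever it meets $\partial Q$. Since the slope $1/n$ is rational, the directional flow decomposes $X$ into finitely many periodic cylinders, each cylinder projecting onto exactly one family of periodic orbits with a common $(\mu,b)$; hence it suffices to enumerate the cylinders and trace one representative geodesic in each. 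I would do the enumeration via the first‑return map of the flow to the transversal $\{\theta_2=0\}$: on the bare torus it is the identity, and the width‑$\pi$ hole $Q$ turns it into a finite‑order interval exchange whose number of periodic orbits is governed by the parity of $n$. Equivalently and more elementarily: follow $(\theta_1,\theta_2)=(\omega_1t,\omega_2t)$ with the reflection rule $\theta_i\mapsto-\theta_i$ at each impact, split the orbit into its $q_2>0$ phases (where $\theta_1$ runs freely and $\theta_2$ bounces off $q_2=0$ only while $q_1<0$) and $q_2<0$ phases (where $\theta_2$ runs freely and $\theta_1$ bounces off $q_1=0$), and impose $\theta_i(T)\equiv\theta_i(0)\pmod{2\pi}$ with matching signs of $\dot\theta_i$. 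The decisive fact is that over the time $q_2$ spends at $q_2\ge0$ in one oscillation (duration $n\pi$), $\theta_1$ advances by $n\pi$, which is $\equiv0$ or $\equiv\pi\pmod{2\pi}$ according as $n$ is even or odd: for even $n$ there is a short periodic orbit confined to $\{q_2\ge0\}$ with a single $b_2$‑impact (family II: $\mu_1=n,\ \mu_2=1,\ b_1=0,\ b_2=1$, period $n\pi$) together with a longer one having a full $q_2<0$ excursion (family I: $\mu_1=2n,\ \mu_2=2,\ b_1=n,\ b_2=0$, period $2n\pi$); for odd $n$ the short orbit fails to close, the two cylinders merge, and one family remains, with $n$ vertical‑step impacts and one horizontal‑step impact ($\mu_1=3n,\ \mu_2=3,\ b_1=n,\ b_2=1$, period $3n\pi=2n\pi+n\pi$). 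I would sanity‑check $n=1$ by tracing the $45^\circ$ orbit in $\mathcal L$ directly, which indeed gives $b_1=b_2=1,\ \mu_1=\mu_2=3$.

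\emph{Bookkeeping:} two facts make the quantization immediate. First, for the step at the origin each maximal piece of $\theta_i$‑motion between consecutive $q_i$‑impacts starts at $-\pi/2$ and ends at $\pi/2$ modulo $2\pi$, so the number of crossings of $\{0,\pi\}$ by $\theta_i$ over a period equals its total continuous travel divided by $\pi$, i.e. $\mu_i=\omega_iT/\pi$; in particular $\mu_1=n\mu_2$ in all the cases above. Second, $\theta_i^{wall}=\pi/2$ gives $I_i^{wall}=I_i/2$, so the impact terms in (\ref{eqn:action}) cancel and $I=\tfrac12(\mu_1I_1+\mu_2I_2)$; for harmonic potentials $I_i=E_i/\omega_i$, whence $I=\tfrac{T}{2\pi}(E_1+E_2)=\tfrac{T}{2\pi}E$ depends only on the total energy. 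Substituting into (\ref{eqn:ebk}) with $\hbar=1$ and $T=\pi n\mu_2$, $\tfrac{n\mu_2}{2}E=k+\tfrac{\mu_1+\mu_2}{4}+\tfrac{b_1+b_2}{2}$, and solving for $E$ yields $E_{(k)}=\tfrac{2}{n\mu_2}\bigl(k+\tfrac{\mu_1+\mu_2}{4}+\tfrac{b_1+b_2}{2}\bigr)$, which for odd $n$ is $\tfrac{2k}{3n}+\tfrac{5(n+1)}{6n}$ and for even $n$ is $E^{I}_{(k)}=\tfrac{k}{n}+\tfrac{4n+2}{4n}$ and $E^{II}_{(k)}=\tfrac{2k}{n}+\tfrac{n+3}{2n}$, as claimed.

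The one genuine obstacle is the classification step: proving rigorously that the number of periodic‑orbit families is exactly one for odd $n$ and exactly two for even $n$, and pinning down the exact reflection numbers. The parity dichotomy (and the fact that for odd $n$ the data is the ``sum'' of the two even‑$n$ families) is the conceptual heart; turning it into a clean argument requires care with the phase bookkeeping and with orbits that graze the reflex corner, since those lie on the cylinder boundaries. The reduction and the quantization computation around it are routine.
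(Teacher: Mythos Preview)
Your proposal is correct, and the quantization bookkeeping at the end is exactly what the paper does: $I_i^{wall}=I_i/2$ makes the impact terms drop out of (\ref{eqn:action}), leaving $I=\tfrac12(\mu_1 I_1+\mu_2 I_2)$, and the three EBK computations match line for line.

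The classification step, however, is handled differently. The paper does not unfold to a translation surface or invoke cylinder decompositions. Instead it takes the Poincar\'e return map $P_1:\Sigma_1^{wall}\to\Sigma_1^{wall}$ to the right edge of the step, partitions the $\theta_2$-circle into $2n$ equal arcs $s_1,\dots,s_{2n}$ of length $\pi/n$ (starting at $\theta_2=\pi/2$), and writes down an explicit transition table: which $s_i$ maps to which $s_j$ under $P_1$, together with the increment in $(b_1,\mu_1,b_2,\mu_2)$ at each step. One then simply follows the itinerary. Orbits starting in $s_1$ march through $s_1,\dots,s_n$; for even $n$ they proceed through the even-indexed arcs among $s_{n+1},\dots,s_{2n}$ and close (family~I), while the odd-indexed arcs in that range form a separate cycle (family~II). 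For odd $n$ the itinerary from $s_n$ hits $s_{2n-1}$, impacts the upper wall, re-enters at $s_{n+1}$, and only then closes after sweeping the even arcs---so the two cycles merge into one. This delivers the exact $(\mu,b)$ counts by direct bookkeeping, with no separate treatment of corner-grazing orbits needed.

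Your parity argument (``$\theta_1$ advances by $n\pi$ during one $q_2\ge0$ excursion'') is the conceptual distillation of the same phenomenon and explains cleanly \emph{why} the dichotomy occurs; the paper's interval-exchange trace is more pedestrian but has the advantage that the reflection counts fall out mechanically rather than requiring the phase-matching care you flag as the obstacle. Either route is sound; they are complementary rather than in tension.
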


\begin{proof}
Consider the Poincare map restricted to a given level set, $I(H_1=e_1,H_2=E-e_2)$, $P_1:\Sigma_{1}^{wall} \rightarrow \Sigma_{1}^{wall}$, so $P_1: \theta_2 \rightarrow \bar \theta_2, \theta_2 \in [0,2\pi) $. Divide this section to $2n$ equal length intervals starting from $\frac{\pi}{2}$ (so  $s_{i}=(\frac{\pi}{2}+(i-1)\frac{\pi}{n} ,\frac{\pi}{2}+i\frac{\pi}{n} \text{ mod } 2\pi), \ i=1,...,2n $). Since $\omega_1=1, \omega_2=\frac{1}{n}$ and the step is at the origin (so the width of each strip in the cross is \(\pi\)), the transition between these segments and the corfresponding increase in the turning points and impacts is (notice that each crossing of the section $\theta_i=0$ or $\theta_i=\pi$ corresponds to an additional turning point in the $q_i$ direction):  \newline
$
\begin{cases}
  s_i \rightarrow s_{i+1}, (b_1,\mu_1,b_2,\mu_2) \rightarrow (b_{1}+1,\mu_1+1,b_2,\mu_2) & \text{if: $  i <n-1 $ , $ i \neq \left \lfloor{\frac{n}{2}}\right \rfloor$ or $ i=2n$}\\ s_i \rightarrow s_{i+1}, (b_1,\mu_1,b_2,\mu_2) \rightarrow (b_{1}+1,\mu_1+1,b_2,\mu_2+1)   & \text{if: $i=\left \lfloor{\frac{n}{2}}\right \rfloor$}\\ s_i \rightarrow s_{i+2},  (b_1,\mu_1,b_2,\mu_2) \rightarrow (b_{1},\mu_1+2,b_2,\mu_2)& \text{if:$n\leq i <2n-1
   ,i \neq \left \lfloor{\frac{3n}{2}}\right \rfloor, \left  \lfloor{\frac{3n}{2}}\right \rfloor-1$}\\  s_i \rightarrow  s_{i+2},  (b_1,\mu_1,b_2,\mu_2) \rightarrow (b_{1},\mu_1+2,b_2,\mu_2+1) & \text{if: $i = \left \lfloor{\frac{3n}{2}}\right \rfloor,i = \left \lfloor{\frac{3n}{2}}\right \rfloor-1$}\\ s_i \rightarrow s_{n+1},  (b_1,\mu_1,b_2,\mu_2) \rightarrow (b_{1},\mu_1+2,b_2+1,\mu_2) & \text {if: $i=2n-1$} 
        \end{cases}
 $       
        \newline

       Thus, orbits starting at $s_{1}$ always reach $s_{n}$ after $n$ iterations, undergoing up to this point $n-1$ impacts with the right wall, $n$ turning points in the $q_1$ direction ($n$ crossing of the section $\theta_1=0$) and $1$ turning point in the $q_2$ direction. \newline 
       
       For an even $n$, $s_n$ is mapped to $s_{2n}$ after $\frac{n}{2}$ iterations without passing through $s_{2n-1}$, and then maps back to $s_1$. Namely, such orbits complete a period by visiting all segments $s_1,..,s_n$ and only even segments between $s_{n+1}$ and   $s_{2n}$. 
       Thus,  $\mu^I_{1}=2n,$ $b^I_{1}=n$,     $\mu^I_{2}=2$, $b^I_{2}=0$. \newline
       
       The other family of orbits starts at $s_{n+1}$ and visits only odd segments between $s_{n+1}$ and $s_{2n}$. Thus, it undergoes a single impact with the upper wall, has $n$ turning points in the $q_1$ direction  and a single turning point in the $q_2$ direction: $ \mu^{II}_{1}=n,  
 \mu^{II}_{2}=1, 
 b^{II}_{1}=0, b^{II}_{2}=1 $. \newline
       
       For an odd $n$, $s_n$ is mapped to $s_{2n-1}$ after $\lfloor{\frac{n}{2}}\rfloor$ iterations, then impacts the upper wall and maps to $s_{n+1}$. Then it visits only even segments between $s_{n+1}$ and $s_{2n}$ and maps back to $s_1$.
       Thus,  $b_{1}=n,\mu_{1}=3n,b_{2}=1, \mu_{2}=3$
\end{proof} 

 \begin{figure}[H]
 \centering
  \includegraphics[width=0.5\linewidth]{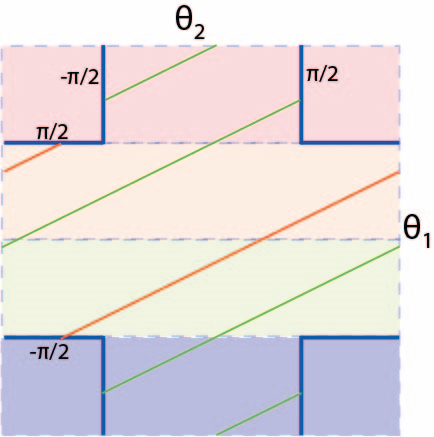}  
  \caption{Trajectories of family I(green) and family II(red) in action angle coordinates for $\omega=1$, $\omega=\frac{1}{2}$ with the c-series $[0;2]$. Proposition 5.2 states: $(\mu^{I},b^{I})=((4,2),(2,0))$ and $(\mu^{II},b^{II})=((2,1),(0,1))$.  The four different background color stands for the 4 $s_{i}$ segments.}

      \label{figtable}
      \end{figure}

Now consider $\omega_1=1,\omega_2=\frac{m}{n}$:

\begin{definition} {c-series:} Given a rational number $\frac{m}{n}=\frac{m_{0}}{n_{0}}$
the associated even continued fraction series, $[c_{0};c_{1},...,c_{k}]$ and its partial expansions, $\frac{m_{i}}{n_{i}}, i=0,..,k$  are defined inductively on $i$; Given  $\frac{m_{i}}{n_{i}}$ and  $[c_{0};c_{1},...,c_{i-1}]$ (so at $i=0$ the previous c-series is empty) define: \begin{itemize}
    \item 
If $\frac{m_{i}}{n_{i}}$ is an integer then
$c_{i}=\frac{m_{i}}{n_{i}}$ and the series ends, so \(k=i\).
\item Otherwise,
$$c_{i}= 
       \left\lfloor{\frac{m_i}{n_i}}\right \rfloor+N$$
       where $$N=\begin{cases}
       1, & \text{if:} \left \lfloor{\frac{m_i}{n_i}}\right \rfloor \text{is odd } \\
       0, & \text{if:} \left \lfloor{\frac{m_i}{n_i}}\right \rfloor \text{is even} 
       \end{cases}
       $$
       and
         $$\frac{m_{i+1}}{n_{i+1}} =
           (\frac{m_{i}}{n_{i}}-c_{i})^{-1}
         $$
         where $n_{i+1}$ always gets the positive sign.
       \end{itemize}
\end{definition}

\begin{lemma}A rational number $\frac{m_{i}}{n_{i}}$ associated with a c-series $[c_{i};...,c_{k}]$ is equal to $c_{i}+\frac{n_{i+1}}{m_{i+1}}$ where $\frac{m_{i+1}}{n_{i+1}}$ is associated with $[c_{i+1};...,c_{k}]$
\end{lemma}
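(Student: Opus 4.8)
The plan is to obtain the lemma essentially by unwinding Definition 5.4, which is built precisely so that this identity holds; the only point requiring an actual (short) argument is that the tail of a c-series is itself a c-series, namely the c-series of the corresponding partial expansion.

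First I would dispose of the degenerate case. If $\frac{m_i}{n_i}\in\mathbb{Z}$, then by Definition 5.4 the series terminates at $k=i$, there is no index $i+1$, and the statement is vacuous; so I may assume $i<k$, equivalently $\frac{m_i}{n_i}\notin\mathbb{Z}$. In that case $c_i=\lfloor m_i/n_i\rfloor+N$ with $N\in\{0,1\}$, so $\frac{m_i}{n_i}-c_i$ lies in $(0,1)$ when $N=0$ and in $(-1,0)$ when $N=1$; in particular it is nonzero, so the prescription $\frac{m_{i+1}}{n_{i+1}}=\bigl(\frac{m_i}{n_i}-c_i\bigr)^{-1}$ of Definition 5.4 is meaningful (the sign convention on $n_{i+1}$ merely fixes the integer representatives and does not alter the rational number). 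Taking reciprocals of this equality of rationals gives $\frac{n_{i+1}}{m_{i+1}}=\frac{m_i}{n_i}-c_i$, i.e.\ $\frac{m_i}{n_i}=c_i+\frac{n_{i+1}}{m_{i+1}}$, which is the asserted identity.

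It remains to verify that $\frac{m_{i+1}}{n_{i+1}}$ is associated with $[c_{i+1};\dots,c_k]$. This is the only genuine content, and it follows by induction on the step index. Inspecting Definition 5.4, the coefficient $c_j$ and the successor $\frac{m_{j+1}}{n_{j+1}}$ are determined by $\frac{m_j}{n_j}$ alone — the previously recorded coefficients $[c_0;\dots,c_{j-1}]$ appear in the definition only as bookkeeping and enter none of the formulas. Consequently, running the construction of Definition 5.4 afresh with $\frac{m_{i+1}}{n_{i+1}}$ as the initial rational reproduces, term by term, the coefficients $c_{i+1},c_{i+2},\dots$ and the partial expansions $\frac{m_{i+2}}{n_{i+2}},\dots$ already generated while expanding $\frac{m_i}{n_i}$, and the two runs terminate at the same step, namely $k$. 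Hence the c-series of $\frac{m_{i+1}}{n_{i+1}}$ is exactly $[c_{i+1};\dots,c_k]$, and the lemma follows.

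I do not expect a real obstacle here: the substantive step is only the \emph{memorylessness} of the recursion in Definition 5.4, which is what makes truncation of a c-series correspond to restarting from the truncated partial expansion; the identity itself is a one-line inversion. If anything, care is needed to state the degenerate termination case correctly and to check that $\frac{m_i}{n_i}-c_i\neq 0$ so that the reciprocal appearing in Definition 5.4 is legitimate.
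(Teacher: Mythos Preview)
Your proof is correct and follows essentially the same approach as the paper, which simply says ``Derives directly from the definition.'' You have merely made explicit the inversion $\frac{m_{i+1}}{n_{i+1}}=(\frac{m_i}{n_i}-c_i)^{-1}\Rightarrow \frac{m_i}{n_i}=c_i+\frac{n_{i+1}}{m_{i+1}}$ and the memorylessness of the recursion that the paper leaves to the reader.
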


\begin{proof} 
Derives directly from the definition.
\end{proof}

\begin{lemma}
The c-series is finite and unique, $k=0$ iff $\frac{m}{n}$ is an integer, and, for all $k>0$ and  $0 \le i \le k-1$,  $c_{i}$ is even, whereas  $c_{k}$ can be either even or odd. For all $i$, $c_i$ can be negative,  $c_k\ne 1$, and for all $i>0,\ c_i\ne0$.
\end{lemma}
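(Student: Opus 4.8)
The plan is a single induction on the index $i$, carried out on the reduced representation $x_i:=m_i/n_i$ with $n_i\ge 1$ and $\gcd(m_i,n_i)=1$. Everything follows from one elementary observation: whenever the algorithm does not terminate at step $i$ (the ``otherwise'' branch), $x_i\notin\mathbb Z$, so its fractional part $\{x_i\}$ lies in $(0,1)$, and since $c_i$ equals $\lfloor x_i\rfloor$ or $\lfloor x_i\rfloor+1$ we get $x_i-c_i\in(0,1)$ or $x_i-c_i\in(-1,0)$; in either case
\[
0<|x_i-c_i|<1 .
\]

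First I would prove finiteness. From $x_{i+1}=1/(x_i-c_i)=n_i/(m_i-c_in_i)$ and $\gcd(n_i,m_i-c_in_i)=\gcd(n_i,m_i)=1$, the fraction $n_i/(m_i-c_in_i)$ is already in lowest terms, so $n_{i+1}=|m_i-c_in_i|=n_i|x_i-c_i|$. By the displayed bound this is a positive integer strictly smaller than $n_i$, so $n_0>n_1>\cdots$ is a strictly decreasing sequence of positive integers and must reach $1$; since $x_i$ is an integer exactly when $n_i=1$, the algorithm stops, with $k\le n_0-1$. Uniqueness is immediate: $c_i$ is a deterministic function of $x_i$ and $x_{i+1}$ a deterministic function of $(x_i,c_i)$, so the whole finite sequence is determined by $x_0=m/n$; and the termination rule gives $k=0\iff m/n\in\mathbb Z$ at once.

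Next the parity and sign statements. For $0\le i\le k-1$ the non-terminal branch applies and $N$ is chosen precisely so that $c_i=\lfloor x_i\rfloor+N$ is even. For $i\ge1$ we have $x_i=1/(x_{i-1}-c_{i-1})$ with $0<|x_{i-1}-c_{i-1}|<1$, hence $|x_i|>1$; together with $0<|x_i-c_i|<1$ (valid for $i<k$) this forces $|c_i|\ge2$, so $c_i\ne0$, and for $i=k$ the integer $c_k=x_k$ has $|c_k|>1$, so $c_k\notin\{-1,0,1\}$; in particular $c_k\ne1$ (for $k=0$, $c_0=m/n$ is an integer and $c_0=1$ would mean $m/n=1$). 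That $c_k$ may be even or odd, and that some $c_i$ may be negative, is settled by the examples $\tfrac12\mapsto[0;2]$, $\tfrac13\mapsto[0;3]$, $-\tfrac32\mapsto[-2;2]$.

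The work is essentially bookkeeping; the one point requiring care is handling the sign of $m_i-c_in_i$ when setting $(m_{i+1},n_{i+1})$ and checking that $n_i/(m_i-c_in_i)$ is automatically reduced, which is what makes $n_{i+1}$ the true reduced denominator and legitimizes the strict descent. Once that is in place, finiteness reduces to $0<|x_i-c_i|<1$ at non-terminal steps, and all the sign and non-vanishing claims reduce to $|x_i|>1$ for $i\ge1$, so there is no substantial obstacle.
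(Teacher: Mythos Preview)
Your proof is correct and follows essentially the same approach as the paper: both arguments establish finiteness via the strict descent $n_{i+1}<n_i$ of the reduced denominators, and both derive the non-vanishing claims from $|x_{i+1}|=|m_{i+1}/n_{i+1}|>1$. Your write-up is in fact a bit more careful than the paper's---you make explicit the key inequality $0<|x_i-c_i|<1$, the $\gcd$ argument showing $n_i/(m_i-c_in_i)$ is already reduced (so that $n_{i+1}=|m_i-c_in_i|$ really is the reduced denominator), and the justification of $c_k\ne 1$, all of which the paper leaves implicit.
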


\begin{proof}
We use proof by induction: If $k=0$,  $\frac{m}{n}$ is an integer number and the c-series ,$[c_{k=0}=\frac{m}{n}]$, is finite and $c_{k}$ can be even or odd.  \newline
Otherwise, assume  - at the $i'th>0$ step, given a non integer rational number  $\frac{m_{i}}{n_{i}}$, and a sequence $[c_{i};c_{i+1},...,c_{k}]$, we compute, 
       $\frac{m_{i+1}}{n_{i+1}}=
       \begin{cases}
       -(\frac{n_{i}-(m_{i} \mod n_{i})}{n_{i}})^{-1}, & \text{if:} \left \lfloor{\frac{m_{i}}{n_{i}}}\right \rfloor \text{is odd } \\
        (\frac{m_{i} \mod n_{i}}{n_{i}})^{-1}, & \text{ if:} \left \lfloor{\frac{m_{i}}{n_{i}}}\right \rfloor \text{is even}
       \end{cases}$ \newline
       where $n_{i+1}$ always gets the positive sign. \newline
       Thus, $|m_{i+1}|=n_{i}$ \newline
       And, $n_{i+1}<n_{i}$ \newline
       As a result of that after a finite number of steps $t\leq n_0$, $n_{i+t}=1$, $\frac{m_{i+t}}{n_{i+t}}$ is an integer and the series is finite. \newline
       In addition:
             $|\frac{m_{i+1}}{n_{i+1}}|>1$, which implies $\ c_{i}\ne0$ \newline
      It follows that for any given $\frac{m_0}{n_0}$ the process defines all $m_{i}$, $n_{i}$ and $c_{i}$ uniquely.
      \end{proof}

\begin{lemma}
given $\frac{m_{i}}{n_{i}}$ associated with $[c_{i};...,c_{k}]$ and $\frac{m_{i+1}}{n_{i+1}}$ associated with $[c_{i+1};...,c_{k}]$, the eveness properties of $(m_{i+1},n_{i+1})$ are identical to those of $(n_i, m_i)$, namely, $m_{i+1} \mod 2 =  n_{i} \mod 2$ and  $n_{i+1} \mod 2 = m_{i} \mod 2$.
\end{lemma}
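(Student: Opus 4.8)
The plan is to read off the parities of $m_{i+1}$ and $n_{i+1}$ directly from the explicit form of the recursion that was extracted in the proof of the preceding lemma, and then to run a two-line case analysis on the parity of $\lfloor m_i/n_i\rfloor$. No induction is needed: the statement is a one-step identity between consecutive members of the c-series.

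First I would recall the closed form established there. Writing $m_i=\lfloor m_i/n_i\rfloor\, n_i+r_i$ with $r_i=m_i\bmod n_i\in\{0,\dots,n_i-1\}$ (and $r_i\ge 1$, since $r_i=0$ would mean the series already terminated at index $i$), the recursion $\frac{m_{i+1}}{n_{i+1}}=\left(\frac{m_i}{n_i}-c_i\right)^{-1}$ together with the sign convention $n_{i+1}>0$ gives $\frac{m_i}{n_i}-c_i=\frac{r_i}{n_i}$ when $\lfloor m_i/n_i\rfloor$ is even and $\frac{m_i}{n_i}-c_i=\frac{r_i-n_i}{n_i}$ when $\lfloor m_i/n_i\rfloor$ is odd. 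Inverting and imposing positivity of $n_{i+1}$, one gets $|m_{i+1}|=n_i$ in both cases, while $n_{i+1}=r_i$ in the even case and $n_{i+1}=n_i-r_i$ in the odd case (here $n_i-r_i>0$ because $r_i<n_i$).

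From $|m_{i+1}|=n_i$ the first assertion $m_{i+1}\equiv n_i\pmod 2$ is immediate. For the second assertion I reduce $m_i=\lfloor m_i/n_i\rfloor\, n_i+r_i$ modulo $2$: if $\lfloor m_i/n_i\rfloor$ is even then $m_i\equiv r_i\equiv n_{i+1}\pmod 2$; if $\lfloor m_i/n_i\rfloor$ is odd then $m_i\equiv n_i+r_i\pmod 2$, and since $n_{i+1}=n_i-r_i\equiv n_i+r_i\pmod 2$ (using $-1\equiv 1\pmod 2$), again $m_i\equiv n_{i+1}\pmod 2$. In both cases $n_{i+1}\equiv m_i\pmod 2$, which is the claim.

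I expect the only delicate point to be bookkeeping rather than mathematics: one must keep $r_i$ defined as the nonnegative remainder even when $m_i<0$ (permitted, since the $c_i$ may be negative), and one must track the positivity convention on $n_{i+1}$ in tandem with the parity-dependent correction $N$ in the definition of $c_i$. It is precisely this correction that produces the $r_i\mapsto n_i-r_i$ swap in the odd case and hence makes the two parities transpose; getting the branches matched to the recursion of the preceding lemma is the step most prone to sign slips, but it is otherwise routine.
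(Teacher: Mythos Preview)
Your proof is correct, but it takes a longer path than the paper's. The paper's argument is a two-liner: since the preceding lemma already shows that $c_i$ is even for every $i<k$, one simply writes
\[
\frac{m_{i+1}}{n_{i+1}}=\left(\frac{m_i}{n_i}-c_i\right)^{-1}=\frac{n_i}{m_i-c_i n_i},
\]
so (up to the sign convention on $n_{i+1}$) $|m_{i+1}|=n_i$ and $|n_{i+1}|=|m_i-c_i n_i|$; then $c_i$ even gives $n_{i+1}\equiv m_i-c_i n_i\equiv m_i\pmod 2$ and $m_{i+1}\equiv n_i\pmod 2$ immediately, with no case split.

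Your case analysis on the parity of $\lfloor m_i/n_i\rfloor$ is effectively re-proving, inside the argument, that $c_i$ is even in both branches (since $c_i=\lfloor m_i/n_i\rfloor+N$ with $N$ chosen precisely to make it so). That is why both branches collapse to the same parity identity. Nothing is wrong with what you wrote, and your bookkeeping remarks about negative $m_i$ and the sign convention are apt; but once you realize that the relevant fact is simply ``$c_i$ is even,'' the explicit formulas for $n_{i+1}$ in the two cases and the subsequent $r_i$ versus $n_i-r_i$ discussion become unnecessary.
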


\begin{proof}
by definition $$\frac{m_{i+1}}{n_{i+1}} =
           (\frac{m_{i}}{n_{i}}-c_{i})^{-1}
         $$
         where $c_{i}$ is even: \newline
       Thus,  $|n_{i+1}| \mod 2=|m_{i}-c_{i}n_{i}|\mod 2 = m_{i} \mod 2$ \newline
         $|m_{i+1}| \mod 2 =|n_{i}| \mod 2 = n_{i} \mod 2$ 
\end{proof}

\begin{lemma}
given $\frac{m_{0}}{n_{0}}$ associated with $[c_{0};...,c_{k}]$\newline
1) if $m_{0}$ is even and $n_{0}$ is odd then $c_{k}$ is even and k is even \newline
2) if $m_{0}$ is odd and $n_{0}$ is even then $c_{k}$ is even and k is odd \newline
3) if $m_{0}$ is odd and $n_{0}$ is odd then $c_{k}$ is odd \newline
\end{lemma}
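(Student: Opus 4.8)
The plan is to reduce the whole statement to tracking the parity pair $(\bar m_i,\bar n_i):=(m_i \bmod 2,\, n_i \bmod 2)$ along the c-series. The preceding lemma gives the exact recursion $(\bar m_{i+1},\bar n_{i+1})=(\bar n_i,\bar m_i)$: at each step the two parities simply swap. Moreover, starting from $\gcd(m_0,n_0)=1$ (as elsewhere in the paper), coprimality persists along the construction, since writing $\frac{m_{i+1}}{n_{i+1}}=\frac{n_i}{m_i-c_i n_i}$ one has $\gcd(n_i,m_i-c_i n_i)=\gcd(n_i,m_i)=1$, so no reduction occurs and $\gcd(m_i,n_i)=1$ for all $i$; hence the pair $(\bar m_i,\bar n_i)$ is never $(0,0)$ and lies in $\{(0,1),(1,0),(1,1)\}$. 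The swap map fixes $(1,1)$ and interchanges $(0,1)\leftrightarrow(1,0)$. Finally, the finiteness lemma already established that the series terminates with $n_k=1$; in particular $n_k$ is odd and $c_k=m_k/n_k=m_k$, so $c_k$ has the same parity as $m_k$.

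With these ingredients the three cases are immediate. In case $(\bar m_0,\bar n_0)=(0,1)$ the pair equals $(0,1)$ for even $i$ and $(1,0)$ for odd $i$; since the terminal index must satisfy $\bar n_k=1$, the index $k$ cannot be odd, so $k$ is even, whence $\bar m_k=0$ and $c_k=m_k$ is even — this is (1). In case $(\bar m_0,\bar n_0)=(1,0)$ the pair equals $(1,0)$ for even $i$ and $(0,1)$ for odd $i$; the constraint $\bar n_k=1$ now forces $k$ odd, and again $\bar m_k=0$, so $c_k$ is even — this is (2). In case $(\bar m_0,\bar n_0)=(1,1)$ the pair stays $(1,1)$ for all $i$, so $\bar m_k=1$ and $c_k=m_k$ is odd, with no constraint on the parity of $k$ — this is (3).

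I do not expect a genuine obstacle: all the substance is already contained in the parity-swap lemma and in the termination statement $n_k=1$ from the finiteness lemma, and what remains is to run a deterministic two-state automaton and read off $k$ from the requirement that $n_k$ be odd. The only points needing a word of care are (a) verifying that the three parity types in the statement are exhaustive, which is exactly the coprimality remark above, and (b) noting that at the last step $c_k=m_k/n_k=m_k$, so that an assertion about the parity of $c_k$ is literally an assertion about the parity of $m_k$; both are one line.
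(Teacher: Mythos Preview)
Your proof is correct and follows essentially the same approach as the paper: both arguments use the parity-swap lemma together with the termination condition on the c-series to pin down the parities of $m_k$ and $n_k$, and hence of $c_k=m_k/n_k$. The paper phrases cases (1) and (2) by contradiction (if $k$ had the wrong parity then $n_k$ would be even and $m_k$ odd, so $m_k/n_k$ could not be an integer), whereas you invoke $n_k=1$ directly from the finiteness lemma; your added coprimality remark making the three cases exhaustive is a worthwhile clarification the paper leaves implicit.
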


\begin{proof}
$c_{k}=\frac{m_{k}}{n_{k}}$ \newline
1)
For an even $m_{0}$ and an odd $n_{0}$ we assume by contradiction an odd k: by  lemma 5.6 $m_{k}$ is odd and $n_{k}$ is even. However, $c_{k}=\frac{m_{k}}{n_{k}}$ is an integer, which creates a contradiction. Thus, the only possible option is an even k, an even $m_{k}$ and an odd $n_k(=1)$ which implies an even $c_k$.

2) For an odd $m_{0}$ and an even $n_{0}$ we assume by contradiction an even k: by  lemma 5.6 
$m_{k}$ is odd and $n_{k}$ is even. However, $c_{k}=\frac{m_{k}}{n_{k}}$ is an integer, which creates a contradiction. Thus, the only possible option is an odd k, an even $m_{k}$ and an odd $n_k(=1)$ which implies an even $c_k$.\newline
3)By lemma 5.6 for an odd $m_{0}$  and an odd $n_{0}$, $m_{k}$ is odd and $n_{k}$ is odd. Thus $c_{k}$ is odd as well.
\end{proof}

Examples:  $\frac{1}{3} = [0,3]_c$
           $\frac{3}{5}= [0,2,-3]_c$
            $\frac{2}{5}= [0,2,2]_c$
            $\frac{29}{12}=[2,2,2,2]$ $\frac{9}{8}=[2,-2,2,-2,2,-2,2,-2]_c$ 
            
\begin{lemma}: The c-series is identical to the continued fraction series iff the continued fraction is of the form
$[a_{0};a_{1}...,a_{k}]$ where $a_{i}$ is even for $0<i<k-1$ and $a_{k}$ can be either even or odd.
\end{lemma}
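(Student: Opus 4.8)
The plan is to recognise the c-series of Definition 5.3 as a decorated version of the ordinary Gauss-map iteration that produces the continued fraction, and to pin down exactly when the decoration does nothing. Writing $x_0=\frac{m_0}{n_0}$ and $x_{i+1}=(x_i-c_i)^{-1}$ (with the positive-denominator convention), the canonical continued fraction $[a_0;a_1,\dots,a_k]$ of $x_0$ is obtained by the very same iteration with $c_i$ replaced by $a_i=\lfloor x_i\rfloor$ throughout; by Definition 5.3 one has $c_i=a_i+N_i$ at every non-terminal step, where $N_i\in\{0,1\}$ and $N_i=1$ exactly when $\lfloor x_i\rfloor$ is odd, while the terminal step $c_i=x_i\in\mathbb{Z}$ is common to both. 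Consequently the two constructions advance in lock-step — same terms, same successor $x_{i+1}$, hence the same terminating index $k$ — as long as $N_i=0$ for every $i<k$; and if they ever produce the same finite sequence then necessarily $N_i=0$ for all $i<k$. So the lemma reduces to: the continued-fraction data of $x_0$ has $N_i=0$ for all $i<k$ iff $a_i$ is even for all $0\le i\le k-1$. (I read the printed range $0<i<k-1$ as a typo for $0\le i\le k-1$; that this is the right condition is already visible from Lemma 5.5, which states that a genuine c-series has $c_0,\dots,c_{k-1}$ all even.)

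For the forward implication I would argue directly from the earlier lemmas: if the c-series of $x_0$ coincides with $[a_0;a_1,\dots,a_k]$ then, when $k\ge1$, Lemma 5.5 gives that $c_0,\dots,c_{k-1}$ are all even, so $a_i=c_i$ is even for $0\le i\le k-1$, while $a_k=c_k$ may have either parity; the case $k=0$, where $x_0\in\mathbb{Z}$, is vacuous.

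For the converse I would use induction on $i$. Assuming $[a_0;a_1,\dots,a_k]$ is the canonical continued fraction of $x_0$ with $a_0,\dots,a_{k-1}$ all even, the claim is that the c-series reaches $\frac{m_i}{n_i}=x_i=[a_i;a_{i+1},\dots,a_k]$ at step $i$ and sets $c_i=a_i$ for $i<k$. The base case $i=0$ is the hypothesis. For $i<k$, the canonical normalisation ($a_j\ge1$ for $j\ge1$, $a_k\ne1$) forces every proper tail $[a_{i+1};\dots,a_k]$ to exceed $1$, so $\lfloor x_i\rfloor=a_i$; since $a_i$ is even the parity correction $N_i$ vanishes, $c_i=\lfloor x_i\rfloor=a_i$, and $x_{i+1}=(x_i-c_i)^{-1}=[a_{i+1};\dots,a_k]$, continuing the induction. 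At $i=k$ we have $x_k=a_k\in\mathbb{Z}$, so the c-series sets $c_k=a_k$ and stops; thus the c-series is precisely $[a_0;a_1,\dots,a_k]$.

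The only delicate point I anticipate is the terminal bookkeeping: one must invoke the canonical convention $a_k\ne1$ (equivalently $a_k\ge2$ when $k\ge1$) so that all proper tails are $>1$, both to make the floor identity $\lfloor[a_i;a_{i+1},\dots,a_k]\rfloor=a_i$ valid and to guarantee that the c-series and the continued fraction terminate at the same index $k$ rather than one refining the other. Everything else is the mechanical matching of the two recursions above. If preferred, the equivalence can be repackaged without induction: the non-negative-entry continued fraction and the c-series are each the (essentially) unique expansion of $x_0$ of their respective normal forms, so they coincide iff the continued fraction already has the c-series normal form ``every entry but the last is even'', which is exactly the asserted condition.
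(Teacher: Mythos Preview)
Your argument is correct and is essentially a careful unpacking of what the paper dismisses in one line as ``derives directly from the definition'': both recursions are the same Gauss-type iteration, and the parity correction $N_i$ is the only place they can diverge, so equality happens iff every non-terminal floor is already even. Your reading of the printed range $0<i<k-1$ as a typo for $0\le i\le k-1$ is right and is forced by Lemma 5.5; the only extra care you supply beyond the paper is the terminal bookkeeping (the $a_k\ne1$ convention ensuring proper tails exceed $1$), which is indeed needed to make the floor identity hold at step $k-1$.
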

\begin{proof} derives directly from the definition.
\end{proof}

\begin{definition} {The right impact interval},
$J_{r}^{\omega_{2}}$,(For a level set with $\omega_{1}=1$, $\omega_{2}=\omega_{2}$) is the set of all $\theta_{2}$ at $\Sigma_{1}^{wall}$,  which are mapped by the flow to the step region, namely, satisfying $|\theta_{2}+2\theta_{1}^{wall}\frac{\omega_2}{\omega_1} \mod 2\pi| \geq \theta_{2}^{wall}$
\end{definition}

\begin{lemma}
The right impact interval of a step system with corner in the origin, with frequencies $\omega_2$ that differ by an even integer are identical: if $\omega_{1}=1$ and $\omega_{2}=\frac{m}{n}$ and  $\omega'_{1}=1$ and $\omega'_{2}=\frac{m}{n}+2L$ for an integer $L$ then $J_{r}^{\frac{m}{n}+2L}=J_{r}^{\frac{m}{n}}$.
\end{lemma}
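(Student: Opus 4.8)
The plan is to unwind the definition of the right impact interval in the corner-at-origin configuration and to observe that the frequency $\omega_2$ enters the defining inequality only through a phase that is $2\pi$-periodic in $2\omega_2$.

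First I would recall, as established earlier in the text, that for a step at the origin $\theta_1^{wall}=\theta_2^{wall}=\tfrac{\pi}{2}$ regardless of the partial energies, and that we have normalized $\omega_1=1$. Substituting these values into the condition $|\theta_2+2\theta_1^{wall}\tfrac{\omega_2}{\omega_1}\ \mathrm{mod}\ 2\pi|\ge\theta_2^{wall}$ defining $J_r^{\omega_2}$ turns it into
\[
J_r^{\omega_2}=\Bigl\{\theta_2\in[0,2\pi)\ :\ \bigl|\,\theta_2+\pi\omega_2\ \mathrm{mod}\ 2\pi\,\bigr|\ \ge\ \tfrac{\pi}{2}\Bigr\},
\]
where the representative of $(\,\cdot\ \mathrm{mod}\ 2\pi\,)$ is taken, say, in $(-\pi,\pi]$. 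The right-hand side $\theta_2^{wall}=\tfrac{\pi}{2}$ is itself independent of $\omega_2$, so $\omega_2$ occurs only inside the modular reduction on the left.

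Next I would plug in $\omega_2'=\tfrac{m}{n}+2L$ with $L\in\mathbb{Z}$, and $\omega_2=\tfrac{m}{n}$. Then $\pi\omega_2'=\pi\omega_2+2\pi L$, hence for every $\theta_2$ one has $\theta_2+\pi\omega_2'\equiv\theta_2+\pi\omega_2\pmod{2\pi}$, so the two reductions mod $2\pi$ coincide and the defining inequality for $J_r^{\omega_2'}$ is literally the same condition on $\theta_2$ as that for $J_r^{\omega_2}$. Therefore $J_r^{\omega_2'}=J_r^{\omega_2}$, which is the claim. If desired one can make this fully explicit by writing the interval as a union of two arcs centered appropriately in $\theta_2$ and reading off $2\pi$-periodicity in $\pi\omega_2$, but that is not necessary.

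There is no genuine obstacle here: the whole content of the lemma is the single geometric fact that at the origin the unfolded rotation angle in $\theta_2$ accumulated between successive returns to $\Sigma_1^{wall}$ equals $2\theta_1^{wall}\tfrac{\omega_2}{\omega_1}=\pi\omega_2$, so only $\omega_2\bmod 2$ survives reduction mod $2\pi$. The sole point to be careful about is to fix the "$\mathrm{mod}\ 2\pi$" convention consistently on both sides, so that the equality of the two sets is exact rather than up to boundary points; since the defining inequality is non-strict and the shift is by an exact multiple of $2\pi$, the boundary behaviour matches as well.
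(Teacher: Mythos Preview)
Your proof is correct and follows essentially the same route as the paper: both specialize the defining condition to the corner-at-origin case $2\theta_1^{wall}\omega_2/\omega_1=\pi\omega_2$ and then observe that adding $2L$ to $\omega_2$ shifts the argument by $2\pi L$, which vanishes modulo $2\pi$. Your version is slightly more explicit about the choice of representative for the modular reduction, but the argument is the same.
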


\begin{proof}

 $J_{r}^{\frac{m}{n}}=\{ \theta_2: |\theta_{2}+\pi\frac{m}{n} \mod 2\pi|>\theta_{2}^{wall}\}$ \newline
Now since,  $|\theta_{2}+\pi\frac{m}{n}+2\pi  L \mod 2\pi|=|\theta_{2}+\pi\frac{m}{n} \mod 2\pi|$, $J_{r}^{\frac{m}{n}}=J_{r}^{\frac{m}{n}+2L}$
\end{proof}

\begin{lemma}  A step system with corner in the origin with $\omega_{1}=1$ and $\omega_{2}=\frac{m}{n}$ and a step system with corner in the origin with $\omega'_{1}=1$ and $\omega'_{2}=\frac{m}{n}+2L$ for an integer L has the same number of families of p.o
\end{lemma}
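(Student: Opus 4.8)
The plan is to prove that the first--return (Poincaré) map $P_1:\Sigma_1^{wall}\to\Sigma_1^{wall}$ used in the proof of Proposition 5.2 is \emph{literally the same} map of the $\theta_2$--circle for $\omega_2=\tfrac{m}{n}$ and for $\omega_2'=\tfrac{m}{n}+2L$, and then to conclude, since the number of families of periodic orbits is a combinatorial invariant of $P_1$ --- each family being a cylinder of parallel periodic orbits, i.e.\ a cycle of $P_1$, on the cross--shaped surface, whose dimensions are fixed ($\theta_i^{wall}=\pi/2$, strip width $\pi$) because the step sits at the origin.

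First I would describe $P_1$ on a resonant level set, as in the proof of Proposition 5.2: parametrize $\Sigma_1^{wall}$ by $\theta_2$; an elementary return consists of (i) a crossing of the $q_1\ge 0$ half--strip, during which there is no interaction with the step and $\theta_2$ advances by $2\theta_1^{wall}\omega_2=\pi\omega_2 \pmod{2\pi}$, followed by a test of whether the resulting point lies in the right impact interval $J_r$ of Definition 5.9; (ii) if it does, a right--wall reflection returns the orbit to $\Sigma_1^{wall}$ with no further change in $\theta_2$; (iii) if it does not, a crossing of the $q_1<0$ half--strip, during which $\theta_2$ again advances by $\pi\omega_2$ but is repeatedly folded by the upper--wall reflections $\theta_2=\pi/2\mapsto\theta_2=-\pi/2$, so that its net effect is an advance by $\pi\omega_2$ modulo the folded circle of circumference $2\theta_2^{wall}=\pi$.

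Next I would check that replacing $\omega_2$ by $\omega_2+2L$ changes none of these ingredients. The extra displacement $2\pi L$ is a multiple of $2\pi$ along the free half--strip (here integrality of $L$ is used), and in particular a multiple of $\pi$ along the folded crossing, so both advances are unchanged; and the test set $J_r$ is unchanged by Lemma 5.10. Hence $P_1^{\omega_2}=P_1^{\omega_2+2L}$ as maps of the circle, so they have identical cycle decompositions; since each cycle is one cylinder of periodic orbits, and distinct cylinders on the fixed cross--shaped surface carry distinct turning--point/impact data $(\mu_i,b_i)$ --- which can be read off from the itinerary of the cycle with respect to the interval partition of Proposition 5.2 --- the two step systems have exactly the same number of families of periodic orbits.

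The hard part will be establishing that $P_1$ is \emph{literally} unchanged and not merely ``of the same type'': the delicate case is (iii), because for large $L$ an orbit undergoes many more upper--wall impacts than before, so the individual counts $\mu_2,b_2$ accumulated along it genuinely change. What must be verified is that the \emph{endpoint} of the $q_1<0$ crossing is nonetheless unchanged, and this is exactly the observation that the upper--wall reflections confine $\theta_2$ to a circle of circumference $\pi$ during that crossing, so that the extra displacement $2\pi L$ is invisible there. A secondary point to pin down is the identification of cycles of $P_1$ with families, i.e.\ that no two distinct cylinders of the linear flow share the same $(\mu_i,b_i)$; this reuses the $(\mu,b)$--bookkeeping already carried out in the proof of Proposition 5.2.
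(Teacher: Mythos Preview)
Your proposal is correct and follows essentially the same route as the paper: both arguments show that $P_1^{\omega_2+2L}=P_1^{\omega_2}$ as self-maps of the $\theta_2$-circle by invoking Lemma~5.10 for the invariance of $J_r$ and then checking that the extra $\theta_2$-displacement produced by the shift $\omega_2\mapsto\omega_2+2L$ vanishes modulo the relevant period on each branch. The paper phrases this last step by recording the raw extra advance ($2\pi L$ on the impact branch, $2\pi L+4\pi L=6\pi L$ on the non-impact branch) and reducing mod $2\pi$, whereas you interpret the horizontal-arm crossing via the folded $\theta_2$-circle of circumference $\pi$; these are equivalent bookkeeping choices.
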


\begin{proof} 
Given $\omega_{1}=1$  and the rational frequencies $\omega_{2}=\frac{m}{n}$ and $\omega'_{2}=\frac{m}{n}+2L$ the Poincare map $P_{1}^{\omega_2}:\Sigma_{1}^{wall} \rightarrow \Sigma_{1}^{wall}$ satisfies the relation $P_{1}^{\frac{m}{n}+2L}$=$P_{1}^{\frac{m}{n}}$.This derives from lemma 5.10,  $J_{r}^{\frac{n}{m}}=J_{r}^{\frac{n}{m}+2L}$, realizing that for $\theta_1 \in [-\frac{\pi}{2},\frac{\pi}{2}]$(the vertical center of the cross) the $\theta_2$ coordinate is increased by $2\pi L$ whereas for $\theta_1 \in [\frac{\pi}{2},\frac{3\pi}{2}]$(the horizontal part of the cross) the $\theta_2$ coordinate is increased by $4\pi L$, so altogether we see that: \newline
 $P_{1}^{\frac{m}{n}+2L}= \begin{cases}
       P_{1}^{\frac{m}{n}}+2\pi  L    , & \text{if: $\theta_{2} \in J_{r}^{\frac{n}{m}}$} \\
       P_{1}^{\frac{m}{n}}+6\pi  L   , & \text{else} 
       \end{cases}
       $
       \newline 
$\mod 2\pi$, the two maps are identical and have the same dynamics
\end{proof}

\begin{lemma} A step system with corner in the origin with $\omega_{1}=1$ and $\omega_{2}=\frac{m}{n}$ and a step system with corner in the origin with $\omega'_{1}=1$ and $\omega'_{2}=\frac{n}{m}$ has the same number of families of p.o
\end{lemma}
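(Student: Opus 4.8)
The plan is to produce an explicit isometric involution of the cross-shaped surface that interchanges the directions of motion associated with $\tfrac mn$ and $\tfrac nm$, and then to check that this involution respects the combinatorial data $(\mu_1,\mu_2,b_1,b_2)$ that defines a family. Because the corner is at the origin, $\theta_1^{wall}=\theta_2^{wall}=\tfrac\pi2$, so in the angle variables the forbidden step region is the torus square $Q=\{\theta_1,\theta_2\in(\tfrac\pi2,\tfrac{3\pi}2)\}\subset(\mathbb R/2\pi\mathbb Z)^2$, and the reduced dynamics of \emph{both} systems lives on the \emph{same} cross-shaped surface $X=(\mathbb R/2\pi\mathbb Z)^2\setminus Q$ with specular reflections from $\partial Q$; here the $b_1$-wall is $\{\theta_1\in\{\tfrac\pi2,\tfrac{3\pi}2\}\}\cap\{\theta_2\in(\tfrac\pi2,\tfrac{3\pi}2)\}$, the $b_2$-wall is the image of this under swapping the indices, and a $\mu_i$-turning point is a crossing of $\{\theta_i\in\{0,\pi\}\}$. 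Since both arms of $X$ have equal width $\pi$, the coordinate swap $\sigma(\theta_1,\theta_2)=(\theta_2,\theta_1)$ is a well-defined isometric involution of $X$ that sends $\partial Q$ to itself, interchanges the $b_1$-wall with the $b_2$-wall, and interchanges the $\mu_1$- and $\mu_2$-sections.

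Next I would verify that $\sigma$ conjugates the flow of the $\omega_2=\tfrac mn$ system to that of the $\omega_2=\tfrac nm$ system. Between impacts the first flow is the linear flow with velocity $(1,\tfrac mn)$; pushing it forward by $\sigma$ gives the linear flow with velocity $(\tfrac mn,1)$, and since $(\tfrac mn,1)=\tfrac mn\,(1,\tfrac nm)$, this is, after the harmless reparametrization $s=\tfrac mn\,t$, the linear flow with velocity $(1,\tfrac nm)$ — orbits as subsets of $X$ depend only on the direction line. Because $\sigma$ also sends walls to walls, it intertwines the two reflection laws, so it maps trajectories of the first system bijectively onto trajectories of the second, carrying periodic orbits to periodic orbits; and by the first paragraph an orbit with data $(\mu_1,\mu_2,b_1,b_2)$ is carried to one with data $(\mu_2,\mu_1,b_2,b_1)$.

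Finally I would conclude: since (Definition 5.1) a family of periodic orbits is an equivalence class determined solely by the tuple $(\mu_1,\mu_2,b_1,b_2)$, and the induced correspondence $(\mu_1,\mu_2,b_1,b_2)\mapsto(\mu_2,\mu_1,b_2,b_1)$ on such tuples is an involution, $\sigma$ descends to a bijection between the families of the $\tfrac mn$-system and those of the $\tfrac nm$-system; in particular the two systems have the same number of families. The step requiring the most care is the middle one — checking that $\sigma$ genuinely respects the impact and turning-point structure of $X$ — and this is exactly where the hypothesis ``corner at the origin'' is used, since for a shifted corner the two arms have unequal widths and no such symmetry exists; this lemma is the rigorous counterpart of the informal remark that Theorem 3.2 extends to even $m$ and odd $n$ by ``replacing the roles of $\theta_1$ and $\theta_2$''.
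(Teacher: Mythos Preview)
Your proposal is correct and rests on the same idea as the paper's proof: the coordinate swap $(\theta_1,\theta_2)\mapsto(\theta_2,\theta_1)$, which is a symmetry of the cross precisely because the corner is at the origin. The paper packages this more tersely as the identity $P_1^{m/n}=P_2^{n/m}$ between the Poincar\'e maps on the two wall sections and then identifies families with cycles of $P_i$, whereas you work directly with the flow and the tuple $(\mu_1,\mu_2,b_1,b_2)$; the content is the same, and your version spells out more explicitly why the combinatorial data is exchanged.
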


\begin{proof}  
Given $\omega_{1}=1$  and the rational frequencies $\omega_{2}=\frac{m}{n}$ and $\omega'_{2}=\frac{n}{m}$ the Poincare map $P_{i}:\Sigma_{i}^{wall} \rightarrow \Sigma_{i}^{wall}$  satisfies the relation $P_{1}^{\frac{m}{n}}$=$P_{2}^{\frac{n}{m}}$. This derives strictly from the definition of $P_{i}$. The number of families of p.o is the number of different cycles at $P_{i}$ which has to be identical at both cases because of the above relation.
\end{proof}

\begin{lemma} A step system with corner in the origin with $\omega_{1}=1$ and $\omega_{2}=\frac{m_{i}}{n_{i}}$ associated with a c-series $[c_{i};...,c_{k}]$ and a step system with corner in the origin with $\omega'_{1}=1$ and $\omega'_{2}=\frac{m_{i+1}}{n_{i+1}}$ associated with a c-series $[c_{i+1};...,c_{k}]$ have the same number of families of p.o.
\end{lemma}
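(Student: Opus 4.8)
The plan is to obtain Lemma 5.13 as a two-step composition of results already established. By Lemma 5.4 the two frequency ratios are related by $\frac{m_i}{n_i} = c_i + \frac{n_{i+1}}{m_{i+1}}$. Since the target c-series $[c_{i+1};\ldots,c_k]$ is nonempty we have $i \le k-1$, so Lemma 5.5 guarantees that $c_i$ is even; write $c_i = 2L$ with $L = c_i/2 \in \mathbb{Z}$. Then $\frac{m_i}{n_i}$ and $\frac{n_{i+1}}{m_{i+1}}$ differ by the even integer $2L$, and $\frac{n_{i+1}}{m_{i+1}}$ is the reciprocal of the target ratio $\frac{m_{i+1}}{n_{i+1}}$.

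First I would apply Lemma 5.11 to the pair $\omega_2 = \frac{m_i}{n_i}$ and $\frac{n_{i+1}}{m_{i+1}} = \frac{m_i}{n_i} - 2L$: the step system with corner at the origin, $\omega_1 = 1$, $\omega_2 = \frac{m_i}{n_i}$ has the same number of families of periodic orbits as the one with $\omega_2 = \frac{n_{i+1}}{m_{i+1}}$. Then I would apply Lemma 5.12 (inversion invariance) to $\frac{n_{i+1}}{m_{i+1}}$ and its reciprocal: the step system with $\omega_2 = \frac{n_{i+1}}{m_{i+1}}$ has the same number of families as the one with $\omega_2 = \frac{m_{i+1}}{n_{i+1}}$. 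Chaining the two equalities of counts gives exactly the statement. (Iterating from $i=0$ down to $i=k$ then reduces any resonant step-at-the-origin system to the base case $\omega_2 = c_k \in \mathbb{Z}$, i.e., after one more inversion to $\omega_2 = 1/c_k$, where Proposition 5.2 applies; combined with the parity of $c_k$ pinned down in Lemma 5.7, this is presumably how the lemma feeds into Theorems 3.1 and 3.2.)

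I do not expect a genuine obstacle: no new dynamical argument is needed, only care that each ingredient lemma is invoked within its stated generality. The delicate bookkeeping is with signs: Definition 5.3 forces $n_{i+1} > 0$ but permits $m_{i+1} < 0$ (and $c_i$ may be negative, or zero when $i=0$, in which case $L=0$ and Lemma 5.11 is vacuous), so both $\frac{n_{i+1}}{m_{i+1}}$ and $\frac{m_{i+1}}{n_{i+1}}$ can be negative. One must therefore confirm that Lemma 5.11 is valid for any integer $L$ of either sign and that Lemma 5.12, together with the phrase ``the step system with a given $\omega_2$'', is meaningful for negative $\omega_2$ — the sign merely recording the orientation of the directed flow on the cross-shaped surface. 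Since neither the even-shift argument nor the reciprocal argument uses positivity of the frequency ratio, this verification is routine, and the composition above then closes the proof.
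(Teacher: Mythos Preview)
Your proposal is correct and follows essentially the same route as the paper: invoke Lemma~5.4 to write $\frac{m_i}{n_i}=c_i+\frac{n_{i+1}}{m_{i+1}}$, then chain Lemma~5.11 (even-integer shift) with Lemma~5.12 (reciprocal). Your version is in fact more careful than the paper's one-line proof, since you explicitly justify via Lemma~5.5 that $c_i$ is even when $i\le k-1$ and you flag the sign bookkeeping; the paper's printed identity even contains a typo ($\frac{m_{i+1}}{n_{i+1}}$ in place of $\frac{n_{i+1}}{m_{i+1}}$) that your write-up avoids.
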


\begin{proof}  
By lemma 5.4: $\frac{m_{i}}{n_{i}}= c_{i}+\frac{m_{i+1}}{n_{i+1}}$. Thus, the lemma derives directly from lemmas 5.11 and 5.12.
\newline
\end{proof}

\begin{proposition}
A step system with corner in the origin with $\omega_{1}=1$ and $\omega_{2}=\frac{m_{0}}{n_{0}}$ associated with a c-series $[c_{0};...,c_{k}]$ has 2 families of periodic orbits if $c_{k}$ is even and one if $c_{k}$ is odd.
\end{proposition}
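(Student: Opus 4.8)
The plan is to prove the proposition by induction on the length $k$ of the c-series $[c_0;c_1,\dots,c_k]$ attached to $\omega_2=m_0/n_0$; this length is finite and well-defined by Lemma 5.5. The point is that all of the genuinely dynamical work is already done: Proposition 5.2 counts the families of periodic orbits for the elementary frequencies $\omega_2=1/n$, and Lemmas 5.10--5.13 show that the number of families of periodic orbits (equivalently, the number of distinct periodic cycles of the Poincar\'e map on $\Sigma_1^{wall}$) is unchanged under $\omega_2\mapsto\omega_2+2L$ for $L\in\mathbb{Z}$, under $\omega_2\mapsto 1/\omega_2$, and hence, by Lemma 5.13, under the single c-series reduction $m_i/n_i\mapsto m_{i+1}/n_{i+1}$. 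The decisive feature of this reduction is that it deletes only the leading entry $c_i$ and leaves the trailing entry $c_k$ fixed, so the parity of $c_k$ is a reduction invariant.

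First I would settle the base case $k=0$, where $\omega_2=c_0=c_k$ is a nonzero integer. By Lemma 5.11 the number of families depends only on $c_0\bmod 2$, so, shifting $c_0$ by a suitable even integer, it equals the number of families for $\omega_2=1$ when $c_k$ is odd and for $\omega_2=2$ when $c_k$ is even. Now $\omega_2=1=1/1$ falls directly under Proposition 5.2 with $n=1$ (odd) and gives one family, while $\omega_2=2=2/1$ is, by Lemma 5.12, equivalent to $\omega_2=1/2$, which falls under Proposition 5.2 with $n=2$ (even) and gives two families. Hence the proposition holds when $k=0$.

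For the inductive step I would assume the statement for every c-series of length $k-1$ and take $\omega_2=m_0/n_0$ with c-series $[c_0;c_1,\dots,c_k]$, $k\ge 1$. Applying Lemma 5.13 with $i=0$, the step system at $\omega_2=m_0/n_0$ has the same number of families of periodic orbits as the one at $\omega_2=m_1/n_1$, whose c-series $[c_1;c_2,\dots,c_k]$ has length $k-1$ and the same trailing entry $c_k$. By the induction hypothesis the latter has two families when $c_k$ is even and one when $c_k$ is odd, and therefore so does $\omega_2=m_0/n_0$, closing the induction.

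I do not expect a genuine obstacle here: once Lemmas 5.10--5.13 and Proposition 5.2 are in hand the argument is a short, clean induction. The only points that need care are checking that every reduction step really preserves the last c-series entry $c_k$ (immediate from Definition 5.3 and the statement of Lemma 5.13) and reducing the integer base case to the two already-solved frequencies $\omega_2\in\{1,2\}$ via Lemmas 5.11 and 5.12. The conceptual weight of the result lies in Proposition 5.2 and in the Poincar\'e-map symmetries of Lemmas 5.10--5.13, not in the present proposition.
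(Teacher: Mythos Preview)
Your proposal is correct and follows essentially the same approach as the paper: iterate Lemma~5.13 to strip the c-series down to its last entry and then invoke Proposition~5.2 for the resulting integer frequency. The only cosmetic difference is that the paper collapses the induction into a single application of Lemma~5.13 (repeated $k$ times) to land at $\omega_2=1/c_k$ and then cites Proposition~5.2 directly, whereas you phrase it as a formal induction on $k$ and in the base case first shift by an even integer to $\omega_2\in\{1,2\}$ before inverting; your variant has the minor advantage of sidestepping any worry about negative $c_k$.
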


\begin{proof}
By lemma 5.13 a step system with corner in the origin with $\omega_{1}=1$ and $\omega_{2}=\frac{m_{0}}{n_{0}}$ associated with a c-series $[c_{0};...,c_{k}]$ and a step system with corner at the origin with $\omega_{1}=1$ and $\omega_{2}=\frac{1}{c_{k}}$ have the same number of families of periodic orbits.
By proposition 5.14 if $c_{k}$ is even there are 2 families of periodic orbits and if $c_{k}$ is odd there is one.
\end{proof}

\begin{lemma} 
For a step system with corner in the origin with $\omega_{1}=1$ and $\omega_{2}=\frac{m_{i}}{n_{i}}$ associated with a c-series $[c_{i};...,c_{k}]$ with an odd $c_{k}$($m_{i}$ and $n_{i}$ are odd) $\mu_2=3|m_{i}|$ and $\mu_1=3n_{i}$
\end{lemma}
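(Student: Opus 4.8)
The plan is to induct on the length $k-i$ of the c-series $[c_i;\dots,c_k]$, removing one ``c-step'' at a time. By Lemma 5.4 a c-step is a reciprocal $\frac{m_{i+1}}{n_{i+1}}\mapsto\frac{n_{i+1}}{m_{i+1}}$ followed by adding the even integer $c_i$, so I would need \emph{orbit-level refinements} of Lemmas 5.12 and 5.11 that track the turning-point pair $(\mu_1,\mu_2)$ of the family (unique here, since $c_k$ is odd, by Proposition 5.14), not merely the number of families. First I would record that $c_k$ odd forces every $(m_j,n_j)$ with $i\le j\le k$ to be a pair of odd integers: at $j=k$ we have $n_k=1$ and $m_k=c_k$ odd, and by Lemma 5.6 the parities of $(m_j,n_j)$ match those of $(n_{j+1},m_{j+1})$, so oddness propagates back to level $i$; in particular Proposition 5.14 gives a single family at every level, and the induction is well posed.

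Base case $i=k$: here $\omega_2=\frac{m_k}{n_k}=c_k$ with $n_k=1$. Up to the time-reversal symmetry $\omega_2\mapsto-\omega_2$ (which reverses the orbit but leaves all turning-point and impact counts fixed) I may take $c_k>0$; then Lemma 5.12, in the form $P_1^{c_k}=P_2^{1/c_k}$, identifies this system with the one for $\omega_2=\frac1{c_k}$ after exchanging the roles of $\theta_1$ and $\theta_2$, and Proposition 5.2(1) gives $\mu_1=3c_k,\ \mu_2=3$ there. Exchanging the roles back yields $\mu_1=3=3n_k$ and $\mu_2=3c_k=3|m_k|$, as claimed.

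Inductive step: assume $\mu_1=3n_{i+1},\ \mu_2=3|m_{i+1}|$ for $\omega_2=\frac{m_{i+1}}{n_{i+1}}$, and normalise every frequency as $\frac{p}{q}$ with $q>0$, so the hypothesis reads ``$\mu_1=3q,\ \mu_2=3|p|$''. Do the reciprocal first: reflecting the cross across its diagonal interchanges the two coordinate directions, so the orbit-level form of Lemma 5.12 interchanges $\mu_1\leftrightarrow\mu_2$ and $b_1\leftrightarrow b_2$; hence for $\omega_2=\frac{n_{i+1}}{m_{i+1}}$ we get $\mu_1=3|m_{i+1}|=3n_i$ (using $|m_{i+1}|=n_i$, which comes straight from Definition 5.3, cf.\ the proof of Lemma 5.5), with signed numerator $\pm n_{i+1}$ and $\mu_2=3n_{i+1}$. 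Then add the even integer $c_i$: the refined Lemma 5.11 should say that $\mu_1,b_1,b_2$ are unchanged — immediate, since the proof of Lemma 5.11 shows the $P_1$-map is unchanged modulo $2\pi$, hence so are the $\theta_1$-excursions that determine $\mu_1$ — and that $\mu_2$ transforms so as to preserve the identity $\mu_2=3\,|\mathrm{numerator}|$, the numerator $\pm n_{i+1}$ being carried to $c_i q\pm n_{i+1}=m_i$ (Lemma 5.4). This gives $\mu_1=3n_i,\ \mu_2=3|m_i|$ and closes the induction.

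The hard part is the $\mu_2$-half of the refined Lemma 5.11. I would follow the proof of Lemma 5.11: shifting $\omega_2$ by $2L$ adds $L$ full $\theta_2$-turns on each passage of the orbit through the central vertical strip of the cross, and speeds the bounded $\theta_2$-oscillation between $\pm\theta_2^{wall}$ accordingly on each passage through the horizontal strip, and one must count the resulting change in the number of crossings of $\{\theta_2\equiv0\}\cup\{\theta_2\equiv\pi\}$ over one period — the number of such passages being the $3q$ forced by $\mu_1=3q$ together with the Poincare-section bookkeeping of Proposition 5.2 — to arrive at $\mu_2=3|p+2Lq|$. The genuinely delicate point is orientation: whether the added turns reinforce or oppose the ambient $\theta_2$-drift is governed by the parity of $\lfloor p/q\rfloor$ (equivalently by the correction $N$ in Definition 5.3, since $|\frac{m_i}{n_i}-c_i|<1$ there), and it is exactly this dichotomy that forces the invariant of the family to be the \emph{absolute value} $3|\mathrm{numerator}|$ rather than a naive signed count; carrying this sign correctly through the whole chain of c-steps is where the real work lies.
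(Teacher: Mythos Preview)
Your approach is the paper's: backward induction along the c-series, peeling off one c-step at a time by composing the reciprocal swap (Lemma 5.12) with the even-integer shift (Lemma 5.11), the substantive content being how $\mu_2$ transforms under that shift. The paper collapses the two sub-steps into the single relation $P_1^{\omega_2'}=P_2^{\omega_2}$ and writes down the one-line formula $\mu_2(\omega_2')=|\mu_1(\omega_2)+3c_{i-1}|$ without further argument; your explicit separation into reciprocal-then-shift and your flagging of the orientation/sign issue in the $\theta_2$-crossing count is the more careful route --- and with reason, since the paper's displayed intermediate expression is only literally correct when $|m_i|=1$, while for general $i$ the correct transformation carries the multiplicative factor $|m_i|=n_{i-1}$ on $c_{i-1}$ exactly as your ``$c_iq\pm n_{i+1}=m_i$'' numerator bookkeeping supplies.
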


\begin{proof} 
We prove the claim by induction: \newline
For $i=k$: \newline We recall that by proposition 5.2, $\mu_2=3|c_{k}|,\mu_1=3$.
Now assume the lemma holds for $i=k-l$: \newline
then, $\omega_{2}=\frac{m_{i}}{n_{i}}$, thus $\mu_2=3|m_{i}|$ and $\mu_1=3n_{i}$. \newline
Now, for $i-1=k-(l+1)$: \newline $\omega_{2}'=\frac{m_{i-1}}{n_{i-1}}$ associated with $[c_{i-1};...,c_{k}]$. \newline
 From lemma 5.4, $\omega_{2}'=c_{i-1}+\frac{n_{i}}{m_{i}}$ and $c_{i-1}$ is even. \newline Thus, $P^{\omega_{2}'}_{1}=P^{\omega_{2}}_{2}$(and vise-versa) and
$\mu_{2}(\omega_{2}')=|\mu_{1}(\omega_{2})+3c_{i-1}|=3|n_{i}+c_{i-1}|=3|m_{i-1}|$ and also $\mu_{1}(\omega_2')=\mu_{2}(\omega_2)=3|m_{i}|=3n_{i-1}$.
\end{proof}

\begin{lemma} 
For a step system with corner in the origin with $\omega_{1}=1$ and $\omega_{2}=\frac{m_{i}}{n_{i}}$ associated with a c-series $[c_{i};...,c_{k}]$ with an even $c_{k}$($m_{i} \mod2 \neq n_{i} \mod 2$): $\mu^I_2=2|m_{i}|$ and $\mu^I_1=2n_{i}$ and $\mu^{II}_2=|m_{i}|$ and $\mu^{II}_1=n_{i}$ 
\end{lemma}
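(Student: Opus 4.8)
The plan is to prove both identities by downward induction on $i$, closely following the proof of the preceding lemma (the odd-$c_k$ case) but carrying the two families simultaneously. It is convenient to treat all three statements — the single family when $c_k$ is odd, and families $I$ and $II$ here — as instances of one pattern: for $\omega_2 = m_i/n_i$ a family $X$ has $\mu_1^X = \sigma_X n_i$ and $\mu_2^X = \sigma_X |m_i|$, where $\sigma = 3$ in the odd case and $\sigma_I = 2$, $\sigma_{II} = 1$ here. Throughout, Lemma 5.6 gives that $c_j$ is even for all $0 \le j \le k-1$ and, since $c_k$ is even by hypothesis, that $m_j \not\equiv n_j \pmod 2$ at every level; moreover the existence of exactly two families at each level $\omega_2 = m_i/n_i$ is already established, and Lemma 5.13 (via Lemmas 5.11 and 5.12) matches these families across consecutive truncations of the $c$-series.

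\textbf{Base case $i = k$.} Here $\omega_2 = m_k/n_k = c_k$, with $n_k = 1$ and $|c_k|$ a nonzero even integer. By Lemma 5.12 (together with a time reversal when $c_k < 0$, which leaves turning-point counts unchanged) this system has the same families as the one with $\omega_2 = 1/|c_k|$, with the two directions interchanged. Applying Proposition 5.2 to $\omega_2 = 1/|c_k|$ (even case) and swapping directions yields $\mu_2^I = 2|c_k| = 2|m_k|$, $\mu_1^I = 2 = 2 n_k$, $\mu_2^{II} = |c_k| = |m_k|$, $\mu_1^{II} = 1 = n_k$, as required.

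\textbf{Inductive step $i \to i-1$.} Assume the formulas for $\omega_2 = m_i/n_i$ with $c$-series $[c_i;\dots,c_k]$. By Lemma 5.4, $m_{i-1}/n_{i-1} = c_{i-1} + n_i/m_i$ with $c_{i-1}$ even; writing $\varepsilon_i = \mathrm{sign}(m_i)$ — which, by the definition of the $c$-series, is $+1$ when $\lfloor m_{i-1}/n_{i-1}\rfloor$ is even and $-1$ when it is odd — and using $n_{i-1} = |m_i|$ from the proof of Lemma 5.6, we get $m_{i-1} = c_{i-1} n_{i-1} + \varepsilon_i n_i$. Realize the passage from $\omega_2 = m_i/n_i$ to $\omega_2 = m_{i-1}/n_{i-1}$ in two stages. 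First, the reciprocal step of Lemma 5.12 sends $\omega_2 = m_i/n_i$ to $\omega_2 = n_i/m_i$ and interchanges $\mu_1^X$ with $\mu_2^X$ for each family, so in this frame $\mu_1^X = \sigma_X|m_i| = \sigma_X n_{i-1}$ and $\mu_2^X = \sigma_X n_i$. Second, Lemma 5.11 adds the even integer $c_{i-1}$, sending $\omega_2 = n_i/m_i$ to $\omega_2 = c_{i-1} + n_i/m_i = m_{i-1}/n_{i-1}$: the number of $q_1$-turning points is unaffected, which gives $\mu_1^X(m_{i-1}/n_{i-1}) = \sigma_X n_{i-1}$ and hence the first two asserted identities, while tracking through the proof of Lemma 5.11 the extra winding of $\theta_2$ accumulated in the vertical strip and in the horizontal arms of the cross over one period shows that $\mu_2^X$ is replaced, in this frame, by $|c_{i-1}\mu_1^X + \varepsilon_i \mu_2^X| = |c_{i-1}\sigma_X n_{i-1} + \varepsilon_i \sigma_X n_i| = \sigma_X|c_{i-1} n_{i-1} + \varepsilon_i n_i| = \sigma_X|m_{i-1}|$. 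Specializing $\sigma_I = 2$ and $\sigma_{II} = 1$ gives the remaining two identities and closes the induction.

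\textbf{Main obstacle.} The delicate point is the last one: reading off from Lemma 5.11 the exact affine action of the even shift on the turning-point count of \emph{each} family, $\mu_2^X \mapsto |c_{i-1}\mu_1^X + \varepsilon_i \mu_2^X|$ in the reciprocal frame, with the correct sign $\varepsilon_i$. It is this sign — dictated by the parity convention built into the $c$-series — that makes the outcome equal $\sigma_X|m_{i-1}|$ rather than a fixed linear combination of $n_{i-1}$ and $n_i$. Concretely one must check, separately for family $I$ and family $II$, how the orbit's itinerary through the cross scales the number of crossings of $\theta_2 = 0$ and $\theta_2 = \pi$ when $\omega_2$ is increased by an even integer; this is the same bookkeeping that appears tersely at the end of the proof of the odd-$c_k$ lemma, now carried out for both families and with explicit attention to signs and to the negative values that $c_k$ may assume.
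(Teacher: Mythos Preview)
Your proof follows the same downward induction along the $c$-series as the paper: base case $i=k$ via Proposition~5.2, inductive step by composing the reciprocal of Lemma~5.12 with the even shift of Lemma~5.11. Your treatment is in fact more careful than the paper's own argument: by tracking the sign $\varepsilon_i=\mathrm{sign}(m_i)$ and writing the update in the reciprocal frame as $\mu_2^X\mapsto|c_{i-1}\mu_1^X+\varepsilon_i\mu_2^X|$, you obtain an expression that genuinely collapses to $\sigma_X|m_{i-1}|$, whereas the paper's displayed formula $\mu_2^I(\omega_2')=|\mu_1^I(\omega_2)+2c_{i-1}|$ (and its analogue in the odd lemma) as literally written omits the $\mu_1^X$ factor multiplying $c_{i-1}$ and fails on examples such as $m/n=2/5$.
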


\begin{proof} 
We prove the claim by induction: \newline
For $k=0$: \newline
We recall that by propositon 5.2, $\mu^I_2=2|c_{k}|$ and  $\mu^{II}_2=|c_{k}|$. \newline
Now assume the lemma holds for $i=k-l$: \newline
then, $\omega_{2}=\frac{m_{i}}{n_{i}}$, Thus $\mu^I_2=2|m_{i}|$ and $\mu^I_1=2n_{i}$ and $\mu^{II}_2=|m_{i}|$ and $\mu^{II}_1=n_{i}$. \newline

Now, for $i-1=k-(l+1)$:\newline
$\omega_{2}'=\frac{m_{i-1}}{n_{i-1}}$ associated with $[c_{i-1};...,c_{k}]$. \newline
By lemma 5.4 $\omega_{2}'=c_{i-1}+\frac{n_{i}}{m_{i}}$ and $c_{i-1}$ is even. \newline Thus by lemmas 5.11 and 5.12, $P^{\omega_{2}'}_{1}=P^{\omega_{2}}_{2}$(and vise-versa) and
$\mu^I_{2}(\omega_{2}')=|\mu^I_{1}(\omega_{2})+2c_{i-1}|=2|n_{i}+c_{i-1}|=2|m_{i-1}|$ and ,similarly, $\mu^{II}_{2}=|m_{i-1}|$.
For the same reason, $\mu^{I}_{1}=2n_{i-1}$ And $\mu^{II}_{1}=n_{i-1}$
\end{proof}

\begin{proposition}
A step system with corner in the origin with $\omega_{1}=1$ and $\omega_{2}=\frac{m}{n}$ 
with an odd m and odd n: $\mu_{1}=3n,\mu_{2}=3m, b_{1}=n, b_{2}=m$ \newline
\end{proposition}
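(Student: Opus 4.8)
The plan is as follows. Since $m$ and $n$ are both odd, Lemma 5.7(3) gives that the last entry $c_k$ of the even continued fraction ($c$-series) of $m/n$ is odd; hence by Proposition 5.14 there is a single family of periodic orbits, and by Lemma 5.15 (taken at $i=0$) its turning-point counts are $\mu_1=3n$ and $\mu_2=3m$. So the $\mu$-part is immediate, and all that remains is to show $b_1=n$ and $b_2=m$. Throughout one may take $m,n>0$, and I note at the outset that every impact and turning-point count depends only on $|\omega_2|$, so the signs of the $c_i,m_i,n_i$ produced by the $c$-series are irrelevant.

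To obtain the impact counts I would run the same downward induction on the $c$-series used in Lemma 5.15, but now tracking the pair $(b_1,b_2)$ and claiming that for every index $i$ (all $m_j,n_j$ being odd by Lemma 5.6, since $c_k$ is odd) one has $b_1(m_i/n_i)=n_i$ and $b_2(m_i/n_i)=|m_i|$. For the base case $i=k$ the frequency is the odd integer $c_k$; by Lemma 5.12 that system coincides with the $\omega_2=1/c_k$ system with the two coordinate directions interchanged, so Proposition 5.2(1) gives $b_1=1=n_k$ and $b_2=|c_k|=|m_k|$. For the inductive step one writes $m_{i-1}/n_{i-1}=c_{i-1}+n_i/m_i$ (Lemma 5.4) with $c_{i-1}$ even (Lemma 5.5), so that $n_{i-1}=|m_i|$; the even shift by $c_{i-1}$ leaves the right-wall first-return map $P_1$ unchanged modulo $2\pi$ and leaves the right-impact interval $J_r$ unchanged (proof of Lemma 5.11 together with Lemma 5.10), hence leaves $b_1$ unchanged, and combining this with the reciprocal identification $P_1^{\,n_i/m_i}=P_2^{\,m_i/n_i}$ of Lemma 5.12 (which identifies the right-impact interval of the one system with the up-impact interval of the other) yields $b_1(m_{i-1}/n_{i-1})=b_1(n_i/m_i)=b_2(m_i/n_i)=|m_i|=n_{i-1}$.

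The one genuinely delicate point is the companion claim $b_2(m_{i-1}/n_{i-1})=|m_{i-1}|$: reciprocating via Lemma 5.12 turns a $b_2$ into a $b_1$ of a number whose $c$-series is one entry longer, so this half of the recursion does not obviously close by induction on $c$-series length. I would close it by carrying along the stronger invariant $\mu_j=3\,b_j$ for $j=1,2$. It holds in the base case ($\omega_2=c_k$: $\mu_1=3$, $b_1=1$, $\mu_2=3|c_k|$, $b_2=|c_k|$, by Proposition 5.2), it is trivially preserved by the direction-interchange move of Lemma 5.12, and it is preserved by the even shift because under that shift $\mu_1$ and $b_1$ are both unchanged while $\mu_2$ and $b_2$ are both multiplied by the same factor $|\omega_2^{\mathrm{new}}|/|\omega_2^{\mathrm{old}}|$ — for $\mu_2$ this is exactly what the step of Lemma 5.15 records, and for $b_2$ it follows from the fact that the period equals $\mu_1\pi$ and the time the orbit spends per period with $\theta_1$ in the horizontal arm of the cross are both determined solely by the (shift-invariant) $\theta_1$-dynamics, while the number of upper-wall impacts equals that time multiplied by $\omega_2/\pi$. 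Granting the invariant, $b_j=\mu_j/3$, so $b_1=3n/3=n$ and $b_2=3m/3=m$. Equivalently, once the first half of the induction (``$b_1$ of an odd/odd number equals its denominator'') is established, $b_2(m/n)=b_1(n/m)=m$ follows from Lemma 5.12 applied to the odd/odd number $n/m$; in either route I expect the scaling assertion just described — that the upper-wall impact count transforms under an even shift of $\omega_2$ exactly as the $q_2$ turning-point count does — to be the main obstacle to full rigor.
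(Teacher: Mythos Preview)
Your argument for the turning-point counts $\mu_1=3n,\ \mu_2=3m$ is the same as the paper's (both invoke Lemma~5.15). For the impact counts, however, you take a genuinely different route. You run the $c$-series induction of Lemma~5.15 a second time, this time carrying the invariant $\mu_j=3b_j$, and you correctly identify that the only nontrivial step is the behaviour of $b_2$ under an even shift $\omega_2\mapsto\omega_2+2L$. That step can in fact be made rigorous exactly as in the proof of Proposition~5.19: since $P_1$ and $J_r$ are unchanged by the shift, each of the $(\mu_1-b_1)/2$ non-impacting returns sends the trajectory once through the horizontal arm, and during that passage the extra $\theta_2$-advance of $2\pi L$ contributes exactly $2L$ additional upper-wall impacts, giving $b_2\mapsto b_2+2L\cdot(\mu_1-b_1)/2$. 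Combined with $\mu_2\mapsto\mu_2+2L\mu_1$ (from $\mu_2=\mu_1\omega_2$), the ratio $\mu_2/b_2$ is preserved precisely when $\mu_1=3b_1$, which is part of your invariant; so the induction closes. Your route~(b) via $b_2(m/n)=b_1(n/m)$ is circular as you suspected, since $n/m$ has a longer $c$-series, but route~(a) is sound.

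The paper bypasses the induction entirely with a direct count. Knowing there is a single family, the $P_1$-orbit is a single cycle on the $2n$ equispaced points $\{\theta_2^{(0)}+k\pi m/n\bmod 2\pi\}$; since $|J_r|=\pi$ is half the circle, exactly $n$ of these points lie in $J_r$, so $b_1=n$ immediately (and, consistently, $\mu_1=n\cdot 1+n\cdot 2=3n$). For $b_2$ the paper then counts upper-wall impacts during the $n$ non-impacting returns by a floor/mod computation: $m\bmod n$ of them contribute $\lfloor m/n\rfloor+1$ impacts and the remaining $n-(m\bmod n)$ contribute $\lfloor m/n\rfloor$, totalling $n\lfloor m/n\rfloor+(m\bmod n)=m$. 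This is shorter and avoids the $c$-series machinery; your inductive approach, on the other hand, is more uniform with the treatment of $\mu$ and dovetails directly with the even-$c_k$ recurrence of Proposition~5.19.
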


\begin{proof}
By lemma 5.15 $\mu_{1}=3n$, Thus at a period, $P_{1}$ goes through 2n iterations cycle. n of the iterations are for $\theta_{2} \in J_{r}$, thus, $b_1=n$. \newline
n of the iterations are for $\theta_{2} \notin J_{r}$. 
We can write $\omega_{2}=\frac{m}{n}=\left \lfloor{\frac{m}{n}}\right \rfloor+ m \mod n$. Thus we can divide the n iterations to $m \mod n$ iterations with  $\left \lfloor{\frac{m}{n}}\right \rfloor+1$ impacts with the upper wall and $ n-m \mod n$ iterations with $\left \lfloor{\frac{m}{n}}\right \rfloor$ impacts with the upper wall.

Thus, $b_2=n\left \lfloor{\frac{m}{n}}\right \rfloor+ m \mod n=m$.
\end{proof}

\begin{proposition}
Quantized energy levels for Harmonic oscillator with corner in the origin with $\omega_{1}=1$ and $\omega_{2}=\frac{m}{n}$ for odd m and n are: \newline
$E_{(k)}=\frac{2k}{3n}+\frac{1+\frac{m}{n}}{2}+\frac{n+m}{3n}=\frac{2k}{3n}+\frac{5(m+n)}{6n}$ \newline
\end{proposition}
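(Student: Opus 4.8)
The plan is to assemble three ingredients that are already in hand: the EBK quantization rule \eqref{eqn:ebk}, the action decomposition \eqref{eqn:action}, and the combinatorial data for the unique periodic family in the odd--odd case supplied by Proposition~5.18, namely $\mu_1=3n$, $\mu_2=3m$, $b_1=n$, $b_2=m$. Everything then reduces to algebra; the only conceptual point is to notice that the action collapses to a function of the total energy alone, so that EBK constrains $E=E_1+E_2$ rather than the individual partial energies.

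First I would specialize \eqref{eqn:action} to a step at the origin with harmonic potentials. There $\theta_i^{wall}=\pi/2$, hence $I_i^{wall}=\tfrac12 I_i$, and \eqref{eqn:action} becomes $I(E_1,E_2;\mu,b)=\sum_{i=1}^2\bigl(\tfrac{b_i}{2}+\tfrac{\mu_i-b_i}{2}\bigr)I_i=\tfrac12(\mu_1 I_1+\mu_2 I_2)$, independent of $b$. Using $\omega_1=1$, $\omega_2=m/n$ and $I_i=E_i/\omega_i$ gives $I_1=E_1$, $I_2=nE_2/m$, so with $\mu_1=3n$, $\mu_2=3m$ one gets $I=\tfrac12\bigl(3nE_1+3m\cdot\tfrac{n}{m}E_2\bigr)=\tfrac{3n}{2}(E_1+E_2)=\tfrac{3n}{2}E$. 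The key observation is that the coefficients of $E_1$ and $E_2$ coincide (both equal $3n$), which is precisely why the quantization condition will depend only on $E$; this is the analytic shadow of the fact that the whole resonant torus carries a single periodic family with one well-defined action.

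Next I would substitute into \eqref{eqn:ebk} with $\hbar=1$ and with $k$ denoting the integer quantum number: $\tfrac{3n}{2}E=k+\tfrac{\mu_1+\mu_2}{4}+\tfrac{b_1+b_2}{2}=k+\tfrac{3n+3m}{4}+\tfrac{n+m}{2}$. Combining the two constant terms via $\tfrac{3(n+m)}{4}+\tfrac{n+m}{2}=\tfrac{5(n+m)}{4}$ and solving for $E$ yields $E_{(k)}=\tfrac{2k}{3n}+\tfrac{5(n+m)}{6n}$, which is the second displayed form. The equality with the first form then follows from $\tfrac{1+m/n}{2}+\tfrac{n+m}{3n}=\tfrac{n+m}{2n}+\tfrac{n+m}{3n}=(n+m)\bigl(\tfrac1{2n}+\tfrac1{3n}\bigr)=\tfrac{5(n+m)}{6n}$.

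I do not expect a genuine obstacle here: the substantive content — that for odd $m$ and odd $n$ there is exactly one family of periodic orbits and that its turning-point and impact counts are $(\mu_1,\mu_2,b_1,b_2)=(3n,3m,n,m)$ — is exactly Proposition~5.18 (resting on Lemma~5.7(3) and Proposition~5.14), which we assume. The one step warranting a moment of care is to confirm, before invoking EBK, that every level set is indeed periodic of this single type, so that the family genuinely exhausts the dynamics; this holds because $\omega_2/\omega_1=m/n$ is rational and, the corner being at the origin, the cross-shaped surface has energy-independent dimensions with corner angles fixed at $\pi/2$.
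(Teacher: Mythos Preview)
Your argument is correct and follows essentially the same route as the paper's own proof: substitute the combinatorial data $(\mu_1,\mu_2,b_1,b_2)=(3n,3m,n,m)$ and the harmonic actions $I_i=E_i/\omega_i$ into the EBK condition $\tfrac{\mu_1 I_1+\mu_2 I_2}{2}=k+\tfrac{\mu_1+\mu_2}{4}+\tfrac{b_1+b_2}{2}$, observe that the left side collapses to $\tfrac{3n}{2}E$, and solve. (Minor note: the combinatorial input you cite is Proposition~5.17, not~5.18.)
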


\begin{proof}
Quantization conditions derived from the equation: \newline
$\frac{\mu_1I_{1}+\mu_2I_{2}}{2}=k+\frac{\mu_1+\mu_2}{4}+\frac{b_1+b_2}{2}$ \newline
For harmonic oscillator: $I_{i}=\frac{E_{i}}{\omega_{i}}$ \newline
1) for an odd m and odd n: $\frac{3nE_{1}+3m\frac{E_{2}}{\frac{m}{n}}}{2}=k+\frac{3n+3m}{4}+\frac{n+m}{2}$ \newline
$E_{(k)}=\frac{k}{1.5n}+\frac{n+\frac{m}{n}}{2n}+\frac{n+ m}{3n}=\frac{2k}{3n}+\frac{5(n+ m)}{6n}$
\end{proof}

Let $b_{s,i}$ and $\mu_{s,i}$, $s\in \{1,2\}$, denote the number of impacts/turning points for a step system with corner in the origin with $\omega_{1}=1$ and $\omega_{2}=\frac{m_{i}}{n_{i}}$ associated with a c-series $[c_{i};...,c_{k}]$.
    
\begin{proposition}
A step system with a corner in the origin with $\omega_{1}=1$ and $\omega_{2}=\frac{m}{n}$ associated with a c-series $[c_{0};...,c_{k}]$,with an even m and odd n has 2 families of periodic orbits such that: $\mu^{I}_{1}=2n,\mu^{I}_{2}=2m,\mu^{II}_{1}=n,\mu^{II}_{2}=m $.\newline
The number of impacts can be calculated backwards inductively as: 
$b^{I}_{1,i}=b^{I}_{2,i+1}$, $b^{I}_{2,i}=b^{I}_{1,i+1}+c_{i}(\frac{\mu^{I}_{1,i}-b^{I}_{1,i}}{2})$, and $b^{II}_{1,i}=b^{II}_{2,i+1}$, $b^{II}_{2,i}=b^{II}_{1,i+1}+c_{i}(\frac{\mu^{II}_{1,i}-b^{II}_{1,i}}{2})$

\end{proposition}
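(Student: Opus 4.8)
The first assertion of the proposition is already contained in the preceding results. Since $m$ is even and $n$ is odd, case (1) of Lemma 5.7 shows that the last entry $c_k$ of the c-series is even; hence Proposition 5.14 gives exactly two families of periodic orbits, and Lemma 5.16, applied at $i=0$, yields $\mu^{I}_{1}=2n,\ \mu^{I}_{2}=2m,\ \mu^{II}_{1}=n,\ \mu^{II}_{2}=m$. More generally Lemma 5.16 gives, at every level $i$, $\mu^{I}_{1,i}=2n_i$ and $\mu^{II}_{1,i}=n_i$, and these are exactly the quantities that appear in the recursion. So the content left to prove is the backward recursion for the impact counts, which I would establish by downward induction on $i$, from $i=k$ down to $i=0$, following the template of the proofs of Lemmas 5.15--5.16 but tracking the $b$'s instead of the $\mu$'s.

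For the base case $i=k$ one has $\omega_2=\frac{m_k}{n_k}=c_k$, an even integer with $n_k=1$. By Lemma 5.12 this system has the same periodic-orbit structure as the system with $\omega_2=\frac{1}{c_k}$, with the roles of $\theta_1$ and $\theta_2$ exchanged, so Proposition 5.2(2) --- with the even integer $c_k$ playing the role of $n$ --- gives, after this exchange, $b^{I}_{1,k}=0,\ b^{I}_{2,k}=c_k$ and $b^{II}_{1,k}=1,\ b^{II}_{2,k}=0$. For the inductive step, assume the formulas hold at level $i+1$. By Lemma 5.4 one has $\frac{m_i}{n_i}=c_i+\frac{n_{i+1}}{m_{i+1}}$, so the level-$i$ system is obtained from the level-$(i+1)$ system in two steps: first replace $\frac{m_{i+1}}{n_{i+1}}$ by its reciprocal $\frac{n_{i+1}}{m_{i+1}}$, which by Lemma 5.12 exchanges the two coordinate directions and with them the two impact counts; then add the even integer $c_i$, which by Lemma 5.11 leaves unchanged the combinatorial cycle of the Poincar\'e map $P_1\colon\Sigma_{1}^{wall}\to\Sigma_{1}^{wall}$ on the given level set, altering only the number of turns $\theta_2$ makes between two consecutive returns.

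It remains to read off the effect of these two steps on $b_1$ and $b_2$, which is where the proof of Proposition 5.2 is mimicked. In the Poincar\'e cycle, impacts with the right wall occur precisely at the ``consecutive'' transitions $s_j\to s_{j+1}$, while impacts with the top wall occur at the ``skip'' transitions $s_j\to s_{j+2}$, each of which corresponds to one full excursion of the trajectory through the vertical arm of the cross. Reciprocation exchanges these two roles, so $b^{I}_{1,i}=b^{I}_{2,i+1}$ and likewise $b^{II}_{1,i}=b^{II}_{2,i+1}$; the subsequent even shift, being controlled by the $\theta_1$-motion, does not affect $b_1$. For $b_2$: the reciprocated level-$(i+1)$ system already contributes $b^{I}_{1,i+1}$ top-wall impacts, and adding $c_i$ to $\omega_2$ makes $\theta_2$ perform $c_i$ extra turns during each vertical-arm excursion, each extra turn producing one additional reflection from the top wall; since the number of such excursions per period is $\frac{1}{2}(\mu^{I}_{1,i}-b^{I}_{1,i})$ --- each consecutive transition contributes $1$ to $\mu_1$ and each skip transition contributes $2$ while contributing nothing to $b_1$ --- we obtain $b^{I}_{2,i}=b^{I}_{1,i+1}+c_i\cdot\frac{1}{2}(\mu^{I}_{1,i}-b^{I}_{1,i})$, and identically for family II. As the even shift changes neither $\mu^{I}_{1,i}$ nor $\mu^{II}_{1,i}$, already pinned down by Lemma 5.16, the recursion closes.

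The main obstacle is exactly this last piece of geometric bookkeeping. One must verify carefully that the reciprocation of Lemma 5.12 transports the combinatorial type of every periodic orbit precisely as the exchange $1\leftrightarrow2$ predicts --- including which transitions of the new Poincar\'e cycle are ``consecutive'' and which are ``skip'' --- and that an even shift $\omega_2\mapsto\omega_2+c_i$ inserts exactly $c_i$ new top-wall reflections per vertical-arm excursion and nothing else; in particular, one must check that the sign flips $p_2\mapsto-p_2$ at those reflections neither create nor cancel further impacts, and that when $c_i<0$ the counts are to be read with an absolute value, exactly as the $|\cdot|$ in the $\mu$-recursion inside the proof of Lemma 5.16. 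All of this is the same style of case analysis already carried out for $\omega_2=\frac1n$ in the proof of Proposition 5.2, here propagated along the c-series via Lemmas 5.11, 5.12 and 5.13; the only genuinely new ingredient is the identification of the number of vertical-arm excursions with $\frac{1}{2}(\mu_{1,i}-b_{1,i})$.
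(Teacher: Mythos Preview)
Your argument is essentially the paper's own: decompose $\frac{m_i}{n_i}=c_i+\frac{n_{i+1}}{m_{i+1}}$ via Lemma~5.4, use Lemma~5.12 to swap the impact counts under reciprocation, use Lemma~5.11 to keep $P_1$ (hence $b_1$) fixed under the even shift, and then count the extra top-wall impacts as $c_i$ per excursion through the horizontal part of the cross, identifying the number of such excursions with $\tfrac12(\mu_{1,i}-b_{1,i})$. You are in fact more careful than the paper --- you spell out the base case at $i=k$ and flag the sign issue for negative $c_i$ --- but the route is the same.
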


\begin{proof}
Recall that by lemma 5.4 $\omega_2=\frac{m_{i}}{n_{i}} =
           \frac{n_{i+1}}{m_{i+1}}+c_{i}$. 
          For $\omega_{2}'=\frac{n_{i+1}}{m_{i+1}}=\frac{1}{\omega_{2,i+1}}$ we know that:
   $b'^{I}_{1}=b^{I}_{2,i}$, $b'^{I}_{2}=b^{I}_{1,i}$, and $b'^{II}_{1}=b^{II}_{2,i}$, $b'^{II}_{2}=b^{II}_{1,i}$. 
   Now as $c_{i}$ is even the step systems with $\omega_{2}'=\frac{n_{i+1}}{m_{i+1}}$ and
   $\omega_{2}=\frac{n_{i+1}}{m_{i+1}}+c_{i}$  have the same Poincare map $P_{1}$. By definition the number of rounds the periodic trajectory makes for both $\omega_2$ and $\omega'_2$, in the horizontal part of the cross($\theta_1 \in [\frac{\pi}{2},\frac{3\pi}{2}]$)  is $\frac{\mu_{1,i}-b_{1,i}}{2}$ and in each such an iteration the $\omega_2$ system has  $c_{i}$ additional impacts with the upper wall.
   The number of impacts with the right wall($b_1$) doesn't change as $P_{1}$ does not change.
   
\end{proof}

\newpage
\bibliographystyle{abbrv}
\bibliography{main}
\newpage
\section{Acknowledgments}
I would like to express my deepest gratitude to my advisor, prof. Vered Rom-Kedar
for all the help provided, the time invested in my training and the knowledge I gained during the process. I have had a great time
\end{document}